\newtheorem{proposition}{Proposition}
\newtheorem{lemma}[proposition]{Lemma}
\newtheorem{theorem}[proposition]{Theorem}
\def\squareforqed{\hbox{\rlap{$\sqcap$}$\sqcup$}}
\def\qed{\ifmmode\squareforqed\else{\unskip\nobreak\hfil
\penalty50\hskip1em\null\nobreak\hfil\squareforqed
\parfillskip=0pt\finalhyphendemerits=0\endgraf}\fi}
\def\endenv{\ifmmode\;\else{\unskip\nobreak\hfil
\penalty50\hskip1em\null\nobreak\hfil\;
\parfillskip=0pt\finalhyphendemerits=0\endgraf}\fi}
\newenvironment{proof}{\noindent \textbf{{Proof~} }}{\hfill $\blacksquare$}
\newcounter{remark}
\newenvironment{remark}[1][]{\refstepcounter{remark}\par\medskip\noindent%
\textbf{Remark~\theremark #1} }{\medskip}
\newcounter{example}
\mathchardef\ordinarycolon\mathcode`\:
\def\vcentcolon{\mathrel{\mathop\ordinarycolon}}
\definecolor{darkblue}{RGB}{0,76,156}
\definecolor{darkkblue}{RGB}{0,0,153}
\definecolor{blue2}{RGB}{102,178,255}
\definecolor{darkred}{RGB}{195,0,0}
\newmdenv[skipabove=7pt,
skipbelow=7pt,
backgroundcolor=darkblue!15,
innerleftmargin=5pt,
innerrightmargin=5pt,
innertopmargin=5pt,
leftmargin=0cm,
rightmargin=0cm,
innerbottommargin=5pt,
linewidth=1pt]{tBox}
\newmdenv[skipabove=7pt,
skipbelow=7pt,
backgroundcolor=blue2!25,
innerleftmargin=5pt,
innerrightmargin=5pt,
innertopmargin=5pt,
leftmargin=0cm,
rightmargin=0cm,
innerbottommargin=5pt,
linewidth=1pt]{dBox}
\newmdenv[skipabove=7pt,
skipbelow=7pt,
backgroundcolor=darkred!15,
innerleftmargin=5pt,
innerrightmargin=5pt,
innertopmargin=5pt,
leftmargin=0cm,
rightmargin=0cm,
innerbottommargin=5pt,
linewidth=1pt]{rBox}
\newmdenv[skipabove=7pt,
skipbelow=7pt,
innerleftmargin=5pt,
innerrightmargin=5pt,
innertopmargin=5pt,
leftmargin=0cm,
rightmargin=0cm,
innerbottommargin=5pt,
linewidth=1pt]{cBox}
\newcommand{\nc}{\newcommand}
\nc{\bra}[1]{\langle#1|}
\nc{\ket}[1]{|#1\rangle}
\nc{\ketbra}[2]{|#1\rangle\!\langle#2|}
\nc{\braket}[2]{\langle#1|#2\rangle}
\nc{\proj}[1]{| #1\rangle\!\langle #1 |}
\nc{\avg}[1]{\langle#1\rangle}
\nc{\rank}{\operatorname{Rank}}
\nc{\smfrac}[2]{\mbox{$\frac{#1}{#2}$}}
\nc{\tr}{{\sf Tr}}
\nc{\ox}{\otimes}
\nc{\dg}{\dagger}
\nc{\dn}{\downarrow}
\nc{\cA}{{\cal A}}
\nc{\cB}{{\cal B}}
\nc{\cC}{{\cal C}}
\nc{\cD}{{\cal D}}
\nc{\cE}{{\cal E}}
\nc{\cF}{{\cal F}}
\nc{\cG}{{\cal G}}
\nc{\cH}{{\cal H}}
\nc{\cI}{{\cal I}}
\nc{\cJ}{{\cal J}}
\nc{\cK}{{\cal K}}
\nc{\cL}{{\cal L}}
\nc{\cM}{{\cal M}}
\nc{\cN}{{\cal N}}
\nc{\cO}{{\cal O}}
\nc{\cP}{{\cal P}}
\nc{\cQ}{{\cal Q}}
\nc{\cR}{{\cal R}}
\nc{\cS}{{\cal S}}
\nc{\cT}{{\cal T}}
\nc{\cU}{{\cal U}}
\nc{\cV}{{\cal V}}
\nc{\cX}{{\cal X}}
\nc{\cY}{{\cal Y}}
\nc{\cZ}{{\cal Z}}
\nc{\cW}{{\cal W}}
\nc{\csupp}{{\operatorname{csupp}}}
\nc{\qsupp}{{\operatorname{qsupp}}}
\nc{\var}{{\operatorname{var}}}
\nc{\rar}{\rightarrow}
\nc{\lrar}{\longrightarrow}
\nc{\polylog}{{\operatorname{polylog}}}
\nc{\wt}{{\operatorname{wt}}}
\nc{\supp}{{\operatorname{supp}}}
\nc{\argmin}{{\operatorname{argmin}}}
\def\x{\xi}
\nc{\RR}{{{\mathbb R}}}
\nc{\CC}{{{\mathbb C}}}
\nc{\FF}{{{\mathbb F}}}
\nc{\NN}{{{\mathbb N}}}
\nc{\ZZ}{{{\mathbb Z}}}
\nc{\PP}{{{\mathbb P}}}
\nc{\QQ}{{{\mathbb Q}}}
\nc{\UU}{{{\mathbb U}}}
\nc{\EE}{{{\mathbb E}}}
\nc{\id}{{\operatorname{id}}}
\nc{\CHSH}{{\operatorname{CHSH}}}
\nc{\rU}{\mbox{U}}
\nc{\ob}[1]{#1}
\nc{\SEP}{{\text{\rm SEP}}}
\nc{\NS}{{\text{\rm NS}}}
\nc{\LOCC}{{\text{\rm LOCC}}}
\nc{\PPT}{{\text{\rm PPT}}}
\nc{\EXT}{{\text{\rm EXT}}}
\nc{\Sym}{{\operatorname{Sym}}}
\nc{\ERLO}{{E_{\text{r,LO}}}}
\nc{\ERLOCC}{{E_{\text{r,LOCC}}}}
\nc{\ERPPT}{{E_{\text{r,PPT}}}}
\nc{\ERLOCCinfty}{{E^{\infty}_{\text{r,LOCC}}}}
\nc{\Aram}{{\operatorname{\sf A}}}
\newcommand{\Choi}{Choi-Jamio\l{}kowski }
\def\grd@save@target#1{%
  \def\grd@target{#1}}
\def\grd@save@start#1{%
  \def\grd@start{#1}}
\tikzset{
  grid with coordinates/.style={
    to path={%
      \pgfextra{%
        \edef\grd@@target{(\tikztotarget)}%
        \tikz@scan@one@point\grd@save@target\grd@@target\relax
        \edef\grd@@start{(\tikztostart)}%
        \tikz@scan@one@point\grd@save@start\grd@@start\relax
        \draw[minor help lines,magenta] (\tikztostart) grid (\tikztotarget);
        \draw[major help lines] (\tikztostart) grid (\tikztotarget);
        \grd@start
        \pgfmathsetmacro{\grd@xa}{\the\pgf@x/1cm}
        \pgfmathsetmacro{\grd@ya}{\the\pgf@y/1cm}
        \grd@target
        \pgfmathsetmacro{\grd@xb}{\the\pgf@x/1cm}
        \pgfmathsetmacro{\grd@yb}{\the\pgf@y/1cm}
        \pgfmathsetmacro{\grd@xc}{\grd@xa + \pgfkeysvalueof{/tikz/grid with coordinates/major step}}
        \pgfmathsetmacro{\grd@yc}{\grd@ya + \pgfkeysvalueof{/tikz/grid with coordinates/major step}}
        \foreach \x in {\grd@xa,\grd@xc,...,\grd@xb}
        \node[anchor=north] at (\x,\grd@ya) {\pgfmathprintnumber{\x}};
        \foreach \y in {\grd@ya,\grd@yc,...,\grd@yb}
        \node[anchor=east] at (\grd@xa,\y) {\pgfmathprintnumber{\y}};
      }
    }
  },
  minor help lines/.style={
    help lines,
    step=\pgfkeysvalueof{/tikz/grid with coordinates/minor step}
  },
  major help lines/.style={
    help lines,
    line width=\pgfkeysvalueof{/tikz/grid with coordinates/major line width},
    step=\pgfkeysvalueof{/tikz/grid with coordinates/major step}
  },
  grid with coordinates/.cd,
  minor step/.initial=.2,
  major step/.initial=1,
  major line width/.initial=2pt,
}
\def\problem@s{}
\newcounter{problems@cnt}
\newcommand{\allproblems}{\problem@s}
\newcommand{\idop}{\mathbbm{1}} 
\newcommand{\lrp}[1]{\left( #1 \right)}
\newcommand{\lrb}[1]{\left[ #1 \right]}
\newcommand{\lrc}[1]{\left\{ #1 \right\}}
\newcommand{\lrv}[1]{\left| #1 \right|}
\newcommand{\lrV}[1]{\left\| #1 \right\|}
\begin{document}
\title{Shadow Simulation of Quantum Processes}
\author{Xuanqiang Zhao}
\email{xqzhao7@connect.hku.hk}
\affiliation{QICI Quantum Information and Computation Initiative, Department of Computer Science, The University of Hong Kong, Pokfulam Road, Hong Kong}

\author{Xin Wang}
\email{felixxinwang@hkust-gz.edu.cn}
\affiliation{Thrust of Artificial Intelligence, Information Hub,\\
The Hong Kong University of Science and Technology (Guangzhou), Guangdong 511453, China}

\author{Giulio Chiribella}
\email{giulio@cs.hku.hk}
\affiliation{QICI Quantum Information and Computation Initiative, Department of Computer Science, The University of Hong Kong, Pokfulam Road, Hong Kong}
\affiliation{Quantum Group, Department of Computer Science, University of Oxford, Wolfson Building, Parks Road, Oxford, OX1 3QD, United Kingdom}
\affiliation{Perimeter Institute for Theoretical Physics, 31 Caroline Street North, Waterloo, Ontario, Canada}

\begin{abstract}
We introduce the task of shadow process simulation, where the goal is to simulate the estimation of the expectation values of arbitrary quantum observables at the output of a target physical process.
When the sender and receiver share random bits or other no-signaling resources, we show that the performance of shadow process simulation exceeds that of conventional process simulation protocols in a variety of scenarios including communication, noise simulation, and data compression.
Remarkably, we find that there exist scenarios where shadow simulation provides increased statistical accuracy without any increase in the number of required samples.
\end{abstract}

\date{\today}
\maketitle

\textcolor{black}{\textbf{\emph{Introduction.---}}}
``What is a quantum state?'' is one of the central questions in the foundations of quantum mechanics. A minimal interpretation is that a quantum state is a compact way to represent the expectation values of all possible observables associated to a given system. This interpretation may suggest that transmitting a quantum state from a place to another is equivalent to transferring information about the expectation values of arbitrary observables.
In this Letter, we show that this equivalence does not hold when the sender and receiver share random bits or more general no-signaling resources~\cite{leung2015power,duan2015no,fang2019quantum}: in general, transferring information about all possible expectation values is a much less demanding task than transferring the quantum state itself. Some instances of this phenomenon can be derived from existing results on error mitigation~\cite{temme2017error, li2017efficient, endo2018practical}, while other more radical instances emerge from a new task that we name {\em shadow simulation of quantum processes.}

The settings of shadow simulation are illustrated in Fig.~\ref{fig:simulation_setting}.
The goal is to simulate the estimation of the expectation values at the output of a quantum communication channel $\cM$ (with input system $A'$ and output system $B'$), by utilizing  another  quantum channel $\cN$ (with input system $A$ and output system $B$) in place of channel $\cM$. Here, $A$, $A'$, $B$, and $B'$ are arbitrary finite-dimensional quantum systems, possibly of different dimensions.
A sender, Alice, has access to system $A'$, which may be entangled with a reference system $R$ in the laboratory of a receiver, Bob.
Systems $A'$ and $R$ are initially in a quantum state $\rho$, possibly unknown to Alice.
On his side, Bob has a device that measures an  observable $O$ (possibly unknown to him) on systems $B'$ and $R$.
The goal is to estimate the expectation value $\tr[O (\cM \ox \cI_R)(\rho)]$, where $\cI_R$ is the identity channel on system $R$. To achieve this goal, Alice and Bob perform encoding and decoding operations $\cE_j$ and $\cD_j$, respectively, coordinating their actions using shared random bits. To estimate the expectation value $\tr[O (\cM\otimes \cI_R)(\rho)]$, Bob measures the observable $O$ on systems $B'$ and $R$ and then postprocesses the measurement outcomes. The protocol is successful if Bob's estimate deviates from the true value $\tr[O (\cM\otimes \cI_R)(\rho)]$ by less than a given error tolerance $\epsilon$, for every possible state $\rho$, for every possible observable $O$, and for every possible reference system $R$.

\begin{figure}[t]
    \centering
    \includegraphics[width=\columnwidth]{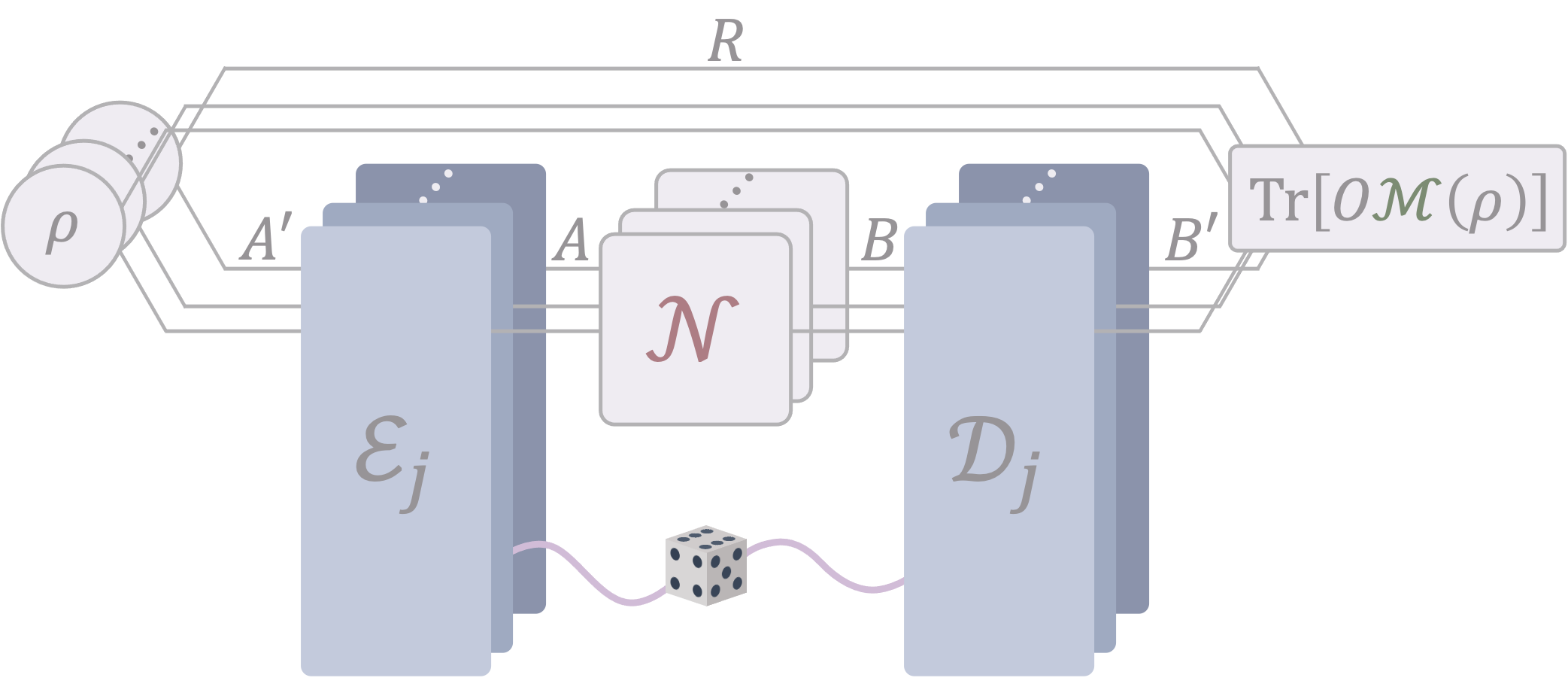}
    \caption{{\bf Shadow simulation of quantum channels.}  A sender (Alice) and a receiver (Bob) are connected through a quantum communication channel $\cN$. A shadow simulation protocol  enables Bob to estimate the expectation value $\tr[ O (\cM \otimes \cI_R)(\rho)]$ of an arbitrary observable $O$ on the output state $(\cM\otimes \cI_R) (\rho)$ produced by a target channel $\cM$ when acting on an arbitrary input state $\rho$ of a system $A'$ in Alice's laboratory, possibly entangled with a reference system $R$ in Bob's laboratory. To achieve this task, Alice and Bob can coordinate their operations $\cE_j$ and $\cD_j$ by sharing random bits, or other no-signaling resources.
    }
    \label{fig:simulation_setting}
\end{figure}

Shadow simulation is  a generalization of the task of quantum channel simulation~\cite{kretschmann2004tema, berta2011quantum,bennett2014quantum,duan2015no,leung2015power,Wang2016g,fang2019quantum}, with the crucial difference that shadow simulation does not aim at reproducing the channel's output states, but only their ``shadow information''~\cite{zhao2022information, aaronson2018shadow}, {\em i.e.}, the information about the expectation values of all possible observables. 
This generalization is useful for quantum information processing in the noisy intermediate-scale quantum (NISQ) era~\cite{preskill2018quantum}, where classical postprocessing of expectation values can be used to
simulate larger quantum memories~\cite{bravyi2016trading, peng2020simulating, mitarai2021overhead, piveteau2022circuit}, probe properties of quantum systems~\cite{buscemi2013direct, huang2020predicting, elben2023randomized}, mitigate errors~\cite{li2017efficient, temme2017error, piveteau2021error, lostaglio2021error, suzuki2022quantum, cai2022quantum, liu2024virtual}, simulate unphysical operations~\cite{jiang2021physical, regula2021operational, zhao2023power}, and distill resources~\cite{yuan2023virtual,yuan2024virtual}.

Remarkably, we find that shadow simulation is not constrained by the limits of conventional channel simulation.
For example, we show that one can perfectly shadow simulate the transmission of an arbitrarily large number of qubits using only a single-qubit channel,
a task that cannot be achieved in any conventional channel simulation protocol. In this example, the advantage over conventional protocols is achieved at the price of an increase in the number of samples needed to accurately estimate the expectation values.
Quite surprisingly, we also find that there also  exist scenarios where shadow simulation achieves lower error than conventional channel simulation without any sampling overhead, provided that Alice and Bob have access to no-signaling quantum resources.
Overall, our results show that shared classical randomness and classical postprocessing are valuable resources for quantum communication and other quantum technologies.

\textcolor{black}{\textbf{\emph{Framework: virtual supermaps.---}}}
In conventional quantum channel simulation, the aim is to simulate the action of a target channel $\cM$ (with input $A'$ and output $B'$) on a generic input state by using an available channel $\cN$ (with input $A$ and output $B$). To convert channel $\cN$ into an approximation of channel $\cM$, Alice and Bob perform encoding and decoding channels $\cE$ (with input $A'$ and output $A$) and $\cD$ (with input $B$ and output $B'$), respectively, thus obtaining the new channel $ \cD  \circ \cN \circ \cE$.
In shadow simulation, instead, Bob can sample his decoding operation  from a set of channels $\{\cD_j\}$ and, after the output has been measured, he can postprocess the measurement statistics, taking arbitrary linear combinations of the expectation values obtained for different values of $j$. In this way, Bob can reproduce the action of a linear map of the form $\widetilde \cD =  \sum_j  \,  \lambda_j\,  \cD_j$, where $\{\lambda_j\}$ are arbitrary real numbers~\cite{buscemi2013direct, temme2017error, jiang2021physical, piveteau2022quasiprobability}.
Note that, in general, $\widetilde \cD$ is not a valid quantum channel (completely positive trace-preserving map), but rather a {\em virtual channel},  {\em i.e.} a linear map that is Hermitian-preserving ({\em i.e.}, it maps Hermitian operators into Hermitian operators) and trace-scaling ({\em i.e.}, it is proportional to a trace-preserving map, with the proportionality constant---hereafter called the {\em scaling factor}---being an arbitrary real number)~\cite{zhao2022information, yuan2023virtual, parzygnat2023virtual, yao2023optimal, chen2023a}.

More generally, Bob may share random bits  with Alice and coordinate his local operations with hers, as in Fig.~\ref{fig:simulation_setting}. Hence, Bob's postprocessing gives rise to linear maps of the form
\begin{align}\label{supermap}
\widetilde\cS (\cN) = \sum_{j} \lambda_j \cD_j \circ \cN \circ \cE_j.
\end{align}
Mathematically, the linear map $\widetilde\cS$ is an example of a supermap~\cite{chiribella2008transforming, chiribella2009theoretical, chiribella2013quantum},
that is, a map acting on the vector space spanned by quantum operations. Unlike most supermaps considered so far, however, $\widetilde \cS$ generally does not transform quantum channels into quantum channels, due to the possible presence of negative coefficients in the set $\{\lambda_j\}$.
Instead, $\widetilde \cS$ transforms Hermitian-preserving maps into Hermitian-preserving maps, and, for every $\lambda \in \mathbb R$, trace-scaling maps with scaling factor $\lambda$ into trace-scaling maps with scaling factor $  \lambda_* \,   \lambda$, for some real number $\lambda_*$  (in the case of Eq.~\eqref{supermap},  $\lambda_*  = \sum_j \lambda_j $). We call the supermaps with these two properties {\em virtual supermaps}.

Virtual supermaps $\widetilde \cS$ mapping virtual channels from $A$ to $B$ into virtual channels from $A'$ to $B'$ are into one-to-one correspondence with virtual bipartite channels $\widehat \cS$ transforming Hermitian operators on the composite system $A'B$ into Hermitian operators on  the composite system $AB'$~\cite{chiribella2008transforming, chiribella2008quantum, gour2019comparison}. The correspondence can be made explicit by decomposing the action of the virtual supermap $\widetilde \cS$ as $\widetilde \cS (\cN) = \sum_k \, \cB_k \circ \cN \circ \cA_k$, where $\cA_k$ and $\cB_k$ are suitable linear maps from $A'$ to $A$ and from $B$ to $B'$, respectively. Using this decomposition, one can define the map $\widehat \cS \coloneqq \sum_k \cA_k \otimes \cB_k$, which is guaranteed to be a virtual channel (see Appendix~A~\cite{supplemental}).

A virtual supermap $\widetilde \cS$ of the special form~\eqref{supermap} will be called a \emph{(randomness-assisted) shadow simulation code}. For a shadow simulation code $\widetilde \cS$, the corresponding virtual bipartite channel is $\widehat \cS \coloneqq \sum_j \, \lambda_j\, \cE_j\otimes \cD_j$.
It is rather straightforward to see that this virtual channel is {\em no-signaling}, meaning that for every pair of operators $\rho_{A'}$ and $\rho_B$ acting on $A'$ and $B$, respectively, the operator $\tr_{B'} [\widehat \cS (\rho_{A'}\otimes \rho_B)]$ is independent of $\rho_B$, and the operator $\tr_{A} [\widehat \cS (\rho_{A'} \otimes \rho_{B})]$ is independent of $\rho_{A'}$ (here, $\tr_X$ denotes the partial trace over the Hilbert space associated to system $X$, for $X=  A$ or $X=B'$).
The converse is less straightforward but still true. In short, we have the following characterization:

\begin{theorem}\label{thm:ns_code}
    A virtual supermap $\widetilde \cS$ is a randomness-assisted shadow simulation code if and only if the corresponding virtual channel $\widehat \cS$ is no-signaling.
\end{theorem}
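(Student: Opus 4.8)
The ``only if'' direction is a one-line check: if $\widetilde\cS(\cN)=\sum_j\lambda_j\,\cD_j\circ\cN\circ\cE_j$ with each $\cE_j,\cD_j$ a channel, then $\widehat\cS=\sum_j\lambda_j\,\cE_j\otimes\cD_j$, and trace-preservation of the $\cD_j$ gives $\tr_{B'}[\widehat\cS(\rho_{A'}\otimes\rho_B)]=\tr[\rho_B]\,\sum_j\lambda_j\,\cE_j(\rho_{A'})$, which depends on $\rho_B$ only through its trace; the symmetric computation using trace-preservation of the $\cE_j$ gives the other no-signaling condition. The substance is the converse, and the plan is to recast it as a linear-algebra statement. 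Let $W_1,W_2$ be the real vector spaces of Hermitian-preserving maps $A'\to A$ and $B\to B'$, and $V_1\subseteq W_1$, $V_2\subseteq W_2$ the subspaces of those maps that are additionally trace-scaling (the virtual channels). Using the explicit correspondence $\widetilde\cS(\cN)=\sum_k\cB_k\circ\cN\circ\cA_k \leftrightarrow \widehat\cS=\sum_k\cA_k\otimes\cB_k$ recalled above, $\widetilde\cS$ is a shadow simulation code exactly when $\widehat\cS$ admits a representation $\sum_j\lambda_j\,\cE_j\otimes\cD_j$ with $\cE_j,\cD_j$ channels; since channels span $V_1$ and $V_2$ respectively, it suffices to prove that every no-signaling virtual channel $\widehat\cS$ lies in $V_1\otimes V_2$.

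First I would record two elementary facts. (i) Since $\mathrm{Herm}(A'B)$ is spanned over $\RR$ by product Hermitians, $\mathrm{Herm}(A'B)\cong\mathrm{Herm}(A')\otimes_\RR\mathrm{Herm}(B)$, and likewise for $AB'$; hence the space of Hermitian-preserving maps $A'B\to AB'$ is canonically $W_1\otimes W_2$, so $\widehat\cS$ decomposes as a finite sum $\sum_k\cA_k\otimes\cB_k$ with $\cA_k\in W_1$, $\cB_k\in W_2$. (ii) A Hermitian-preserving map is trace-scaling iff the functional $X\mapsto\tr[(\text{map})(X)]$ is a multiple of $\tr$, which is $d_{A'}^2-1$ independent linear constraints on $W_1$; since the Choi matrix of the channel $X\mapsto\tr(X)\,I_A/d_A$ is strictly positive, the channel Choi matrices affinely span their containing affine slice, and a dimension count then gives $\mathrm{span}_\RR\{\text{channels }A'\to A\}=V_1$, and similarly $\mathrm{span}_\RR\{\text{channels }B\to B'\}=V_2$.

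The core step is to convert the global no-signaling conditions on $\widehat\cS$ into local trace-scaling conditions on its tensor factors. Fix a basis $\{e_i\}$ of $W_1$ and write $\widehat\cS=\sum_i e_i\otimes g_i$ with uniquely determined $g_i\in W_2$. Because differences of density operators span the traceless Hermitians $\mathrm{Herm}_0(B)$, the requirement that $\tr_{B'}[\widehat\cS(\rho_{A'}\otimes\rho_B)]$ be independent of $\rho_B$ is equivalent to $\tr_{B'}\circ\widehat\cS=\sum_i e_i\otimes(\tr_{B'}\circ g_i)$ annihilating $\mathrm{Herm}(A')\otimes\mathrm{Herm}_0(B)$. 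Evaluating on $X\otimes Y$ with $Y\in\mathrm{Herm}_0(B)$ and invoking linear independence of the maps $e_i$ forces $\tr[g_i(Y)]=0$ for every $i$ and every traceless $Y$, i.e.\ every $g_i$ is trace-scaling, so $\widehat\cS\in W_1\otimes V_2$. The mirror argument, expanding in a basis of $W_2$, extracts $\widehat\cS\in V_1\otimes W_2$ from the second no-signaling condition. Choosing complements $W_1=V_1\oplus U_1$, $W_2=V_2\oplus U_2$ and splitting $W_1\otimes W_2$ into its four summands gives $(W_1\otimes V_2)\cap(V_1\otimes W_2)=V_1\otimes V_2$; hence $\widehat\cS\in V_1\otimes V_2$, and expanding each tensor factor in a basis of channels realizes $\widehat\cS=\sum_j\lambda_j\,\cE_j\otimes\cD_j$ with $\cE_j,\cD_j$ channels and $\lambda_j\in\RR$, so $\widetilde\cS$ is a shadow simulation code.

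The step I expect to be the main obstacle is the conversion in the previous paragraph: no-signaling is an a priori ``global'' property of $\widehat\cS$, and the delicate point is to argue that, after expanding in a fixed basis of one factor, the no-signaling identity forces \emph{each} coefficient map on the other factor---not merely some linear combination of them---to be trace-scaling. The reformulation of ``independent of $\rho_B$'' as annihilation of $\mathrm{Herm}(A')\otimes\mathrm{Herm}_0(B)$, together with linear independence of a basis, is what makes this rigorous. A secondary point requiring care is fact (ii): identifying $\mathrm{span}_\RR\{\text{channels}\}$ with the space of virtual channels is standard but must be stated precisely, since it is what converts the abstract membership $\widehat\cS\in V_1\otimes V_2$ into an honest shadow simulation code with genuine channels $\cE_j,\cD_j$ and real postprocessing weights $\lambda_j$.
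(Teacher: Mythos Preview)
Your proof is correct and follows the same two-step skeleton as the paper's: first show that the two no-signaling conditions force $\widehat\cS$ to decompose as a sum of tensor products of Hermitian-preserving trace-scaling maps, then expand each such factor as a real linear combination of genuine channels. The paper outsources both steps to external references (Theorem~14 of Gutoski for the first, Lemma~6 of Zhao et al.\ for the second), whereas you supply self-contained arguments: the basis-expansion trick together with the tensor-intersection identity $(W_1\otimes V_2)\cap(V_1\otimes W_2)=V_1\otimes V_2$ replaces the Gutoski citation, and your interior-point/dimension count for $\mathrm{span}_\RR\{\text{channels}\}=V_i$ replaces the cited lemma. Your organization---extracting one trace-scaling constraint from each no-signaling direction separately and then intersecting---is a bit cleaner than the Choi-operator dual-basis computation that underlies Gutoski's result (and which the paper partially reproduces in its proof of the one-way analogue, Lemma~\ref{lemma:fc_code}), but the underlying linear algebra is the same.
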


The proof is provided in Appendix~A~\cite{supplemental}. An important consequence of Theorem \ref{thm:ns_code} is that shared randomness and classical postprocessing can be used to simulate arbitrary no-signaling resources. Explicitly, a no-signaling resource is represented by a quantum no-signaling channel $\widehat \cS$~\cite{piani2006properties, chiribella2012perfect} with input $A'B$ and output $AB'$. Using the no-signaling channel $\widehat \cS$, Alice and Bob can implement the corresponding supermap $\widetilde \cS$. In turn, Theorem~\ref{thm:ns_code} guarantees that the supermap $\widetilde \cS$ can be implemented using only shared randomness and classical postprocessing. The same conclusion applies to protocols using local operations and shared entanglement, which are a special case of no-signaling resources.

\textcolor{black}{\textbf{\emph{No-signaling assisted shadow simulation.---}}}
Quantum no-signaling resources have been extensively studied in quantum Shannon theory~\cite{leung2015power,duan2015no,fang2019quantum}.
In conventional channel simulation, Alice and Bob have the assistance of a fixed no-signaling channel $\widehat \cS$. In shadow simulation, instead, we allow them to sample over a set of no-signaling channels $\{ \widehat \cS_j \}$. Using classical postprocessing, they can then reproduce the virtual supermap $\widetilde \cS \coloneqq \sum_j \,\lambda_j\, \widetilde \cS_j$, where $\{\lambda_j\}$ are arbitrary real coefficients. We call a supermap of this type a {\em no-signaling shadow simulation code}.

The randomization over different settings generally comes at the price of an increased sampling cost, meaning that more rounds of data collection are needed to estimate the desired expectation values~\cite{jiang2021physical, regula2021operational, zhao2023power}. For example, one way to realize a virtual supermap $\widetilde \cS = \sum_j \,\lambda_j\, \widetilde \cS_j$ is to sample the quantum supermap $\cS_j$ with probability $p_j \coloneqq |\lambda_j|/\lambda_{\rm tot}$, $\lambda_{\rm tot}  \coloneqq  \sum_i  |\lambda_i|$  and, in the data processing stage, to multiply the estimated expectation value of $O$  by ${\rm sign}  (\lambda_j) \, \lambda_{\rm tot}$. This multiplication, however, generally increases the root mean square error (RMSE) by a factor $\lambda_{\rm tot}$: to estimate the desired quantity with RMSE lower than $\epsilon$, one needs to have measurement data with RMSE lower than $\epsilon/\lambda_{\rm tot}$, which generally requires increasing the number of samples by a factor $\lambda_{\rm tot}^2$ (see Appendix~A~\cite{supplemental} for more detail).
Optimizing the decomposition of $\widetilde \cS$ over all possible linear combinations of quantum supermaps we then obtain the following definition of the {\em sampling cost}:
\begin{align*}
    c_{\rm smp}\lrp{\widetilde{\cS}} &\coloneqq \inf \Big\{p_+ + p_- ~\Big|~ \widehat{\cS} = p_+ \widehat{\cS}^+ - p_- \widehat{\cS}^-, \nonumber\\
    &\qquad \qquad \qquad  p_\pm \in \mathbb{R}^+,~ \widehat{\cS}^\pm \in {\rm CPTP}\cap{\rm NS}\Big\},
\end{align*}
where ${\rm CPTP}$ and ${\rm NS}$ denote the set of completely positive, trace-preserving maps and the set of no-signaling virtual processes, respectively, and $\mathbb{R}^+$ is the set of non-negative real numbers.
Note that every conventional channel simulation protocol has $c_{\rm smp}\lrp{\widetilde{\cS}} = 1$, since the map $\widehat{\cS}$ is already a no-signaling channel.

An operational measure of error, both in the conventional channel simulation scenario and in shadow simulation, is the {\em simulation error}, defined  as ${\rm err}(\widetilde{\cS}; \cN, \cM) \coloneqq \| \widetilde \cS  (\cN) - \cM \|_{\diamond}/2$, where $\|  \cdot \|_\diamond$ is the diamond-norm distance~\cite{kitaev1997quantum}. Operationally, ${\rm err}(\widetilde{\cS}; \cN, \cM)$ can be interpreted as the worst case error in  measuring (normalized) observables involving a reference system (see Appendix~A~\cite{supplemental} for more detail).
The optimal performance in  the task of shadow simulation can be quantified by two parameters. One is the minimum error achievable with a sampling cost bounded by $\gamma$:
\begin{align*}
    \varepsilon^*_{\gamma, \rm NS} (\cN, \cM) \coloneqq \inf_{\widehat{\cS} \in {\rm NS}} \lrc{{\rm err}\lrp{\widetilde{\cS}; \cN, \cM} ~\middle|~ c_{\rm smp}\lrp{\widetilde{\cS}} \leq \gamma} \,.
\end{align*}
The other is the minimum sampling cost needed to guarantee an error  below an error tolerance $\varepsilon$:
\begin{align*}
    \gamma^*_{\varepsilon, \rm NS} \lrp{\cN, \cM} \coloneqq \inf_{\widehat{\cS} \in {\rm NS}} \lrc{c_{\rm smp}\lrp{\widetilde{\cS}} ~\middle|~ {\rm err}\lrp{\widetilde{\cS}; \cN, \cM} \leq \varepsilon}.
\end{align*}
Both quantities can be computed efficiently by semidefinite programs (SDPs) given in Appendix~B~\cite{supplemental}. In the following, we illustrate the power of shadow simulation in three applications.

\textcolor{black}{\textbf{\emph{Zero-error shadow communication.---}}}
Quantum communication can be viewed as a special case of channel simulation: the simulation of an identity channel acting on a given number of qubits. Here we consider the zero-error scenario~\cite{shannon1956zero, cubitt2011zero, duan2015no, wang2016quantum}, corresponding to an exact simulation of the identity channel. In this case, the virtual supermap $\cS$ satisfies the condition $\widetilde{\cS} (\cN) = \cI_d$, where $\cI_d$ denotes the identity channel on a $d$-dimensional quantum system. We define the one-shot zero-error shadow capacity assisted by no-signaling resources as
\begin{align}\label{def:shadowcap}
    Q^{(1)}_{\gamma, \rm NS} (\cN) \coloneqq \sup_{d,\widehat{\cS} \in {\rm NS}}& \Big\{\log_2 d ~\Big|~ \widetilde{\cS}(\cN) = \cI_d,~ c_{\rm smp}\lrp{\widetilde{\cS}} \leq \gamma \Big\},
\end{align}
where the dimension $d$ is optimized over all positive integers. Here the term ``one-shot'' refers to the fact that the communication protocol only involves quantum operations on the inputs and outputs of a {\em single} use of channel $\cN$.
Note, however, that $\widetilde \cS$ is a virtual supermap, implemented by sampling over multiple supermaps and averaging the results across multiple rounds, each of which involves one  use of $\cN$.

In Appendix~C~\cite{supplemental}, we provide an explicit SDP expression for the shadow capacity for every $\gamma$. This expression extends the previously known expression for the one-shot zero-error quantum capacity assisted by no-signaling resources ~\cite{duan2015no, wang2016semidefinite}, which can be retrieved in the special case $\gamma=1$. For $\gamma > 1$, we show that the zero-error shadow capacity is generally larger than the zero-error quantum capacity. A concrete example is as follows: 
\begin{theorem}\label{thm:depo_capacity}
    Let $\cN_{{\rm depo}, p}(\rho) = p\rho + (1-p)\idop_2/2$ be a single-qubit depolarizing channel, where $p\in [0,1]$ is a probability and $\idop_2$ is the identity operator on $\mathbb{C}^2$.
    For $\gamma \geq 1$, the one-shot zero-error shadow capacity assisted by no-signaling resources is
    \begin{align}\label{eq:depo_shadow_capacity}
        Q^{(1)}_{\gamma, \rm NS}(\cN_{{\rm depo}, p}) = \log_2 \left\lfloor \sqrt{2p\gamma + p + 1} \right\rfloor.
    \end{align}
\end{theorem}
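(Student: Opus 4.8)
The starting point is the SDP characterization derived in Appendix~C~\cite{supplemental}, which expresses the one-shot zero-error shadow capacity of any channel $\cN$ as $Q^{(1)}_{\gamma, \rm NS}(\cN) = \log_2\lfloor\omega_\gamma(\cN)\rfloor$, where $\omega_\gamma(\cN)$ is the supremum of $d\in\RR_{>0}$ for which an associated SDP is feasible. In the Choi picture this SDP encodes that $\widehat\cS$ is a no-signaling virtual bipartite channel of unit scaling factor, that $\widetilde\cS(\cN) = \cI_d$, and that $c_{\rm smp}(\widetilde\cS)\le\gamma$ (with the quasiprobability decomposition $\widehat\cS = p_+\widehat\cS^+ - p_-\widehat\cS^-$ built in); the flooring reflects the integrality of the communicated dimension. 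Hence the proof reduces to establishing $\omega_\gamma(\cN_{{\rm depo},p}) = \sqrt{2p\gamma+p+1}$, which is always $\ge 1$, so that $Q^{(1)}_{\gamma, \rm NS}\ge 0$; note that at $\gamma = 1$ this gives $\sqrt{3p+1}$, recovering the known no-signaling-assisted zero-error quantum capacity (floor $2$, i.e.\ one qubit, only at $p=1$).

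The main technical device is the covariance of the depolarizing channel: $\cN_{{\rm depo},p}(U\rho U^\dagger) = U\,\cN_{{\rm depo},p}(\rho)\,U^\dagger$ for every $U\in\mathrm{U}(2)$, equivalently its Choi operator $J_p = p\,\ketbra{\Phi^+}{\Phi^+} + \tfrac{1-p}{4}\,\idop_4$ (eigenvalues $\tfrac{1+3p}{4}$ and $\tfrac{1-p}{4}$) is invariant under conjugation by $U\ox\bar U$. Since the target $\cI_d$ is $\mathrm{U}(d)$-covariant, the feasible region of the SDP is invariant under a $\mathrm{U}(d)\times\mathrm{U}(2)$ action, so twirling a feasible $\widehat\cS$ and its witnesses $\widehat\cS^\pm$ over this group yields a covariant feasible point of no larger sampling cost, by convexity. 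A covariant $\widehat\cS$ depends on only a handful of scalars, so the SDP collapses to a small linear program; solving it (equivalently, matching primal and dual solutions) gives $\omega_\gamma(\cN_{{\rm depo},p}) = \sqrt{2p\gamma+p+1}$, and the optimal integer dimension is $d = \lfloor\sqrt{2p\gamma+p+1}\rfloor$.

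For achievability one exhibits, for every integer $d$ with $d^2\le 2p\gamma+p+1$, a no-signaling shadow code of sampling cost $\le\gamma$ realizing $\widetilde\cS(\cN_{{\rm depo},p}) = \cI_d$; the reduced SDP produces such a code directly, and it admits a transparent error-mitigation reading. A genuine no-signaling (CPTP) code first converts one use of $\cN_{{\rm depo},p}$ into the $d$-dimensional depolarizing channel $q\,\cI_d + (1-q)\,\cT_d$, $\cT_d(\rho) = \tr[\rho]\,\idop_d/d$, with $q = 4p/(d^2+p-1)$ (again a consequence of the covariant SDP); then one writes $\cI_d = \tfrac1q\lrp{q\,\cI_d + (1-q)\,\cT_d} - \tfrac{1-q}{q}\,\cT_d$, with $\cT_d$ requiring no communication at all. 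The resulting shadow code has sampling cost $\tfrac1q + \tfrac{1-q}{q} = \tfrac{2-q}{q} = \tfrac{d^2-p-1}{2p}$, which is $\le\gamma$ precisely under the hypothesis, and is $\ge 1$ as required.

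The step I expect to be the main obstacle is twofold: justifying the covariance reduction while tracking both the no-signaling constraints and the sampling-cost decomposition $\widehat\cS = p_+\widehat\cS^+ - p_-\widehat\cS^-$, so that the symmetrized optimizer is a genuine no-signaling shadow code of the stated cost; and then pinning down the exact coefficients in $\omega_\gamma$ --- the factor $2p$ on $\gamma$ and the offset $p+1$ --- which trace back to how the spectrum of $J_p$ enters the zero-error constraint in the reduced SDP. I expect that exhibiting an explicit dual feasible solution is the cleanest route to the upper bound $\omega_\gamma(\cN_{{\rm depo},p})\le\sqrt{2p\gamma+p+1}$, with the matching lower bound supplied by the code above.
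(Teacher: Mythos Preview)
Your overall strategy --- pass to the SDP for $\gamma^*_{0,\rm NS}(\cN_{{\rm depo},p},\cI_d)$, reduce it via $\mathrm{U}(d)\times\mathrm{U}(2)$ covariance, and close the gap with a dual certificate --- is correct and is essentially what the paper does (it writes down explicit primal and dual feasible points, both giving $(d^2-1-p)/(2p)$, and then inverts for $d$). However, the operational achievability you propose has a genuine gap. You claim that a \emph{CPTP} no-signaling code can convert one use of $\cN_{{\rm depo},p}$ into the $d$-dimensional depolarizing channel $\cD_q$ with $q=4p/(d^2+p-1)$, so that the error-mitigation identity $\cI_d=\tfrac{1}{q}\cD_q-\tfrac{1-q}{q}\cT_d$ lands on the optimal cost. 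But that $q$ is in general \emph{not} attainable. Carrying out the $\mathrm{U}(d)\times\mathrm{U}(2)$ reduction on the CPTP ($p_-=0$) feasible set, one finds that the maximum depolarizing parameter reachable from a single use of $\cN_{{\rm depo},p}$ is $q_{\max}=3p/(d^2-1)$ (equivalently, maximum entanglement fidelity $(1+3p)/d^2$), and $q_{\max}\ge 4p/(d^2+p-1)$ holds iff $d^2\le 3p+1$ --- precisely the trivial regime already inside the ordinary NS zero-error capacity. For instance at $p=1$, $d=3$ you require $q=4/9$ but $q_{\max}=3/8$; running your two-step scheme with the true $q_{\max}$ gives sampling cost $(2d^2-2-3p)/(3p)$, strictly larger than the optimum $(d^2-1-p)/(2p)$ whenever $d^2>3p+1$.

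The resolution is that the optimal shadow code does \emph{not} take $\widehat\cS^-(\cN)=\cT_d$. In the paper's explicit primal, the positive part $\widehat\cS^+(\cN_{{\rm depo},p})$ does equal $\cD_{q_{\max}}$ with $q_{\max}=3p/(d^2-1)$, but the negative part $\widehat\cS^-(\cN_{{\rm depo},p})$ is the depolarizing channel $\cD_{q^-}$ with \emph{negative} parameter $q^-=-p/(d^2-1)$ (still a valid quantum channel since $q^-\ge -1/(d^2-1)$), and the weights are $p_+=(d^2-1+p)/(4p)$, $p_-=(d^2-1-3p)/(4p)$. So after the covariance reduction you must optimize over both halves of the quasiprobability decomposition jointly rather than freezing the minus part to $\cT_d$; once you do that --- or simply verify the paper's explicit primal and dual candidates --- the argument goes through.
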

Equation~\eqref{eq:depo_shadow_capacity} shows that the shadow capacity can become arbitrarily large as $\gamma$ grows, provided that the channel is not completely depolarizing ($p \neq 0$). In other words, a qubit depolarizing channel can be used to transmit the expectation values of all observables on a quantum system of arbitrarily high dimension $d$, at the price of an increased sampling cost $\gamma$.

It is useful to compare the above finding with existing results about error mitigation. Error mitigation protocols, such as those in Refs.~\cite{takagi2021optimal, jiang2021physical}, can be used to transmit arbitrary expectation values {\em on a single-qubit state} through repeated uses of a single-qubit depolarizing channel. This fact is interesting because, for $p\le 1/3$, the depolarizing channel is entanglement-breaking~\cite{horodecki2003entanglement} and therefore it cannot reliably transmit quantum states, even if used infinitely many times.
Theorem~\ref{thm:depo_capacity} takes this observation to a much stronger level: not only can a qubit depolarizing channel transmit all expectation values for a single qubit, but also it can transmit the expectation values for  quantum systems of arbitrarily large dimension.

Now, recall that Theorem~\ref{thm:ns_code} guarantees that every no-signaling code can be simulated with local operations and shared randomness (possibly at the price of a larger sampling cost.)
Combining this fact with Theorem~\ref{thm:depo_capacity}, we obtain that the assistance of shared randomness gives depolarizing channels  an arbitrarily large  shadow capacity for sufficiently large values of the sampling cost. The same argument applies to shadow simulation codes assisted by shared entanglement.
Remarkably, these phenomena are not limited to the depolarizing channel, but apply in general to every quantum channel achieving a positive value of the zero-error no-signaling assisted shadow capacity~\eqref{def:shadowcap} for at least one value of $\gamma$. The proof of this fact will be provided at the end of the next section.

\textcolor{black}{\textbf{\emph{Zero-error channel formation.---}}}
The dual task to communication is the simulation of a target noisy channel using a noiseless channel. For this task, which we call {\em channel formation}, the key quantity is the simulation cost, defined as the number of noiseless qubits that must be sent from the sender to the receiver. We define the one-shot zero-error shadow simulation cost of channel $\cM$ under no-signaling resources as
\begin{align}\label{shadowsimcost}
    S^{(1)}_{\gamma, \rm NS}(\cM) \coloneqq \inf_{d,\widehat{\cS} \in {\rm NS}}& \Big\{ \log_2 d ~\Big|~ \widetilde{\cS}(\cI_d) = \cM,~ c_{\rm smp}\lrp{\widetilde{\cS}} \leq \gamma \Big\},
\end{align}
where the dimension $d$ is optimized over positive integers. This quantity generalizes the one-shot zero-error simulation cost studied in the conventional quantum channel simulation scenario~\cite{duan2015no, fang2019quantum}, which coincides with the shadow simulation cost~\eqref{shadowsimcost} for $\gamma = 1$. In Appendix~D~\cite{supplemental}, we provide an SDP for the shadow simulation cost, and we show that the simulation cost can generally be reduced by increasing the sampling cost.

Besides simulating noisy channels, shadow simulation also allows a simulation of high-dimensional noiseless channels using low-dimensional ones. This simulation can also be viewed as a form of quantum compression, where the goal is to store the expectation values of all possible observables. Alternatively, one can view this shadow simulation as the simulation of a high-dimensional quantum measurement using a low-dimensional one ~\cite{ioannou2022simulability}.

The minimum sampling cost for simulating a higher-dimensional identity channel is provided by the following theorem:
\begin{theorem}
    Given identity channels $\cI_d$ and $\cI_{d'}$ with $d' \geq d \geq 2$, the minimum sampling cost of an exact shadow simulation of $\cI_{d'}$ using $\cI_d$ and no-signaling resources is
    \label{thm:high_id_sim_cost}
    \begin{align}\label{eq:low_id_to_high_id_cost}
        \gamma^*_{0, \rm NS} \lrp{\cI_d, \cI_{d'}} = 2 \lrp{\frac{d^{\prime}}{d}}^2 - 1.
    \end{align}
\end{theorem}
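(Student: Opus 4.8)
\emph{Proof sketch.}
The plan is to collapse the optimization to a scalar problem by a symmetry argument and then match an explicit quasiprobability decomposition against a direct positivity obstruction. By definition $\gamma^*_{0,\rm NS}(\cI_d,\cI_{d'})$ is the infimum of $c_{\rm smp}(\widetilde\cS)$ over no-signaling virtual supermaps $\widetilde\cS$ with $\widetilde\cS(\cI_d)=\cI_{d'}$. Since $\cI_d$ and $\cI_{d'}$ are trace-preserving, the associated virtual bipartite channel $\widehat\cS$ (input $A'B$, output $AB'$, dimension $dd'$ on each side) has scaling factor $1$; hence any decomposition $\widehat\cS=p_+\widehat\cS^+-p_-\widehat\cS^-$ with $\widehat\cS^\pm\in\CPTP\cap\NS$ obeys $p_+-p_-=1$, so $c_{\rm smp}(\widetilde\cS)=1+2p_-$ and the goal reduces to minimizing the negative weight $p_-$.

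\emph{Symmetry reduction.} Identity channels commute with unitary conjugation, so both the constraint $\widetilde\cS(\cI_d)=\cI_{d'}$ and the value $c_{\rm smp}$ are invariant under the group $U(d)\times U(d')$ acting by local unitaries on the wires $A',B,A,B'$ (the $U(d)$ part conjugates the resource $\cI_d$ and fixes it, the $U(d')$ part conjugates the target $\cI_{d'}$ and fixes it). By convexity of $c_{\rm smp}$ one may average, so the Choi-Jamio\l{}kowski operator of $\widehat\cS$ can be taken in the commutant, which by Schur's lemma is the four-dimensional span of $\{\idop,P_{d'}\}\otimes\{\idop,P_d\}$ with $P_n=\ketbra{\Phi_n}{\Phi_n}$, $\ket{\Phi_n}=\sum_i\ket{ii}$. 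Writing $C_{\widehat\cS}=a\,\idop+b\,P_{d'}\otimes\idop_{AB}+c\,\idop_{A'B'}\otimes P_d+e\,P_{d'}\otimes P_d$ and imposing the two scalar equations coming from $\widetilde\cS(\cI_d)=\cI_{d'}$ together with the two scalar no-signaling equations (each no-signaling condition reduces to a single scalar constraint on the symmetric subspace), one finds that the symmetric feasible set is a \emph{single point}: $e=1/(d^2-1)$, $b=-e/d$, $c=-e/d'$, $a=ed/d'$. It thus remains only to compute the minimal negative weight of this one operator.

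\emph{Spectral analysis and matching bounds.} Block-diagonalizing $C_{\widehat\cS}$ in the four sectors $\{\Pi_{d'},\idop-\Pi_{d'}\}\otimes\{\Pi_d,\idop-\Pi_d\}$ ($\Pi_n:=P_n/n$), all eigenvalues are non-negative except in the sector $\Pi_{d'}\otimes(\idop-\Pi_d)$, where the eigenvalue equals $-\mu$ with $\mu=(d'^2-d^2)/\big(d'd(d^2-1)\big)$. Next one checks that the symmetric elements of $\CPTP\cap\NS$ form a one-parameter family $C^-_s$, $s\in\big[-1/((d^2-1)(d'^2-1)),\,1/(d'^2-1)\big]$, whose eigenvalue in that same critical sector is $\big(1-s(d'^2-1)\big)/(dd')$. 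For the upper bound, take the denominator $C^-=C^-_s$ at the left endpoint $s=-1/((d^2-1)(d'^2-1))$ and $p_-=(d'^2-d^2)/d^2$: verifying $C_{\widehat\cS}+p_-C^-\succeq0$ sector by sector (two sectors collapse to exactly zero) shows $(C_{\widehat\cS}+p_-C^-)/(1+p_-)\in\CPTP\cap\NS$, so $\gamma^*_{0,\rm NS}(\cI_d,\cI_{d'})\le 2(d'/d)^2-1$. For the lower bound, any denominator $\widehat\cS^-$ can be symmetrized without destroying $C_{\widehat\cS}+p_-\widehat\cS^-\succeq0$ (the average of positive operators is positive and $C_{\widehat\cS}$ is invariant), hence may be taken of the form $C^-_s$; positivity in the critical sector then forces $p_-\ge \mu dd'/\big(1-s(d'^2-1)\big)\ge(d'^2-d^2)/d^2$ for all admissible $s$, so $\gamma^*_{0,\rm NS}(\cI_d,\cI_{d'})\ge 2(d'/d)^2-1$. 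Together these yield the claim.

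\emph{Main obstacle.} The heart of the argument is the symmetry reduction: one must perform the bookkeeping with Choi operators and link products carefully enough to confirm that feasibility pins $\widehat\cS$ down to a unique symmetric operator, and then observe that the quasiprobability denominator can likewise be restricted to the (one-parameter) symmetric family. This turns the semidefinite program defining $c_{\rm smp}$ into a one-variable optimization controlled entirely by the single ``critical sector'' $\Pi_{d'}\otimes(\idop-\Pi_d)$, after which the matching of the two bounds is a direct computation.
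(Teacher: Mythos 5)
Your proof is correct, and it takes a genuinely different route from the paper's. The paper's argument works directly with the semidefinite program of Lemma~\ref{lemma:zero-error_simulation_noisy_channel} (obtained by twirling over $U(d)$ alone, since there the target channel is arbitrary): it exhibits an explicit primal feasible triple $(\bar V^+_{B'},\bar V^-_{B'},\bar R_{A'B'})$ achieving value $2(d'/d)^2-1$, derives the Lagrange dual SDP, and exhibits a matching dual feasible tuple $(\bar\lambda,\bar M,\bar N,\bar K)$. You instead exploit the \emph{full} $U(d)\times U(d')$ symmetry available because both the resource and the target are identity channels. Twirling over both factors collapses the Choi operator of $\widehat\cS$ into a four-dimensional commutant, and the four scalar constraints (two from exact simulation, $bd+e=0$ from $A\to B$ no-signaling, $cd'+e=0$ from $B\to A$ no-signaling) pin $\widehat\cS$ down to the unique point $e=1/(d^2-1)$, $b=-e/d$, $c=-e/d'$, $a=ed/d'$; the sector-wise eigenvalues $\{d'/d,\,-\mu,\,0,\,d/(d'(d^2-1))\}$ with $\mu=(d'^2-d^2)/(d'd(d^2-1))$ confirm there is exactly one negative sector, and the admissible symmetric denominators form the one-parameter family you describe. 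I verified your endpoint computation: at $s=-1/((d^2-1)(d'^2-1))$ the critical sector of $C_{\widehat\cS}+p_-C^-_s$ vanishes exactly at $p_-=(d'^2-d^2)/d^2$ and all other sectors are non-negative, and for any other admissible $s$ the positivity constraint forces a larger $p_-$. The two proofs trade off differently: the paper's explicit primal/dual certificates are shorter to verify but give little insight into why the answer has this form, whereas your symmetry reduction explains the answer structurally (the optimum is forced by a single representation-theoretic sector) and shows that the feasible numerator is unique, at the price of the bookkeeping you flag. One cosmetic remark: you write that ``two sectors collapse to exactly zero'' in the verification of the upper bound; in fact only the critical sector of $C_{\widehat\cS}+p_-C^-_s$ vanishes (the other zeros occur separately in $C^-_s$ at sector $1$ and in $C_{\widehat\cS}$ at sector $3$), but this does not affect the argument.
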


Theorem~\ref{thm:high_id_sim_cost} has two important consequences. The first is that  the shadow simulation cost of {\em every} quantum channel can be lowered to $1$ qubit by allowing a sufficiently large sampling overhead. Indeed, suppose that channel $\cM$ has zero-error quantum simulation cost  $\log_2 d'$, meaning that $\cM$ can be simulated using $\cI_{d'}$ through a quantum no-signaling code. In turn, Eq.~\eqref{eq:low_id_to_high_id_cost} implies that $\cI_{d'}$ can be shadow simulated using a qubit identity channel, with a sampling cost $(d'^2 - 2)/2$. Composing the two simulations, one gets a shadow simulation of channel $\cM$ via the identity channel $\cI_2$ with sampling cost $(d'^2 - 2)/2$.

Second, Theorem~\ref{thm:high_id_sim_cost} implies that every channel $\cN$ with positive zero-error shadow capacity $Q^{(1)}_{\gamma, \rm NS}$ for some $\gamma$ can shadow simulate every other channel $\cM$. Indeed, the positive capacity condition implies that $\cN$ can perfectly shadow simulate the identity channel $\cI_d$ for some dimension $d\ge 2$, while the argument in the previous paragraph implies that the identity channel $\cI_2$  (and therefore any identity channel $\cI_d$ with $d\ge 2$) can perfectly simulate any other quantum channel $\cM$.  Hence, the concatenation of these two simulations yields a perfect simulation of channel $\cM$ using channel $\cN$.

\textcolor{black}{\textbf{\emph{Approximate shadow simulation.---}}}
In the zero-error scenario, we have seen that the shadow capacity and shadow simulation cost coincide with the conventional capacity and simulation cost for $\gamma=1$. In stark contrast, we now show that in the approximate scenario, shadow simulation can sometime achieve lower error than conventional channel simulation even for   $\gamma=1$, that is, without incurring in any sampling overhead.

We consider three paradigmatic examples of quantum channels: the single-qubit amplitude damping channel $\cN_{\rm AD}$ with two Kraus operators $\proj{0} + \sqrt{p}\proj{1}$ and $\sqrt{1-p}\ketbra{0}{1}$, the single-qubit dephasing channel $\cN_{\rm deph}(\cdot) = p(\cdot) + (1-p){\rm diag}(\cdot)$, and the single-qubit depolarizing channel $\cN_{\rm depo}(\cdot) = p(\cdot) + (1-p)\tr[\cdot]\idop_2/2$.
Figure~\ref{fig:min_err} shows the minimum error (measured by diamond distance) for fixed $p=0.9$ and variable cost budgets ranging from $0.9$ to $1.2$.

Surprisingly, shadow simulation codes achieve a smaller error at the same cost, or even at a lower cost compared with quantum simulation codes (whose sampling cost is $1$). The difference is evident in the plot of noise simulation, where the minimum error of quantum codes is almost twice of the shadow simulation codes.
This implies that classical postprocessing can enhance the transmission of expectation values even if no sampling overhead is involved.
Technically, the smaller error arises from the fact that the virtual channels achievable by unit-cost shadow simulation protocols can be closer  to the target channel than all quantum channels achievable by conventional quantum channel simulation.

\begin{figure}[t]
    \centering
    \includegraphics[width=0.49\textwidth]{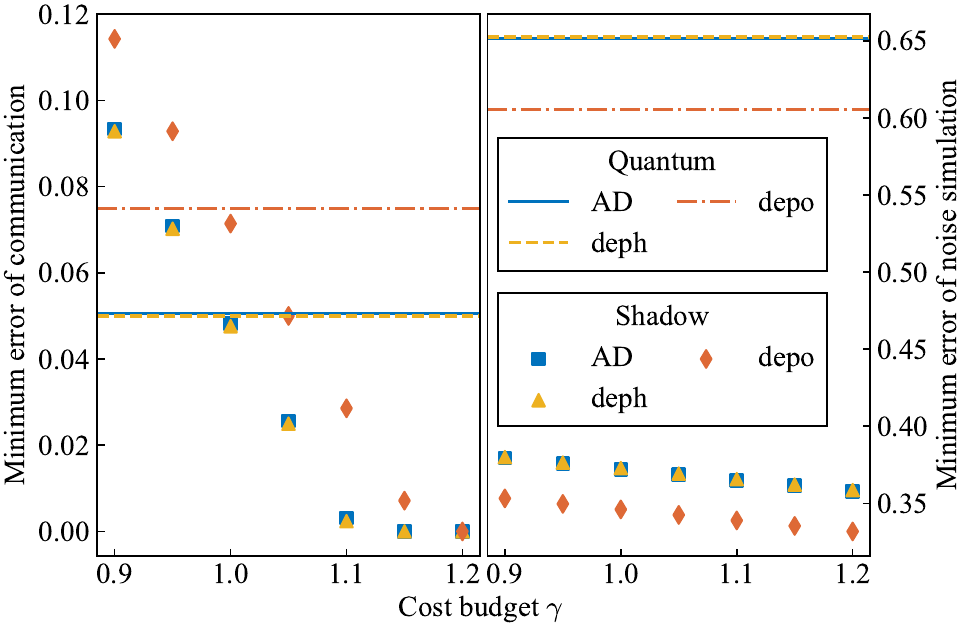}
    \caption{{\bf Approximate shadow communication and shadow simulation.} The two rectangles on the left and right sides of the figure provide the minimum  error  for the tasks of shadow communication (left)  and shadow simulation  (right), respectively. The error is plotted as a function of the cost budget $\gamma$ for three different quantum channels: the amplitude damping channel (AD), the depolarizing channel (depo), and the dephasing channel (deph). For shadow communication, the goal is to simulate a qubit identity channel using a qubit amplitude damping-depolarizing-dephasing channel with $p=0.9$. For shadow simulation, the goal is to simulate two independent uses of qubit amplitude damping-depolarizing-dephasing channel with $p=0.9$.
    The minimum error of conventional quantum channel simulation is indicated by horizontal lines, with the blue solid lines, red dash-dotted lines, and yellow dashed lines corresponding to amplitude damping, depolarizing, and dephasing channels, respectively.
    }
    \label{fig:min_err}
\end{figure}

\textcolor{black}{\textbf{\emph{Conclusions.---}}}
In this Letter, we introduced the task of shadow simulation of quantum channels, showing that transmitting and processing expectation values of arbitrary observables is generally less demanding than transmitting and processing quantum states. Besides their foundational interest, our results are relevant to practical applications to NISQ quantum technologies, as they provide more efficient schemes for measuring observables at the output of noisy quantum devices. An interesting direction of future research is to explore scenarios where only a given set of physically relevant observables is concerned~\cite{aaronson2018shadow, huang2020predicting}. Our results also open up a systematic way to study new quantum protocols that sample over different transformations of quantum processes, known as quantum supermaps~\cite{chiribella2008transforming, chiribella2009theoretical, chiribella2013quantum}.

\section{Acknowledgments}
We thank Yin Mo and Chengkai Zhu for useful comments that helped us improve the manuscript.
G. C. has been supported by the Hong Kong Research Grant Council through Grants No. 17307520, No. R7035-21F, and No. T45-406/23-R, by the Ministry of Science and Technology through Grant No. 2023ZD0300600,
and by the John Templeton Foundation through Grant 62312, The Quantum Information Structure of Spacetime.
The opinions expressed in this publication are those of the authors and do not necessarily reflect the views of the John Templeton Foundation. Research at the Perimeter Institute is supported by the Government of Canada through the Department of Innovation, Science and Economic Development Canada and by the Province of Ontario through the Ministry of Research, Innovation and Science.
X. W. was supported by the National Key R\&D Program of China (Grant No. 2024YFE0102500), the Guangdong Provincial Quantum Science Strategic Initiative (Grant No.  GDZX2303007), the Guangdong Provincial Key Lab of Integrated Communication, Sensing and Computation for Ubiquitous Internet of Things (Grant No. 2023B1212010007), the Start-up Fund (Grant No. G0101000151) from The Hong Kong University of Science and Technology (Guangzhou), the Quantum Science Center of Guangdong-Hong Kong-Macao Greater Bay Area, and the Education Bureau of Guangzhou Municipality.

\bibliography{references}

\appendix
\onecolumngrid

\section{Appendix A: Shadow Simulation Codes}\label{appsec:shadow_sim}
\renewcommand{\theequation}{A\arabic{equation}}
\setcounter{equation}{0}
\renewcommand{\theHequation}{A\arabic{equation}}
A general quantum supermap sending a quantum channel $\cN_{A\to B}$ to another quantum channel $\cM_{A'\to B'}$ can be realized by a quantum circuit where the input channel is inserted between an encoding channel $\cE_{A'\to AE}$ and a decoding channel $\cD_{EB\to B'}$ with a possible ancillary system $E$ connecting them~\cite{chiribella2008transforming}.
Every quantum supermap is associated with a bipartite quantum channel $\widehat S_{A'B \to AB'} \coloneqq (\cI_{A} \otimes \cD_{EB\to B'}  ) \circ (\cE_{A'\to AE} \otimes \cI_B)$, which is no-signaling from Bob to Alice, meaning that $\tr_{B'}[\widehat S  (\rho_{A'} \otimes \rho_B)]$ is independent of $\rho_B$, for every $\rho_{A'}$.

In the setting of shadow simulation, simulation codes are not restricted to quantum supermaps.
A shadow simulation code allows classical postprocessing so that its encoding and decoding parts do not have to be quantum channels. For example, one can consider the encoding operation $\widetilde{\cE}_{A'\to A} \coloneqq \sum_j \lambda_j \cE_{j, A'\to A}$ and the decoding operation $\widetilde{\cD}_{B\to B'} \coloneqq \sum_k \mu_k \cD_{k, B\to B'}$ with real coefficients $(\lambda_j)_j$ and $(\mu_k)_k$ and quantum channels $\lrp{\cE_{j, A'\to A}}_j$ and $\lrp{\cD_{k, B\to B'}}_k$.
The expectation value with respect to the state transmitted through a quantum channel $\cN_{A\to B}$ using this shadow simulation code $\lrp{\widetilde{\cE}_{A'\to A}, \widetilde{\cD}_{B\to B'}}$ can be decomposed as
\begin{align}
   \tr\left[\widetilde{\cD}_{B\to B'} \circ \cN_{A\to B} \circ \widetilde{\cE}_{A'\to A}(\rho_{RA'}) O_{RB'}\right] = \sum_{j,k} \lambda_j \mu_k \tr\left[\cD_{k, B\to B'} \circ \cN_{A\to B} \circ \cE_{j, A'\to A}(\rho_{RA'}) O_{RB'}\right].
\end{align}
Hence, although we cannot directly implement $\widetilde{\cE}_{A'\to A}$ and $\widetilde{\cD}_{B\to B'}$, we can simulate their effect by sending copies of the state using quantum simulation codes sampled from $\lrc{\lrp{\cE_{j, A'\to A}, \cD_{k, B\to B'}}}$ for multiple rounds and then postprocessing the measurement results from all rounds.

\begin{remark}
    Like how each quantum supermap is associated with a bipartite quantum channel, each virtual supermap is associated with a virtual channel. This is because a virtual supermap is essentially a linear combination of quantum supermaps, so its corresponding bipartite map is a linear combination of bipartite quantum channels, which is effectively a virtual channel.
\end{remark}

Note that the encoding operation $\widetilde{\cE}$ and the decoding operation $\widetilde{\cD}$ are both implemented with classical postprocessing, which only happens at Bob's side after Bob completes the measurement. Therefore, Bob needs information on Alice's sampled operation in each round to guide the postprocessing.
Hence, we consider simulation codes where Alice and Bob have pre-shared randomness, and such codes are realized by the following steps:
\begin{enumerate}
    \item Alice randomly samples one encoding channel $\cE_{j, A'\to A}$ from the set of channels $\left\{\cE_{j, A'\to A}\right\}$ with a probability distribution $\Pr\left(\cE_{j, A'\to A}\right) = |\lambda_j|/\gamma$, where $\gamma = \sum_{j} |\lambda_j|$, and applies the sampled channel to state $\rho_{RA'}$.
    \item Alice then sends the post-encoding state $(\cI_R \otimes \cE_{j, A'\to A})(\rho_{RA'})$ into the noisy channel $\cN_{A\to B}$.
    \item Upon receiving the state $(\cI_R \otimes \cN_{A\to B})\circ  (\cI_R \otimes \cE_{j, A'\to A})(\rho_{RA'})$ coming out of the noisy channel, Bob applies the decoding channel $\cD_{j, B\to B'}$ to the received state, where the value $j$ is known to Bob due to the classical randomness shared between him and Alice.
    \item Bob measures the decoded state $(\cI_R \otimes \cD_{j, B\to B'} \circ \cN_{A\to B} \circ \cE_{j, A'\to A})(\rho_{RA'})$ with an observable $O_{RB'}$, which gives a measurement outcome $o$.
    \item Repeat the above steps for $M$ times, and denote the index $j$ and the measurement outcome $o$ in the $m$-th round by $j_m$ and $o_m$, respectively. Then, compute the quantity $\xi \coloneqq \frac{\gamma}{M} \sum_{m=1}^M {\rm sign}(\lambda_{j_m})o_m$, where ${\rm sign}$ is the sign function.
\end{enumerate}
The quantity $\xi$ defined in the above protocol is an unbiased estimator for the expectation value $\tr\lrb{\lrp{\widetilde{\cS}\lrp{\cN}} (\rho) O}$, where $\widetilde{\cS}$ is the virtual supermap defined by
\begin{align}\label{appeq:virtual_supermap_action}
    \widetilde\cS(\cN)_{A'\to B'} = \sum_{j} \lambda_j \cD_{j, B\to B'} \circ \cN_{A\to B} \circ \cE_{j, A'\to A}.
\end{align}
Indeed, the expectation value of $\xi$ is
\begin{align}
    \mathbb{E}\lrb{\xi} &= \frac{\gamma}{M} \sum_{m=1}^M \mathbb{E}\lrb{{\rm sign}(\lambda_{j_m})o_m}\\
    &= \frac{\gamma}{M} \sum_{m=1}^M \sum_j \frac{\lrv{\lambda_j}}{\gamma} {\rm sign}(\lambda_j) \tr\lrb{\cD_{j} \circ \cN \circ \cE_j(\rho) O}\\
    &= \frac{1}{M} \sum_{m=1}^M \sum_{j} \lambda_j \tr\lrb{\cD_j \circ \cN \circ \cE_j(\rho) O}\\
    &= \tr\lrb{\sum_j \lambda_j \cD_j \circ \cN \circ \cE_j(\rho) O}\\
    &= \tr\lrb{\lrp{\widetilde\cS(\cN)} (\rho) O}.
\end{align}

\begin{remark}
    As related works, both Ref.~\cite{yuan2023virtual} and Ref.~\cite{liu2024virtual} discuss virtual distillation of quantum channels, which is a special case of our shadow simulation task.
    Ref.~\cite{yuan2023virtual} focuses mostly on static resources and, for virtual distillation of channels, only considers the scenario where Bob alone performs local virtual operations.
    Ref.~\cite{liu2024virtual} proposed a fixed circuit acting simultaneously on multiple copies of an unknown noisy quantum gate as a channel distillation protocol in the low-noise regime, while our approach involves converting a known channel into another known channel through sampling over different quantum circuits, and is not limited to the low-noise regime.
\end{remark}

Each supermap is associated with a bipartite map, and for a randomness-assisted shadow simulation code in Eq.~\eqref{appeq:virtual_supermap_action}, its corresponding bipartite map is
\begin{align}
    \widehat{\cS}_{A'B\to AB'} = \sum_j \lambda_j \cE_{j, A'\to A} \ox \cD_{j, B\to B'}.
\end{align}
It is clear that every randomness-assisted shadow simulation code is associated with a Hermitian-preserving bipartite map. This is because the Choi operator of the associated bipartite map is Hermitian, and a map is Hermitian-preserving if and only if its Choi operator is Hermitian.

In the following, we show that bipartite maps associated with randomness-assisted shadow simulation codes are no-signaling.
For a bipartite map $\widehat\cS_{A'B\to AB'}$, being no-signaling from Bob to Alice means that the output state at $A$ is independent of the input state at $B$, {\em i.e.}, $\tr_{B'} \circ \widehat\cS_{A'B\to AB'} = \widehat\cS_{A'\to A} \circ \tr_{B}$, where $\widehat\cS_{A'\to A}$ is the local effective operation of the map $\widehat\cS_{A'B\to AB'}$ from $A$ to $A'$.
By the \Choi isomorphism~\cite{jamiolkowski1972linear, choi1975completely}, we can uniquely represent $\widehat\cS_{A'B\to AB'}$ using its Choi operator $J^{\widehat\cS}_{A'BAB'} \coloneqq \cI_{A'B} \ox \widehat\cS_{\bar{A'}\bar{B}\to AB'} (\proj{\Gamma}_{A'B\bar{A'}\bar{B}})$,
where $\ket{\Gamma}_{A'B\bar{A'}\bar{B}} \coloneqq \sum_{j=0}^{d_{A'B}-1} \ket{j}_{A'B}\ket{j}_{\bar{A'}\bar{B}}$ is the unnormalized maximally entangled state, $d_{A'B}$ is the dimension of the system $A'B$, and the Hilbert space associated with the system $\bar{A'}\bar{B}$ is isomorphic to that of $A'B$.
In terms of the map's Choi operator, no-signaling from Bob to Alice means $\tr_{B'}\lrb{J^{\widehat\cS}_{A'BAB'}} = J^{\widehat\cS}_{A'A} \ox \idop_B$,
where $J^{\widehat\cS}_{A'A} \coloneqq \tr_{BB'}[J^{\widehat\cS}_{A'BAB'}] / d_B$.

Theorem~\ref{thm:ns_code} shows that randomness-assisted shadow simulation codes are equivalent to Hermitian-preserving no-signaling bipartite maps, which is both no-signaling from Bob to Alice and no-signaling from Alice to Bob.

\renewcommand\theproposition{\ref{thm:ns_code}}
\setcounter{proposition}{\arabic{proposition}-1}
\begin{theorem}
    A virtual supermap $\widetilde \cS$ is a randomness-assisted shadow simulation code if and only if the corresponding virtual channel $\widehat \cS$ is no-signaling.
\end{theorem}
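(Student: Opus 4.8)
The plan is to prove the two directions separately, with the forward direction (shadow simulation code $\Rightarrow$ no-signaling) already essentially established in the discussion preceding the statement. For that direction, I would recall that a randomness-assisted code has $\widehat\cS = \sum_j \lambda_j\, \cE_j \otimes \cD_j$ with each $\cE_j$ and $\cD_j$ a quantum channel (hence trace-preserving). No-signaling from Bob to Alice follows because $\tr_{B'}[\widehat\cS(\rho_{A'}\otimes\rho_B)] = \sum_j \lambda_j\, \cE_j(\rho_{A'})\,\tr[\cD_j(\rho_B)] = \bigl(\sum_j \lambda_j\, \cE_j(\rho_{A'})\bigr)\,\tr[\rho_B]$, which is manifestly independent of $\rho_B$; the symmetric computation with $\tr_A$ gives no-signaling from Alice to Bob. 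This is the easy half.

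The substance is the converse: given a Hermitian-preserving, no-signaling virtual channel $\widehat\cS_{A'B\to AB'}$, I must produce a decomposition $\widehat\cS = \sum_j \lambda_j\, \cE_j \otimes \cD_j$ with real $\lambda_j$ and genuine channels $\cE_j$, $\cD_j$. The natural route is a two-step argument. First, reduce to the CPTP no-signaling case: since $\widehat\cS$ is Hermitian-preserving, its Choi operator is Hermitian, so it can be written as $\widehat\cS = p_+ \widehat\cS^+ - p_- \widehat\cS^-$ with $p_\pm \geq 0$ and $\widehat\cS^\pm$ CPTP (e.g.\ using the Jordan decomposition of the Choi operator and normalizing, taking advantage of trace-scaling to fix trace-preservation; one must also check the no-signaling structure is inherited by the positive and negative parts, which follows because the no-signaling constraints are linear and can be imposed when splitting). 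Second, invoke the known structural result that every \emph{quantum} (CPTP) no-signaling bipartite channel $A'B\to AB'$ can be realized as a convex combination $\sum_k q_k\, \cE_k \otimes \cD_k$ of product channels --- this is the characterization of no-signaling operations as LOSR/shared-randomness operations, due to Refs.~\cite{chiribella2008quantum, gour2019comparison}. Combining, $\widehat\cS = \sum_k (p_+ q^+_k)\,\cE^+_k\otimes\cD^+_k - \sum_k (p_- q^-_k)\,\cE^-_k\otimes\cD^-_k$, which is exactly of the form $\sum_j \lambda_j\, \cE_j\otimes\cD_j$ with the $\lambda_j$ real; reading off $\cD_j\circ\cN\circ\cE_j$ term by term recovers the supermap $\widetilde\cS$ of the form~\eqref{supermap}, completing the proof.

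I would then translate the bipartite-channel decomposition back to the supermap picture: the link product / Choi correspondence between $\widehat\cS$ and $\widetilde\cS$ established in Appendix~A (the remark that a virtual supermap corresponds to a virtual channel) guarantees that $\widetilde\cS(\cN) = \sum_j \lambda_j\, \cD_j\circ\cN\circ\cE_j$ for all $\cN$, so $\widetilde\cS$ is indeed a randomness-assisted shadow simulation code in the sense of Eq.~\eqref{supermap}.

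The main obstacle is the second step of the converse --- showing that a CPTP no-signaling bipartite channel of the specific causal type here (no-signaling in \emph{both} directions, i.e.\ the non-localizable-looking but actually separable case) decomposes as a convex mixture of product channels. Care is needed because ``no-signaling'' channels in general need \emph{not} be localizable by shared randomness; the key point that makes it work in our setting is the product-form causal constraint coming from the $A'B\to AB'$ wiring (Alice holds $A',A$; Bob holds $B,B'$), for which the no-signaling conditions in both directions are equivalent to separability with respect to the Alice/Bob cut at the level of the Choi operator, and then the channel is realizable by shared randomness. I would either cite this precisely or, if a self-contained argument is wanted, derive it from the observation that the no-signaling constraints force $\widehat\cS$ to factor as an average of $\cE_k\otimes\cD_k$ via a Choi-operator argument analogous to the characterization of non-signaling bipartite channels in Ref.~\cite{chiribella2008quantum}.
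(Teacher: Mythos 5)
Your forward direction is correct and matches the paper's, but your converse has a genuine flaw at exactly the point you flag as needing care.

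The key error is in step~2: it is \emph{not} true that every CPTP no-signaling bipartite channel $A'B\to AB'$ (with Alice holding $A',A$ and Bob holding $B,B'$) decomposes as a convex mixture $\sum_k q_k\,\cE_k\otimes\cD_k$ of product channels. Two-way no-signaling does not imply realizability by local operations and shared randomness; the Popescu--Rohrlich box is the standard counterexample --- it is a classical channel, no-signaling in both directions, yet provably not reproducible by shared randomness, and correspondingly its Choi operator is not separable across the Alice/Bob cut. Your claim that ``the no-signaling conditions in both directions are equivalent to separability\dots at the level of the Choi operator'' is therefore false, and the references you cite characterize no-signaling operations, not LOSR ones; the two sets are strictly different. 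Because of this, your two-step reduction (split $\widehat\cS$ into CPTP no-signaling pieces, then convexly decompose each piece) cannot work: the second step simply does not exist as a theorem. (Step~1 is also glossed over --- the Jordan decomposition of the Choi operator does not automatically inherit the no-signaling marginal conditions, so ``can be imposed when splitting'' is not immediate --- but that is secondary.)

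The paper's proof sidesteps this entirely by staying at the level of \emph{real-linear} (not convex) decompositions. By a structural theorem of Gutoski (Theorem~14 of Ref.~\cite{gutoski2009properties}), the Choi operator of any Hermitian-preserving, two-way no-signaling map can be written as $J^{\widehat\cS}=\sum_j\lambda_j\,M^{(j)}_{A'A}\otimes N^{(j)}_{BB'}$ with real $\lambda_j$ and Hermitian $M^{(j)},N^{(j)}$ whose partial traces over $A$ and $B'$ respectively are proportional to identity --- i.e.\ they are Choi operators of HPTS (Hermitian-preserving trace-scaling) maps. Then Lemma~6 of Ref.~\cite{zhao2022information} says each HPTS map is a real linear combination of two CPTP maps, and expanding the products gives $\widehat\cS=\sum_j\lambda_j\,\cE_j\otimes\cD_j$ with genuine channels $\cE_j,\cD_j$. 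The crucial point is that the span of product channels (which is what real coefficients buy you) contains all two-way no-signaling Hermitian-preserving maps, whereas the convex hull of product channels does not contain all no-signaling CPTP channels. Your proof conflates these two sets.
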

\renewcommand{\theproposition}{\arabic{proposition}}
\begin{proof}
    For the ``only if'' part, let $\widehat\cS_{A'B\to AB'} = \sum_j \lambda_j \cM^{(j)}_{A'\to A} \ox \cN^{(j)}_{B\to B'}$ denote a bipartite map associated with an arbitrary randomness-assisted code, where $\cM^{(j)}_{A'\to A}$ and $\cN^{(j)}_{B\to B'}$ are quantum channels. It is clear that the Choi operator $J^{\widehat\cS}_{A'BAB'} = \sum_j \lambda_j J^{\cM^{(j)}}_{A'A} \ox J^{\cN^{(j)}}_{BB'}$ is Hermitian, indicating that $\widehat\cS_{A'B\to AB'}$ is Hermitian-preserving. Furthermore, as $\cM^{(j)}_{A'\to A}$ and $\cN^{(j)}_{B\to B'}$ are trace-preserving, {\em i.e.}, $\tr_A\lrb{J^{\cM^{(j)}}_{A'A}} = \idop_{A'}$ and $\tr_{B'}\lrb{J^{\cN^{(j)}}_{BB'}} = \idop_B$, we have
    \begin{align}
        \tr_A\lrb{J^{\widehat\cS}_{A'BAB'}} &= \sum_j \lambda_j \tr_A\lrb{J^{\cM^{(j)}}_{A'A}} \ox J^{\cN^{(j)}}_{BB'} = \idop_{A'} \ox \sum_j \lambda_j J^{\cN^{(j)}}_{BB'}
    \end{align}
    and
    \begin{align}
        \tr_{B'}\lrb{J^{\widehat\cS}_{A'BAB'}} &= \sum_j \lambda_j J^{\cM^{(j)}}_{A'A} \ox \tr_{B'}\lrb{J^{\cN^{(j)}}_{BB'}} = \sum_j \lambda_j J^{\cM^{(j)}}_{A'A} \ox \idop_B,
    \end{align}
    which imply that $\widehat\cS_{A'B\to AB'}$ is no-signaling.

    For the ``if'' part, we first assume that $\widehat\cS_{A'B\to AB'}$ is Hermitian-preserving and no-signaling. By Theorem 14 in Ref.~\cite{gutoski2009properties}, its Choi operator can be decomposed as
    \begin{align}
        J^{\widehat\cS}_{A'BAB'} = \sum_{j} \lambda_j M^{(j)}_{A'A} \ox N^{(j)}_{BB'},
    \end{align}
    where, for each $j$, $\lambda_j$ is a real number, and $M^{(j)}_{A'A}$ and $N^{(j)}_{BB'}$ are Hermitian operators such that $\tr_A\lrb{M^{(j)}_{A'A}}$ and $\tr_{B'}\lrb{N^{(j)}_{BB'}}$ are proportional to the identity operators $\idop_{A'}$ and $\idop_B$, respectively. In other words, we can treat $M^{(j)}_{A'A}$ and $N^{(j)}_{BB'}$ as Choi operators for some Hermitian-preserving and trace-scaling (HPTS) maps $\widetilde{\cM}^{(j)}_{A'\to A}$ and $\widetilde{\cN}^{(j)}_{B\to B'}$, respectively. Here, trace-scaling means that the map scales the trace of the input operator with a constant factor.

    According to Lemma 6 in Ref.~\cite{zhao2022information}, every HPTS map can be written as a linear combination of two quantum channels. Thus, we can write $\widehat\cS_{A'B\to AB'}$ as
    \begin{align}
        \widehat\cS_{A'B\to AB'} &= \sum_{j} \lambda_j \widetilde{\cM}^{(j)}_{A'\to A} \ox \widetilde{\cN}^{(j)}_{B\to B'} \\
        &= \sum_{j} \lambda_j \lrp{m_1\cM^{(j,1)}_{A'\to A} + m_2\cM^{(j,2)}_{A'\to A}} \ox \lrp{n_1\cN^{(j,1)}_{B\to B'} + n_2\cN^{(j,2)}_{B\to B'}}\\
        &= \sum_j\sum_{k=1}^2\sum_{l=1}^2 \lambda_j m_k n_l \cM^{(j,k)}_{A'\to A} \ox \cN^{(j,l)}_{B\to B'},
    \end{align}
    which represents a randomness-assisted shadow simulation code.
\end{proof}

Alternative to pre-shared classical randomness, forward classical communication from Alice to Bob also allows Bob to acquire information on Alice's local operation in each round.
A shadow simulation protocol with the assistance of forward classical communication is represented by a bipartite linear map
\begin{align}\label{eq:fc_code}
    \widehat{\cS}_{A'B\to AB'} = \sum_j \lambda_j \cM^{(j)}_{A'\to A} \ox \cN^{(j)}_{B\to B'},
\end{align}
where $\lrc{\cM^{(j)}_{A'\to A}}_j$ is a quantum instrument, $\lrc{\cN^{(j)}_{B\to B'}}_j$ is a collection of quantum channels, and each $\lambda_j$ is a real coefficient.
In the following proposition, we show that not only $\widehat{\cS}_{A'B\to AB'}$ is Hermitian-preserving and one-way no-signaling, but any one-way no-signaling Hermitian-preserving bipartite map represents a forward-classical-assisted shadow simulation code.
For completeness, we also consider shadow simulation codes assisted by two-way classical communication, where both $\lrc{\cM^{(j)}_{A'\to A}}_j$ and $\lrc{\cN^{(j)}_{B\to B'}}_j$ are quantum instruments. Such codes are equivalent to the set of all bipartite Hermitian-preserving maps.

\begin{theorem}\label{prop:classical_comm_code}
    Consider a bipartite linear map $\widehat\cS_{A'B\to AB'}$. It is Hermitian-preserving if and only if it corresponds to a shadow simulation code assisted by two-way classical communication. It is Hermitian-preserving and $B$-to-$A$ no-signaling if and only if it corresponds to a forward-classical-assisted shadow simulation code.
\end{theorem}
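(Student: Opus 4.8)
The plan is to check each ``if'' by inspecting the Choi operator and to construct the code from the Choi operator for each ``only if''. For a map of the claimed form $\widehat\cS_{A'B\to AB'}=\sum_j\lambda_j\cM^{(j)}_{A'\to A}\otimes\cN^{(j)}_{B\to B'}$, the Choi operator $J^{\widehat\cS}=\sum_j\lambda_j J^{\cM^{(j)}}_{A'A}\otimes J^{\cN^{(j)}}_{BB'}$ is a real combination of positive semidefinite operators, hence Hermitian, so every two-way-classical code is Hermitian-preserving; and if in addition the $\cN^{(j)}$ are channels, then $\tr_{B'}J^{\cN^{(j)}}_{BB'}=\idop_B$ gives $\tr_{B'}J^{\widehat\cS}=\big(\sum_j\lambda_j J^{\cM^{(j)}}_{A'A}\big)\otimes\idop_B$, i.e.\ $\widehat\cS$ is $B$-to-$A$ no-signaling. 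The content lies in the two converse constructions.

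For the two-way case, given an arbitrary Hermitian $J^{\widehat\cS}_{A'BAB'}$, I would expand it in Hilbert--Schmidt orthonormal bases of $\mathrm{Herm}(A'A)$ and $\mathrm{Herm}(BB')$ to get $J^{\widehat\cS}=\sum_j\lambda_j M^{(j)}_{A'A}\otimes N^{(j)}_{BB'}$ with real $\lambda_j$ and Hermitian factors, split each $M^{(j)}=M^{(j)}_+-M^{(j)}_-$ and $N^{(j)}=N^{(j)}_+-N^{(j)}_-$ into positive semidefinite parts, and regroup into $J^{\widehat\cS}=\sum_i\mu_i P^{(i)}_{A'A}\otimes Q^{(i)}_{BB'}$ with $P^{(i)},Q^{(i)}\succeq0$. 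It then remains to turn the $P^{(i)},Q^{(i)}$ into Choi operators of genuine instruments: rescale $P^{(i)}\mapsto P^{(i)}/T_A$ and $Q^{(i)}\mapsto Q^{(i)}/T_B$ with $T_A,T_B$ large enough that $\tfrac1{T_A}\sum_i\tr_A P^{(i)}\preceq\idop_{A'}$ and $\tfrac1{T_B}\sum_i\tr_{B'}Q^{(i)}\preceq\idop_B$, adjoin one extra index $0$ carrying the complementary Choi operators $P^{(0)}=\big(\idop_{A'}-\tfrac1{T_A}\sum_i\tr_A P^{(i)}\big)\otimes\idop_A/d_A$ and $Q^{(0)}$ (defined analogously) and coefficient $\lambda_0=0$, and absorb $T_AT_B$ into the remaining coefficients; this exhibits $\widehat\cS=\sum_{i\ge0}\lambda_i\,\cM^{(i)}_{A'\to A}\otimes\cN^{(i)}_{B\to B'}$ with $\{\cM^{(i)}\}$ and $\{\cN^{(i)}\}$ quantum instruments.

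For the forward-classical case the additional requirement is that Bob's maps be honest channels, and this is exactly where the no-signaling hypothesis must be used. The key step is a one-way analog of Theorem~14 of Ref.~\cite{gutoski2009properties}: a Hermitian $J^{\widehat\cS}$ is $B$-to-$A$ no-signaling if and only if it lies in $\mathrm{Herm}(A'A)\otimes\cV_B$, where $\cV_B=\{N_{BB'}\text{ Hermitian}:\tr_{B'}N\propto\idop_B\}$. I would prove this by writing the Hilbert--Schmidt projection $\Pi_B$ onto $\cV_B$ explicitly (subtract from $N$ the traceless part of $\tr_{B'}N$, tensored with $\idop_{B'}/d_{B'}$) and observing that $(\id\otimes(\id-\Pi_B))(J^{\widehat\cS})$ equals $\big(\tr_{B'}J^{\widehat\cS}-\tfrac1{d_B}(\tr_{BB'}J^{\widehat\cS})\otimes\idop_B\big)\otimes\idop_{B'}/d_{B'}$, which vanishes precisely when $\tr_{B'}J^{\widehat\cS}$ has the no-signaling form $X_{A'A}\otimes\idop_B$; hence the no-signaling maps are exactly the fixed points of $\id\otimes\Pi_B$, yielding $J^{\widehat\cS}=\sum_j\lambda_j M^{(j)}_{A'A}\otimes N^{(j)}_{BB'}$ with each $N^{(j)}\in\cV_B$. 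Each such $N^{(j)}$ is the Choi operator of a Hermitian-preserving trace-scaling map on Bob's side, so by Lemma~6 of Ref.~\cite{zhao2022information} it is a real combination of Choi operators of two channels; substituting, and splitting each $M^{(j)}$ into positive semidefinite parts, gives $J^{\widehat\cS}=\sum_i\mu_i P^{(i)}_{A'A}\otimes J^{\cN^{(i)}}_{BB'}$ with $P^{(i)}\succeq0$ and $\cN^{(i)}$ channels, and applying the rescale-and-append trick on Alice's side only --- Bob's channels are untouched, preserving no-signaling --- produces the forward-classical code.

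I expect the main obstacle to be the $B$-to-$A$ no-signaling structure lemma above; once the explicit projection $\Pi_B$ is in hand the rest is bookkeeping, with the only points requiring care being that the rescale-and-append step genuinely yields an instrument (all terms positive semidefinite, $A$-marginals summing to $\idop_{A'}$) rather than a mere family of completely positive maps, and that the appended index, carrying coefficient zero, does not perturb the reconstructed $J^{\widehat\cS}$.
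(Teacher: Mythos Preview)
Your proposal is correct and follows essentially the same route as the paper: decompose the Choi operator into Hermitian tensor factors, use the no-signaling condition to force Bob's factors to be trace-scaling (the paper does this via dual functionals $h^{(j)}$ on a basis of $\mathrm{Herm}(A'A)$ rather than your projection $\Pi_B$, but the content is identical), invoke Lemma~6 of Ref.~\cite{zhao2022information} to turn Bob's factors into channels, split Alice's factors into positive semidefinite parts, and complete to an instrument. The only cosmetic difference is that the paper completes the instrument by appending two copies of a filler map with coefficients $+1$ and $-1$, while you append a single filler with coefficient $0$; both work.
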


This theorem tells us that shadow simulation codes with one-way classical communication are powerful enough to simulate arbitrary quantum channels and even beyond. An intuitive explanation is that Alice can measure the initial state $\rho$ with an informationally complete POVM and communicate the measurement outcomes to Bob so that Bob is able to reconstruct the expectation value of every observable on $\rho$ transformed by any channel. Now, we prove this theorem by proving the following two lemmas.

\begin{lemma}\label{lemma:fc_code}
    A bipartite linear map $\widehat\cS_{A'B\to AB'}$ is Hermitian-preserving and $B$-to-$A$ no-signaling if and only if it corresponds to a forward-classical-assisted shadow simulation code.
\end{lemma}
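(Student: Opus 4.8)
The plan is to run the whole argument at the level of Choi operators, exactly parallel to the proof of Theorem~\ref{thm:ns_code}, using the two facts recalled just before the statement: $\widehat\cS_{A'B\to AB'}$ is Hermitian-preserving iff its Choi operator $J^{\widehat\cS}_{A'BAB'}$ is Hermitian, and $\widehat\cS$ is $B$-to-$A$ no-signaling iff $\tr_{B'}[J^{\widehat\cS}_{A'BAB'}] = \sigma_{A'A}\otimes\idop_B$ with $\sigma_{A'A}\coloneqq\tr_{BB'}[J^{\widehat\cS}_{A'BAB'}]/d_B$. The ``only if'' direction is immediate: for a forward-classical-assisted code $\widehat\cS = \sum_j\lambda_j\cM^{(j)}_{A'\to A}\otimes\cN^{(j)}_{B\to B'}$ one has $J^{\widehat\cS} = \sum_j\lambda_j J^{\cM^{(j)}}_{A'A}\otimes J^{\cN^{(j)}}_{BB'}$, a real combination of positive operators (hence Hermitian), and $\tr_{B'}[J^{\widehat\cS}] = (\sum_j\lambda_j J^{\cM^{(j)}}_{A'A})\otimes\idop_B$ because each $\cN^{(j)}$ is trace-preserving; note only trace-preservation of the $\cN^{(j)}$'s, not the full instrument property, is used here.

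For the ``if'' direction I would first decompose $J^{\widehat\cS}$ in a basis adapted to the one-sided marginal constraint. Fix a full-rank state $\tau_{B'}$ and a Hermitian basis $\{H_a\}_{a=0}^{d_B^2-1}$ of the operators on $B$ with $H_0=\idop_B$, and complete $\{H_a\otimes\tau_{B'}\}_a$ to a basis of the Hermitian operators on $BB'$ by adding a basis $\{K_b\}_b$ of the subspace $\{K : \tr_{B'}[K]=0\}$. Writing $J^{\widehat\cS} = \sum_a P_a\otimes(H_a\otimes\tau_{B'}) + \sum_b R_b\otimes K_b$ with $P_a,R_b$ Hermitian on $A'A$ and applying $\tr_{B'}$, the no-signaling condition forces $P_0=\sigma_{A'A}$ and $P_a=0$ for $a\geq1$, so $J^{\widehat\cS} = \sigma_{A'A}\otimes J^{\cN_0}_{BB'} + \sum_b R_b\otimes K_b$, where $\cN_0\colon X\mapsto\tr[X]\tau_{B'}$ is a channel. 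Each $K_b$ is traceless on $B'$, hence $K_b = \tfrac{1}{2\epsilon}(J^{\cN_b^+}_{BB'}-J^{\cN_b^-}_{BB'})$ for small $\epsilon>0$, where $\cN_b^\pm$ are the channels with Choi operators $J^{\cN_0'}_{BB'}\pm\epsilon K_b$ for any reference channel $\cN_0'$ with full-rank Choi operator. Splitting each Hermitian $\sigma_{A'A}$ and $R_b$ into positive and negative parts then yields $J^{\widehat\cS} = \sum_t c_t\,S^{(t)}_{A'A}\otimes J^{\cN^{(t)}}_{BB'}$ with $c_t\in\RR$, $S^{(t)}\geq0$, and $\cN^{(t)}$ channels. (Equivalently one may invoke the one-directional analogue of Theorem~14 of Ref.~\cite{gutoski2009properties} together with Lemma~6 of Ref.~\cite{zhao2022information}, exactly as in the proof of Theorem~\ref{thm:ns_code}; the basis computation above is just an explicit version adapted to the $B$-to-$A$ marginal.)

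The genuinely new step, and the one I expect to need the most care, is to promote the positive operators $S^{(t)}$ to the Choi operators of an actual quantum instrument rather than merely an unnormalized family of CP maps — this is what distinguishes a forward-classical-assisted code from the randomness-assisted codes of Theorem~\ref{thm:ns_code}. I would handle it by a rescaling-plus-dummy-outcome trick: choose $\alpha>0$ small enough that $\alpha\sum_t\tr_A[S^{(t)}]\leq\idop_{A'}$, let $\cM'^{(t)}$ be the CP map with Choi operator $\alpha S^{(t)}$, and adjoin one extra outcome $\cM'^{(\star)}\colon X\mapsto\tr[(\idop_{A'}-\alpha\sum_t\tr_A[S^{(t)}])X]\,\omega_A$ for a fixed state $\omega_A$. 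Then $\{\cM'^{(t)}\}_t\cup\{\cM'^{(\star)}\}$ is a valid instrument, while $\widehat\cS = \sum_t(c_t/\alpha)\,\cM'^{(t)}\otimes\cN^{(t)} + 0\cdot\cM'^{(\star)}\otimes\cN^{(\star)}$ (with $\cN^{(\star)}$ any channel) exhibits $\widehat\cS$ as a forward-classical-assisted shadow simulation code, with coefficients $\lambda_t=c_t/\alpha$ and $\lambda_\star=0$. This closes the equivalence; the analogous two-instrument statement of Theorem~\ref{prop:classical_comm_code} then follows by the same scheme without imposing the $B$-to-$A$ marginal constraint (so that the $\cN^{(j)}$ may also form an instrument), which I would record as the companion Lemma.
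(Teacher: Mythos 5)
Your proof is correct and establishes the lemma, but it differs from the paper's in several respects that are worth noting. The paper expands $J^{\widehat\cS}$ in a fixed Hermitian basis $\{M^{(j)}\}$ of $A'A$, then uses a dual-basis argument (adapted from Theorem~14 of Gutoski) together with the no-signaling condition to show that each partner operator $N^{(j)}_{BB'}$ has $\tr_{B'}[N^{(j)}]\propto\idop_B$, and finally invokes Lemma~6 of Ref.~\cite{zhao2022information} to split the resulting HPTS maps into two-channel combinations. You instead build the basis on the $BB'$ side directly — $\{H_a\otimes\tau_{B'}\}_a$ completed by a basis $\{K_b\}$ of the $\tr_{B'}$-traceless subspace — so that the no-signaling condition simply kills all the $P_a$ with $a\geq1$ in one stroke, and you replace the cited lemma by an explicit $\epsilon$-perturbation argument writing each traceless $K_b$ as $\tfrac{1}{2\epsilon}(J^{\cN_b^+}-J^{\cN_b^-})$ around a full-rank reference channel. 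That is a genuine, self-contained alternative. On the instrument-normalization step (which you rightly identify as the crux distinguishing this lemma from Theorem~\ref{thm:ns_code}), your single global rescaling by $\alpha$ followed by one dummy outcome with $\lambda_\star=0$ is cleaner than the paper's per-term scaling to $\sum_{j,k}\tr_A[M^{(j,k)}]\leq\idop_{A'}/2$ and the appended pair $\cM'\ox\cN'-\cM'\ox\cN'$ with coefficients $\pm1$.

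One small bug to fix: the map $X\mapsto\tr[QX]\,\omega_A$ with $Q\coloneqq\idop_{A'}-\alpha\sum_t\tr_A[S^{(t)}]$ has Choi operator $Q^T\otimes\omega_A$, so $\tr_A$ of its Choi is $Q^T$, not $Q$, and the instrument normalization $\sum_t\tr_A[J^{\cM'^{(t)}}]+\tr_A[J^{\cM'^{(\star)}}]=\idop_{A'}$ only closes if $Q=Q^T$. The fix is trivial — define $\cM'^{(\star)}$ by its Choi operator $J^{\cM'^{(\star)}}\coloneqq Q\otimes\omega_A$ (equivalently, use $X\mapsto\tr[Q^T X]\,\omega_A$) — but as written the normalization does not quite balance. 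With that repaired, the argument is complete and your closing remark that the two-instrument version (Lemma~\ref{lemma:2wc_code}) follows by dropping the $B$-to-$A$ marginal constraint is accurate.
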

\begin{proof}
    The ``if'' direction is straightforward. Let $\widehat\cS_{A'B\to AB'} = \sum_j \lambda_j \cM^{(j)}_{A'\to A} \ox \cN^{(j)}_{B\to B'}$ represent a forward-classical-assisted shadow simulation code, where each $\lrc{\cM^{(j)}_{A'\to A}}_j$ is a quantum instrument, and each $\cN^{(j)}_{B\to B'}$ is a quantum channel. Then, the Choi operator of $\widehat\cS_{A'B\to AB'}$ is
    \begin{align}
        J^{\widehat\cS}_{A'BAB'} = \sum_j \lambda_j J^{\cM^{(j)}}_{A'A} \ox J^{\cN^{(j)}}_{BB'}.
    \end{align}
    Clearly, $J^{\widehat\cS}_{A'BAB'}$ is a Hermitian operator, indicating that $\widehat\cS_{A'B\to AB'}$ is Hermitian-preserving. In addition, $\widehat\cS_{A'B\to AB'}$ is no-signaling from Bob to Alice as
    \begin{align}
        \tr_{B'}\lrb{J^{\widehat\cS}_{A'BAB'}} = \sum_j \lambda_j J^{\cM^{(j)}}_{A'A} \ox \tr_{B'}\lrb{J^{\cN^{(j)}}_{BB'}} = \sum_j \lambda_j J^{\cM^{(j)}}_{A'A} \ox \idop_{B}
    \end{align}
    due to $\cN$ being trace-preserving.

    For the ``only if'' direction, part of the proof is adapted from the proof of Theorem 14 in Ref.~\cite{gutoski2009properties}. Let $\widehat\cS_{A'B\to AB'}$ be a Hermitian-preserving supermap and $\lrc{M^{(1)}_{A'A}, \dots, M^{(D)}_{A'A}}$ be a basis for the Hermitian operator space ${\rm Herm}\lrp{A'A}$ on the system $A'A$, where $D$ is the dimension of this space. Then, there exists a unique set of Hermitian operators $\lrc{N^{(1)}_{BB'}, \dots, N^{(D)}_{BB'}} \subseteq {\rm Herm}\lrp{BB'}$ such that $J^{\widehat\cS}_{A'BAB'} = \sum_{j=1}^D M^{(j)}_{A'A} \ox N^{(j)}_{BB'}$.
    For each $j$, let $H^{(j)}_{A'A}$ be a Hermitian operator such that $\tr\lrb{H^{(j)}_{A'A} M^{(k)}_{A'A}} \neq 0$ if and only if $j = k$. Denoting the mapping $M_{A'A} \mapsto \tr\lrb{H^{(j)}_{A'A} M^{(k)}_{A'A}}$ by $h^{(j)}(M_{A'A})$, we have
    \begin{align}
        \lrp{h^{(j)} \ox \cI_{BB'}} \lrp{J^{\widehat\cS}_{A'BAB'}} = \sum_{k=1}^D h^{(j)}\lrp{M^{(k)}_{A'A}} \ox N^{(k)}_{BB'} = h^{(j)}\lrp{M^{(j)}_{A'A}} N^{(j)}_{BB'}.
    \end{align}
    Because $\widehat\cS_{A'B\to AB'}$ is no-signaling from Bob to Alice, we have $\tr_{B'}\lrb{J^{\widehat\cS}_{A'BAB'}} = J^{\widehat\cS}_{A'A} \ox \idop_B$. Then,
    \begin{align}\label{eq:fc-assisted-proof_prop-to-identity}
        \lrp{h^{(j)} \ox \cI_{BB'}} \tr_{B'}\lrb{J^{\widehat\cS}_{A'BAB'}} = h^{(j)} \lrp{J^{\widehat\cS}_{A'A}} \ox \idop_B = h^{(j)}\lrp{M^{(j)}_{A'A}} \tr_{B'}\lrb{N^{(j)}_{BB'}},
    \end{align}
    where the second equality holds because the order of applying $\lrp{h^{(j)} \ox \cI_{BB'}}$ and $\tr_{B'}$ does not affect the result as they act on different subspaces.
    
    It follows from Eq.~\eqref{eq:fc-assisted-proof_prop-to-identity} that, for each $j$, $\tr_{B'}\lrb{N^{(j)}_{BB'}}$ is proportional to the identity operator $\idop_B$, and thus it serves as a Choi operator of an HPTS map, which we denote by $\widetilde{\cN}^{(j)}_{B\to B'}$. According to Lemma 6 in Ref.~\cite{zhao2022information}, each $\widetilde{\cN}^{(j)}_{B\to B'}$ can be written as a linear combination of two quantum channels, {\em i.e.},
    \begin{align}
        \widetilde{\cN}^{(j)}_{B\to B'} = n_{j,1} \cN^{(j,1)}_{B\to B'} + n_{j,2} \cN^{(j,2)}_{B\to B'},
    \end{align}
    where $n_{j,1}$ and $n_{j,2}$ are real numbers and $\cN^{(j,1)}_{B\to B'}$ and $\cN^{(j,2)}_{B\to B'}$ are quantum channels.

    For each Hermitian operator $M^{(j)}_{A'A}$, we can write it as the difference of two positive semidefinite operators, say, $m_{j,1}M^{(j,1)}_{A'A} - m_{j,2}M^{(j,2)}_{A'A}$, where $m_{j,1}M^{(j,1)}_{A'A}$ and $m_{j,2}M^{(j,2)}_{A'A}$ are positive semidefinite and $m_{j,1}$ and $m_{j,2}$ are positive real numbers so that $\tr_{A}\lrb{M^{(j,1)}_{A'A}} \leq \idop_{A'}$ and $\tr_{A}\lrb{M^{(j,2)}_{A'A}} \leq \idop_{A'}$. In other words, both $M^{(j,1)}_{A'A}$ and $M^{(j,2)}_{A'A}$ are Choi operators of some completely positive and trace-non-increasing (CPTN) maps, say $\cM^{(j,1)}_{A'\to A}$ and $\cM^{(j,2)}_{A'\to A}$, respectively. Moreover, the scalars $m_{j,1}$ and $m_{j,2}$ should be chosen so that $\sum_{j=1}^D \sum_{k=1}^2 \tr_{A}\lrb{M^{(j,k)}_{A'A}} \leq \idop_{A'} / 2$. We will see the reason of this requirement later.

    Combining the decomposition of every $M^{(j)}_{A'A}$ and every $N^{(j)}_{BB'}$, we have
    \begin{align}
        \cS_{A'B\to AB'} &= \sum_{j=1}^D \lrp{m_{j,1}\cM^{(j,1)}_{A'\to A} - m_{j,2}\cM^{(j,2)}_{A'\to A}} \ox \lrp{n_{j,1}\cN^{(j,1)}_{B\to B'} + n_{j,2}\cN^{(j,2)}_{B\to B'}}\\
        &= \sum_{j=1}^D \Big(m_{j,1}n_{j,1}\cM^{(j,1)}_{A'\to A} \ox \cN^{(j,1)}_{B\to B'} + m_{j,1}n_{j,2}\cM^{(j,1)}_{A'\to A} \ox \cN^{(j,2)}_{B\to B'} \nonumber\\
        &\qquad - m_{j,2}n_{j,1}\cM^{(j,2)}_{A'\to A} \ox \cN^{(j,1)}_{B\to B'} - m_{j,2}n_{j,2}\cM^{(j,2)}_{A'\to A} \ox \cN^{(j,2)}_{B\to B'}\Big)\\
        &= \sum_{j=1}^{4D} \lambda_j \cM^{(j)}_{A'\to A} \ox \cN^{(j)}_{B\to B'}
    \end{align}
    with appropriate relabeling, where each $\lambda_j \in \lrc{m_{j,1}n_{j,1}, m_{j,1}n_{j,2}, - m_{j,2}n_{j,1}, - m_{j,2}n_{j,2}}_j$ is a real number, each $\cM^{(j)}_{A'\to A} \in \lrc{\cM^{(j,1)}_{A'\to A}, \cM^{(j,2)}_{A'\to A}}_j$ is a CPTN map, and each $\cN^{(j)}_{B\to B'} \in \lrc{\cN^{(j,1)}_{B\to B'}, \cN^{(j,2)}_{B\to B'}}_j$ is a quantum channel. Note that
    \begin{align}
        \sum_{j=1}^{4D} \tr_{A}\lrb{J^{\cM^{(j)}}_{A'A}} = 2 \sum_{j=1}^D \sum_{k=1}^2 \tr_{A}\lrb{J^{\cM^{(j,k)}}_{A'A}} \leq \idop_{A'}
    \end{align}
    due to our choice of scalars $\lrc{m_{j,1}, m_{j,2}}_j$. Let $\cM'_{A'\to A}$ be a CPTN map such that $\sum_{j=1}^{4D} \tr_{A}\lrb{J^{\cM^{(j)}}_{A'A}} + 2\tr_{A}\lrb{J^{\cM'}_{A'A}} = \idop_{A'}$. Then, $\lrc{\cM^{(1)}, \dots, \cM^{(4D)}, \cM', \cM'}$ is a quantum instrument and
    \begin{align}
        \widehat\cS_{A'B\to AB'} = \sum_{j=1}^{4D} \lambda_j \cM^{(j)}_{A'\to A} \ox \cN^{(j)}_{B\to B'} + \cM'_{A'\to A} \ox \cN'_{B\to B'} - \cM'_{A'\to A} \ox \cN'_{B\to B'}
    \end{align}
    for any quantum channel $\cN'_{B\to B'}$. Therefore, any bipartite linear map $\cS_{A'B\to AB'}$ that is Hermitian-preserving and $B$-to-$A$ no-signaling represents a forward-classical-assisted shadow simulation code.
\end{proof}

\begin{lemma}\label{lemma:2wc_code}
    A bipartite linear map $\widehat\cS_{A'B\to AB'}$ is Hermitian-preserving if and only if it corresponds to a shadow simulation code assisted by two-way classical communication.
\end{lemma}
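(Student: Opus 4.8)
The plan is to follow the template of the proof of Lemma~\ref{lemma:fc_code}, with two changes: there is no longer a no-signaling constraint to impose, and in return \emph{both} local collections, not just the encoding one, must be upgraded to genuine quantum instruments. The ``if'' direction is immediate: if $\widehat\cS_{A'B\to AB'}=\sum_j\lambda_j\,\cM^{(j)}_{A'\to A}\otimes\cN^{(j)}_{B\to B'}$ with $\lambda_j\in\RR$ and $\{\cM^{(j)}\}_j$, $\{\cN^{(j)}\}_j$ quantum instruments, then its Choi operator $J^{\widehat\cS}_{A'BAB'}=\sum_j\lambda_j\,J^{\cM^{(j)}}_{A'A}\otimes J^{\cN^{(j)}}_{BB'}$ is a real linear combination of Hermitian operators, hence Hermitian, so $\widehat\cS$ is Hermitian-preserving by the Choi isomorphism.

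For the ``only if'' direction, suppose $\widehat\cS_{A'B\to AB'}$ is Hermitian-preserving, equivalently $J^{\widehat\cS}_{A'BAB'}$ is Hermitian. First I would fix a basis $\{M^{(j)}_{A'A}\}_{j=1}^D$ of the real vector space $\mathrm{Herm}(A'A)$; since this is also a complex basis of $\mathcal L(A'A)$, there is a unique family $\{N^{(j)}_{BB'}\}_{j=1}^D$ with $J^{\widehat\cS}_{A'BAB'}=\sum_j M^{(j)}_{A'A}\otimes N^{(j)}_{BB'}$, and the dual-basis argument used in the proof of Lemma~\ref{lemma:fc_code} (apply the adjoint, then invoke uniqueness) forces every $N^{(j)}_{BB'}$ to be Hermitian. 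Because no no-signaling hypothesis is available here, the $N^{(j)}_{BB'}$ carry no trace constraint, so unlike in Lemma~\ref{lemma:fc_code} I would not invoke the HPTS-to-channels decomposition (Lemma~6 of Ref.~\cite{zhao2022information}); instead I would split \emph{both} tensor factors directly into positive parts.

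Next I would write each $M^{(j)}_{A'A}$ and each $N^{(j)}_{BB'}$ as a difference of two positive semidefinite operators (from the spectral decomposition) and rescale the pieces by positive constants so that each becomes the Choi operator of a completely positive trace-non-increasing (CPTN) map, choosing the normalization, exactly as in the proof of Lemma~\ref{lemma:fc_code}, so that the total of the rescaled $A'A$-pieces is $\preceq\idop_{A'}/4$ and the total of the rescaled $BB'$-pieces is $\preceq\idop_B/4$. Expanding the product $J^{\widehat\cS}_{A'BAB'}=\sum_j(m_{j,1}M^{(j,1)}_{A'A}-m_{j,2}M^{(j,2)}_{A'A})\otimes(n_{j,1}N^{(j,1)}_{BB'}-n_{j,2}N^{(j,2)}_{BB'})$ and relabelling then gives $\widehat\cS_{A'B\to AB'}=\sum_{k=1}^{4D}\lambda_k\,\cM^{(k)}_{A'\to A}\otimes\cN^{(k)}_{B\to B'}$ with $\lambda_k\in\RR$ and all $\cM^{(k)}$, $\cN^{(k)}$ CPTN maps; since each distinct $A'A$-piece (resp.\ $BB'$-piece) appears in exactly two of the $4D$ terms, $\sum_k\tr_A J^{\cM^{(k)}}_{A'A}\preceq\idop_{A'}/2$ and $\sum_k\tr_{B'}J^{\cN^{(k)}}_{BB'}\preceq\idop_B/2$.

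Finally I would complete both marginals in one stroke: choose CPTN maps $\cM'_{A'\to A}$ and $\cN'_{B\to B'}$ with Choi operators satisfying $2\tr_A J^{\cM'}_{A'A}=\idop_{A'}-\sum_k\tr_A J^{\cM^{(k)}}_{A'A}$ and $2\tr_{B'}J^{\cN'}_{BB'}=\idop_B-\sum_k\tr_{B'}J^{\cN^{(k)}}_{BB'}$ (both right-hand sides are positive semidefinite by the previous bounds, so such maps exist), and append the cancelling pair $+\,\cM'_{A'\to A}\otimes\cN'_{B\to B'}$ and $-\,\cM'_{A'\to A}\otimes\cN'_{B\to B'}$. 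This leaves $\widehat\cS$ and $J^{\widehat\cS}$ unchanged, while the augmented multisets $\{\cM^{(1)},\dots,\cM^{(4D)},\cM',\cM'\}$ and $\{\cN^{(1)},\dots,\cN^{(4D)},\cN',\cN'\}$ are now trace-preserving in total, i.e.\ quantum instruments; hence $\widehat\cS_{A'B\to AB'}$ is a shadow simulation code assisted by two-way classical communication. The only genuinely new point relative to Lemma~\ref{lemma:fc_code} is that a single cancelling pair must simultaneously fix both marginals; because the pair is a product of two maps, its two factors feed into the two marginal sums independently, and the $\preceq\idop/2$ bounds guarantee slack on both sides at once, so I expect no real obstacle beyond this normalization bookkeeping.
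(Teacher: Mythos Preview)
Your proposal is correct and follows essentially the same route as the paper's own proof: expand $J^{\widehat\cS}$ in a Hermitian basis on $A'A$, split both tensor factors into differences of positive semidefinite pieces rescaled to CPTN Choi operators, expand to $4D$ product terms, and then append a single cancelling pair $\pm\,\cM'\otimes\cN'$ whose two factors independently complete the $A'$ and $B$ marginals to quantum instruments. The only difference is cosmetic: you track the explicit bound $\preceq\idop/4$ on the rescaled pieces (so that the doubled sum is $\preceq\idop/2$), whereas the paper simply asserts that the coefficients can be taken ``large enough'' for both sums to be trace-non-increasing.
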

\begin{proof}
    The ``if'' part can be directly verified by checking that the Choi operator of a bipartite map $\sum_j \lambda_j \cM^{(j)}_{A'\to A} \ox \cN^{(j)}_{B\to B'}$ is Hermitian, where both $\lrc{\cM^{(j)}_{A'\to A}}_j$ and $\lrc{\cN^{(j)}_{B\to B'}}_j$ are quantum instruments.

    For the ``only if'' part, we follow the proof of Lemma~\ref{lemma:fc_code} to write the Choi operator of a bipartite Hermitian-preserving map $\widehat\cS_{A'B\to AB'}$ as $J^{\widehat\cS}_{A'BAB'} = \sum_{j=1}^D M^{(j)}_{A'A} \ox N^{(j)}_{BB'}$, where $M^{(j)}_{A'A}$ and $N^{(j)}_{BB'}$ are Hermitian operators.
    Each $M^{(j)}_{A'A}$ or $N^{(j)}_{BB'}$ can be written as the difference of two positive semidefinite operators. We write each $M^{(j)}_{A'A}$ as $m_{j,+}M^{(j,+)}_{A'A} - m_{j,-}M^{(j,-)}_{A'A}$ and each $N^{(j)}_{BB'}$ as $n_{j,+}N^{(j,+)}_{BB'} - n_{j,-}N^{(j,-)}_{BB'}$, where $m_{j,\pm}M^{(j,1)}_{A'A}, n_{j,2}N^{(j,\pm)}_{BB'}$ are positive semidefinite and $m_{j,\pm}, n_{j,\pm}$ are positive real numbers that will be fixed later. The Choi operator $J^{\widehat\cS}_{A'BAB'}$ now can be written as
    \begin{align}
        J^{\widehat\cS}_{A'BAB'} &= \sum_{j=1}^D \lrp{m_{j,+}M^{(j,+)}_{A'A} - m_{j,-}M^{(j,-)}_{A'A}} \ox \lrp{n_{j,+}N^{(j,+)}_{BB'} - n_{j,-}N^{(j,-)}_{BB'}}\\
        &= \sum_{j=1}^D \sum_{k \in \lrc{+,-}} \sum_{l \in \lrc{+,-}} (-1)^{1 - \delta_{k,l}} m_{j,k}n_{j,l} M^{(j,k)}_{A'A} \ox N^{(j,l)}_{BB'},
    \end{align}
    where $\delta_{k,l} = 1$ if $k = l$ and $\delta_{k,l} = 0$ otherwise.
    Because $M^{(j,\pm)}_{A'A}$ and $N^{(j,\pm)}_{BB'}$ are positive semidefinite operators, they can be treated as Choi operators of completely positive maps, say, $\cM^{(j,\pm)}_{A'\to A}$ and $\cN^{(j,\pm)}_{B\to B'}$.
    Then, we can write $\widehat\cS_{A'B\to AB'}$ as
    \begin{align}
        \widehat\cS_{A'B\to AB'} &= \sum_{j=1}^D \sum_{k \in \lrc{+,-}} \sum_{l \in \lrc{+,-}} (-1)^{1 - \delta_{k,l}} m_{j,k}n_{j,l} \cM^{(j,k)}_{A'\to A} \ox \cN^{(j,l)}_{B\to B'}\\
        &= \sum_{j=1}^{4D} \lambda_j \cM^{(j)}_{A'\to A} \ox \cN^{(j)}_{B\to B'}
    \end{align}
    with appropriate relabeling, where each $\lambda_j \in \lrc{(-1)^{1 - \delta_{k,l}} m_{j,k}n_{j,l}}_{j,k,l}$ is a real number, each $\cM^{(j)}_{A'\to A} \in \lrc{\cM^{(j,\pm)}_{A'\to A}}_j$ or $\cN^{(j)}_{B\to B'} \in \lrc{\cN^{(j,\pm)}_{B\to B'}}_j$ is a CPTN map.
    We can fix the values of the coefficients $m_{j,\pm}$ and $n_{j,\pm}$ to be large enough so that both $\sum_{j=1}^{4D} \cM^{(j)}_{A'\to A}$ and $\sum_{j=1}^{4D} \cN^{(j)}_{B\to B'}$ are trace-non-increasing. Let $\cM'_{A'\to A}$ and $\cN'_{B\to B'}$ be CPTN maps such that $\sum_{j=1}^{4D} \cM^{(j)}_{A'\to A} + 2\cM'_{A'\to A}$ and $\sum_{j=1}^{4D} \cN^{(j)}_{B\to B'} + 2\cN'_{B\to B'}$ are CPTP. That is, $\lrc{\cM^{(1)}, \dots, \cM^{(4D)}, \cM', \cM'}$ and $\lrc{\cN^{(1)}, \dots, \cN^{(4D)}, \cN', \cN'}$ are quantum instruments. Because we can write
    \begin{align}
        \widehat\cS_{A'B\to AB'} = \sum_{j=1}^{4D} \lambda_j \cM^{(j)}_{A'\to A} \ox \cN^{(j)}_{B\to B'} + \cM'_{A'\to A} \ox \cN'_{B\to B'} - \cM'_{A'\to A} \ox \cN'_{B\to B'},
    \end{align}
    it follows that $\widehat\cS_{A'B\to AB'}$ corresponds to a shadow simulation code assisted by two-way classical communication. Hence the proof.
\end{proof}

From now on, we focus on no-signaling shadow simulation codes. We consider implementing such codes by sampling quantum no-signaling codes. This is possible due to the following proposition.

\begin{proposition}
    A bipartite linear map is Hermitian-preserving and no-signaling if and only if it is a linear combination of bipartite linear maps that correspond to quantum no-signaling codes.
\end{proposition}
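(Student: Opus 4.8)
The plan is to pass to Choi operators, where both the Hermitian-preserving property and the two no-signaling conditions become \emph{linear} constraints on $J^{\widehat\cS}_{A'BAB'}$, so that the statement reduces to the fact that the quantum no-signaling channels have nonempty interior inside the real affine space of trace-preserving no-signaling Hermitian-preserving maps. The ``if'' direction is then immediate: if $\widehat\cS = \sum_j \lambda_j \widehat\cS_j$ with $\lambda_j\in\RR$ and each $\widehat\cS_j$ a bipartite quantum no-signaling channel, then $J^{\widehat\cS}=\sum_j\lambda_j J^{\widehat\cS_j}$ is a real combination of positive semidefinite operators, hence Hermitian, so $\widehat\cS$ is Hermitian-preserving; and since the conditions $\tr_{B'}[J^{\widehat\cS_j}] = (\tr_{BB'}[J^{\widehat\cS_j}]/d_B)\ox\idop_B$ and $\tr_A[J^{\widehat\cS_j}] = \idop_{A'}\ox(\tr_{A'A}[J^{\widehat\cS_j}]/d_{A'})$ are homogeneous linear in the Choi operator, they survive the linear combination.

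For the ``only if'' direction I would first record a small lemma: a Hermitian-preserving no-signaling bipartite map is automatically trace-scaling. Tracing out $A$ in the Bob-to-Alice condition and tracing out $B'$ in the Alice-to-Bob condition gives $\tr_{AB'}[J^{\widehat\cS}] = P_{A'}\ox\idop_B$ and $\tr_{AB'}[J^{\widehat\cS}] = \idop_{A'}\ox Q_B$ for some Hermitian $P_{A'},Q_B$; equating these two product forms forces $P_{A'} = c\,\idop_{A'}$ and $Q_B = c\,\idop_B$ for a single real scalar $c$, so $\tr_{AB'}[J^{\widehat\cS}] = c\,\idop_{A'B}$, i.e. $\tr[\widehat\cS(X)] = c\,\tr[X]$ for all $X$.

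Next I would run a perturbation argument around the ``maximally mixing'' replacement channel $\Phi_0\colon X\mapsto \tr[X]\,\idop_{AB'}/(d_Ad_{B'})$, which is CPTP, no-signaling, and has the full-rank Choi operator $J^{\Phi_0}=\idop_{A'BAB'}/(d_Ad_{B'})>0$. Set $\widehat\cS_0 := \widehat\cS - c\,\Phi_0$, which is Hermitian-preserving, no-signaling, and trace-annihilating. For $\varepsilon>0$ small enough $J^{\Phi_0}\pm\varepsilon\,J^{\widehat\cS_0}$ is still positive definite, so $\Psi_\pm := \Phi_0\pm\varepsilon\,\widehat\cS_0$ are completely positive; they are trace-preserving (scaling factor $1\pm\varepsilon\cdot 0=1$) and no-signaling (sums of no-signaling maps), hence bipartite quantum no-signaling channels, and $\widehat\cS = c\,\Phi_0 + \frac{1}{2\varepsilon}\Psi_+ - \frac{1}{2\varepsilon}\Psi_-$ is the desired real linear combination of quantum no-signaling codes.

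The only point requiring care is that the ball of positive operators around $J^{\Phi_0}$ stays inside the linear no-signaling subspace — which it does precisely because $\Phi_0$ is itself no-signaling and the constraint is linear — so the perturbation does not escape the class of no-signaling channels; everything else is routine. As an alternative, once the trace-scaling lemma is in hand one can bypass the perturbation entirely and invoke Theorem~\ref{thm:ns_code}: the map $\widehat\cS$ is then the virtual channel of a virtual supermap, and since it is no-signaling that supermap is a randomness-assisted code $\sum_j\lambda_j\,\cE_j\ox\cD_j$, where each product $\cE_j\ox\cD_j$ is already a (CPTP and trivially no-signaling) quantum no-signaling code, giving the decomposition at once.
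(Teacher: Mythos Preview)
Your proof is correct. The ``if'' direction matches the paper's argument exactly. For ``only if'', your perturbation/interior-point argument is a genuinely different (and more elementary) route than the one the paper takes.

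The paper's proof of the ``only if'' direction simply invokes Theorem~\ref{thm:ns_code}: a Hermitian-preserving no-signaling bipartite map is decomposed as $\sum_j \lambda_j\, \cM^{(j)}\otimes\cN^{(j)}$ with each $\cM^{(j)},\cN^{(j)}$ a quantum channel, and each tensor product is already a quantum no-signaling code. This is exactly what you list as your alternative at the end. That route is short in the paper only because the heavy lifting was already done in the proof of Theorem~\ref{thm:ns_code}, which in turn relies on Gutoski's structural theorem for no-signaling operators and on the two-channel decomposition lemma for HPTS maps. By contrast, your main argument is self-contained: the trace-scaling lemma is a one-line computation, and the perturbation around the full-rank replacement channel $\Phi_0$ gives an explicit three-term decomposition $\widehat\cS = c\,\Phi_0 + \tfrac{1}{2\varepsilon}\Psi_+ - \tfrac{1}{2\varepsilon}\Psi_-$ without invoking any outside machinery. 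What the paper's approach buys, on the other hand, is a decomposition into \emph{product} channels $\cE_j\otimes\cD_j$ (i.e.\ randomness-assisted codes, not merely no-signaling channels), which is strictly more information than the proposition asks for; your perturbed maps $\Psi_\pm$ are no-signaling but need not be products.
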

\begin{proof}
    For the ``if'' direction, let $\widehat{\cS}_{A'B\to AB'} = \sum_j \cS^{(j)}_{A'B\to AB'}$ be a linear combination of bipartite linear maps that correspond to quantum no-signaling codes. The map $\widehat{\cS}_{A'B\to AB'}$ is Hermitian-preserving because $J^{\widehat{\cS}}_{A'BAB'}$, the Choi operator of $\widehat{\cS}_{A'B\to AB'}$, is Hermitian. Also, $\widehat{\cS}_{A'B\to AB'}$ is no-signaling from $B$ to $A$, because
    \begin{align}
        \tr_{B'}\lrb{J^{\widehat{\cS}}_{A'BAB'}} = \sum_j \lambda_j \tr_{B'}\lrb{J^{\cS^{(j)}}_{A'BAB'}} = \sum_j \lambda_j J^{\cS^{(j)}}_{A'A} \ox \idop_B = J^{\widehat{\cS}}_{A'A} \ox \idop_B,
    \end{align}
    where the second inequality follows from each $\cS^{(j)}_{A'B\to AB'}$ being no-signaling.
    Similarly, $\widehat{\cS}_{A'B\to AB'}$ is no-signaling from $A$ to $B$ as
    \begin{align}
        \tr_A\lrb{J^{\widehat{\cS}}_{A'BAB'}} = \sum_j \lambda_j \tr_A\lrb{J^{\cS^{(j)}}_{A'BAB'}} = \sum_j \lambda_j J^{\cS^{(j)}}_{BB'} \ox \idop_{A'} = J^{\widehat{\cS}}_{BB'} \ox \idop_{A'}.
    \end{align}
    Therefore, the map $\widehat{\cS}_{A'B\to AB'}$ is Hermitian-preserving and no-signaling.

    For the ``only if'' part, let $\widetilde{\cS}_{A'B\to AB'}$ be a Hermitian-preserving and no-signaling bipartite linear map. According to Theorem~\ref{thm:ns_code}, $\widehat{\cS}_{A'B\to AB'}$ can be decomposed as $\widehat{\cS}_{A'B\to AB'} = \sum_j \lambda_j \cM^{(j)}_{A'\to A} \ox \cN^{(j)}_{B\to B'}$ for some quantum channels $\cM^{(j)}_{A'\to A}$ and $\cN^{(j)}_{B\to B'}$. Note that each $\cS^{(j)}_{A'B\to AB'} \coloneqq \cM^{(j)}_{A'\to A} \ox \cN^{(j)}_{B\to B'}$ is a bipartite linear map corresponding to a quantum no-signaling code. Therefore, $\widehat{\cS}_{A'B\to AB'} = \sum_j \lambda_j \cS^{(j)}_{A'B\to AB'}$ is indeed a linear combination of bipartite linear maps corresponding to quantum no-signaling codes.
\end{proof}

By decomposing it into a few quantum no-signaling codes, we can implement any no-signaling shadow simulation code by sampling quantum no-signaling codes in a way similar to the protocol given earlier in this section for realizing randomness-assisted shadow simulation codes.
The implementation of a no-signaling shadow simulation code incurs a cost quantifying how many sampling rounds are required. Such a cost can be derived from Hoeffding's inequality. Let $\widetilde{\cS} = \sum_j \lambda_j \cS_j$ be a no-signaling shadow simulation code decomposed into a linear combination of quantum no-signaling codes $\lrc{\cS_j}$ so that
\begin{align}
    \tr\lrb{\lrp{\widetilde{\cS}\lrp{\cN}} (\rho)O} = \sum_j \lambda_j \tr\lrb{\lrp{\cS_j\lrp{\cN}} (\rho)O}
\end{align}
for any quantum state $\rho$ and any observable $O$.
We assume that the observable is bounded as $\Vert O \Vert_\infty \leq 1$ so that each measurement outcome belongs to the interval $[-1, 1]$. For postprocessing, we multiply each measurement outcome by a factor of magnitude $\gamma \coloneqq \sum_j |\lambda_j|$, and the average of all the postprocessed outcomes serves as an unbiased estimator $\xi$ for $\tr\lrb{\lrp{\widetilde{\cS}\lrp{\cN}} (\rho)O}$. According to Hoeffding's inequality~\cite{hoeffding1994probability}, the probability that the estimator has an error larger than or equal to $\epsilon$ is bounded as
\begin{align}
    \Pr\lrp{\lrv{\xi - \mathbb{E}\lrb{\xi}} \geq \epsilon} \leq 2 \exp \lrp{-\frac{M \epsilon^2}{2 \gamma^2}},
\end{align}
where $M$ is the number of sampling rounds.
Hence, we can conclude that
\begin{align}
    M \geq \frac{2\gamma^2 \log\frac{2}{\delta}}{\epsilon^2}
\end{align}
rounds are enough for the final estimation to have an error smaller than $\epsilon$ with a probability no less than $1 - \delta$.

The number of rounds $M$ is proportional to $\gamma^2$, where $\gamma$ is the sum of the absolute values of the coefficients in the decomposition of $\widetilde{\cS}$. Considering that a no-signaling shadow simulation code $\widetilde{\cS}$ can have many different decompositions, we define its sampling cost as the smallest possible $\gamma$ achieved by any feasible decomposition:
\begin{align}
    c_{\rm smp}\lrp{\widetilde{\cS}} \coloneqq \inf \lrc{ \sum_j \lrv{\lambda_j} ~\middle|~ \widehat \cS = \sum_j \lambda_j \widehat \cS_j,~ \lambda_j \in \mathbb{R},~ \widehat \cS_j \in \rm{CPTP} \cap \rm{NS} }.
\end{align}
Note that all the quantum no-signaling channels in the decomposition whose corresponding coefficients have the same sign can be grouped into one single quantum no-signaling channel without changing the cost. Hence, it is sufficient to consider all combinations in the form of $\widehat \cS = p_+ \widehat\cS^+ - p_- \widehat\cS^-$, where $p_\pm$ are non-negative coefficients and $\widehat\cS^\pm$ are quantum no-signaling channels:
\begin{align}
    c_{\rm smp}\lrp{\widetilde{\cS}} = \inf \lrc{p_+ + p_- ~\middle|~ \widehat \cS = p_+ \widehat\cS^+ - p_- \widehat\cS^-,~ p_\pm \in \mathbb{R}^+, \widehat\cS_j \in \rm{CPTP} \cap \rm{NS}}.
\end{align}
Note that every conventional channel simulation protocol has a sampling cost of $1$, while a shadow simulation protocol can have a sampling cost either larger or smaller than $1$, in addition to being equal to $1$.

Besides sampling cost, simulation error is an important indicator on the performance of a simulation code. In the main text, we use diamond distance between the simulated map $\widetilde{\cM} \coloneqq \widetilde{\cS}(\cN)$ and the target channel $\cM$ to measure this error. Here, we justify our choice.

As the target of our task is to estimate the expectation value, the most direct measure of the error is
\begin{align}
    \lrv{\tr\lrb{O_{RB'}\widetilde{\cM}_{A'\to B'}(\rho_{RA'})} - \tr\lrb{O_{RB'}\cM_{A'\to B'}(\rho_{RA'})}}
\end{align}
for some given observable $O_{RB'}$ and quantum state $\rho_{RA'}$, where $R$ is some reference system inaccessible to Alice.
Because one simulation code should work for every quantum state and every observable, we consider the worst case error, which maximizes the error over all quantum states and observables.
Without loss of generality, we consider only observables with $\lrV{O}_\infty \leq 1$ since every other observable is such an observable multiplied by a scalar:
\begin{align}
    \sup_{\rho_{RA'},~ O_{RB'}:\lrV{O}_\infty \leq 1} \lrv{\tr[O\widetilde{\cM}(\rho)] - \tr[O\cM(\rho)]}.
\end{align}
Observe that the worst case error is upper-bounded by diamond norm because for any observable $O$ and state $\rho$, we have
\begin{align}
    \lrv{\tr[O\widetilde{\cM}(\rho)] - \tr[O\cM(\rho)]} &\leq \lrv{\tr[O^+(\widetilde{\cM}(\rho) - \cM(\rho))]} + \lrv{\tr[O^-(\widetilde{\cM}(\rho) - \cM(\rho))]}\\
    &\leq \frac{1}{2} \lrV{\widetilde{\cM}(\rho) - \cM(\rho)}_1 + \frac{1}{2} \lrV{\widetilde{\cM}(\rho) - \cM(\rho)}_1\\
    &= \lrV{\widetilde{\cM}(\rho) - \cM(\rho)}_1\\
    &\leq \lrV{\widetilde{\cM} - \cM}_\diamond,
\end{align}
where $O^+$ and $O^-$ are positive semidefinite operators representing the positive and negative parts of $O$, respectively.
Furthermore, this upper bound is tight in the sense that there always exists an observable $O$ and a quantum state $\rho$ that saturate this bound.
Specifically, let $\rho^*$ be a quantum state such that $\lrV{\widetilde{\cM}(\rho^*) - \cM(\rho^*)}_1 = \lrV{\widetilde{\cM} - \cM}_\diamond$ and $O^* \geq 0$ be an observable such that $\lrv{\tr\lrb{O^*\lrp{\widetilde{\cM}(\rho^*) - \cM(\rho^*)}}} = \frac{1}{2} \lrV{\widetilde{\cM}(\rho^*) - \cM(\rho^*)}$. Such state $\rho^*$ and observable $O^*$ always exist, and they saturate the upper bound, that is
\begin{align}
    \sup_{\rho,~ O:\lrV{O}_\infty \leq 1} \lrv{\tr\lrb{O\widetilde{\cM}(\rho)} - \tr\lrb{O\cM(\rho)}} &= \lrv{\tr\lrb{O^*\widetilde{\cM}(\rho^*)} - \tr\lrb{O^*\cM(\rho^*)}}\\
    &= \lrV{\widetilde{\cM} - \cM}_\diamond.
\end{align}
Therefore, diamond distance is indeed the worst case error of estimating expectation values.

\section{Appendix B: General SDPs for Minimum Error and Minimum Sampling Cost}
\renewcommand{\theequation}{B\arabic{equation}}
\setcounter{equation}{0}
\renewcommand{\theHequation}{B\arabic{equation}}
We show that the minimum sampling cost and the minimum error of shadow simulation assisted by no-signaling codes can be formulated as SDPs. The minimum sampling cost can be formulated as
\begin{subequations}
\begin{align}
    \gamma^*_{\varepsilon, \rm NS}(\cN, \cM) = \inf &\; p_+ + p_-\\
    \text{s.t.} &\; \widehat\cS = p_+ \widehat\cS^+ - p_- \widehat\cS^-,\\
    &\; \frac{1}{2} \lrV{\cM_{A'\to B'} - \widetilde{\cS}\lrp{\cN_{A\to B}}}_\diamond \leq \varepsilon,\\
    &\; \widehat\cS^\pm \in \rm{CPTP} \cap {\rm NS}.
\end{align}
\end{subequations}
This optimization problem can be modified to one for $\varepsilon^*_{\gamma, \rm NS}\lrp{\cN, \cM}$ by changing the optimization objective to $\varepsilon$ and adding the constraint $p_+ + p_- \leq \gamma$.
For a pair of quantum channels, {\em i.e.}, CPTP maps, the diamond distance between them can be efficiently computed via a simple SDP~\cite{watrous2009semidefinite}. For shadow simulation, however, the map $\widetilde{\cS}\lrp{\cN}$ is HPTS, which is more general than CPTP. Here, we show how to adapt the SDP for the diamond distance between two quantum channels to compute the diamond distance between any two HPTS maps.

Let $\widetilde{\cN}_{A\to B} = p_+\cN^+_{A\to B} - p_-\cN^-_{A\to B}$ and $\widetilde{\cM}_{A\to B} = q_+\cM^+_{A\to B} - q_-\cM^-_{A\to B}$ be two HPTS maps, where $p_\pm$ and $q_\pm$ are non-negative real numbers, and $\cN^\pm_{A\to B}$ and $\cM^\pm_{A\to B}$ are quantum channels. By the definition of the diamond norm, the diamond distance between these two maps is
\begin{align}
    \frac{1}{2} \lrV{\widetilde{\cN}_{A\to B} - \widetilde{\cM}_{A\to B}}_\diamond &= \sup_{\psi_{RA}} \frac{1}{2} \lrV{\widetilde{\cN}_{A\to B}(\psi_{RA}) - \widetilde{\cM}_{A\to B}(\psi_{RA})}_1\\
    &= \sup_{\psi_{RA}} \bigg\{ \sup_{M: 0 \leq M \leq \idop} \tr\left[ M \left( \widetilde{\cN}_{A\to B}(\psi_{RA}) - \widetilde{\cM}_{A\to B}(\psi_{RA}) \right) \right] \nonumber\\
    &\qquad - \frac{1}{2}\tr\left[ \widetilde{\cN}_{A\to B}(\psi_{RA}) - \widetilde{\cM}_{A\to B}(\psi_{RA}) \right]  \bigg\},
\end{align}
where $\psi_{RA}$ is a pure state with $d_R = d_A$, and the second equality follows from the Helstrom-Holevo theorem (see, for example, Theorem 3.13 in Ref.~\cite{khatri2020principles}). Defining $p \coloneqq p_+ - p_-$ and $q \coloneqq q_+ - q_-$, we have
\begin{align}
    \frac{1}{2} \lrV{\widetilde{\cN}_{A\to B} - \widetilde{\cM}_{A\to B}}_\diamond = \sup_{\substack{\psi_{RA}\\ M: 0 \leq M \leq \idop}} \tr\left[ M \left( \widetilde{\cN}_{A\to B}(\psi_{RA}) - \widetilde{\cM}_{A\to B}(\psi_{RA}) \right) \right] - \frac{p - q}{2}.
\end{align}
Following the proof from Sec. 3.C.2 in Ref.~\cite{khatri2020principles}, it is easy to show that the first term on the right hand side of the above equation can be computed using the standard SDP for the diamond distance between two quantum channels. Hence, the diamond distance between two HPTS maps can be calculated as the result obtained from the SDP for two quantum channels minus the normalized difference between the trace scalars of the two maps, {\em i.e.},
\begin{align}
    \frac{1}{2} \lrV{\widetilde{\cN}_{A\to B} - \widetilde{\cM}_{A\to B}}_\diamond = \inf_{Z_{AB} \geq 0} \left\{ \mu - \frac{p-q}{2} ~\middle\vert~ Z_{AB} \geq J^{\widetilde{\cN}}_{AB} - J^{\widetilde{\cM}}_{AB},~ \tr_B[Z_{AB}] \leq \mu\idop_A \right\},
\end{align}
where $J^{\widetilde{\cN}}_{AB}$ and $J^{\widetilde{\cM}}_{AB}$ are the Choi operators of $\widetilde{\cN}_{A\to B}$ and $\widetilde{\cM}_{A\to B}$, respectively. Then, the minimum error $\varepsilon^*_{\gamma, \rm NS}\lrp{\cN, \cM}$ and minimum sampling cost $\gamma^*_{\varepsilon, \rm NS}$ can be written as SDPs in terms of the relevant maps' Choi operators.

\begin{proposition}\label{prop:general-sdp}
    Consider two quantum channels $\cN_{A\to B}$ and $\cM_{A'\to B'}$ whose Choi operators are $J^\cN_{AB}$ and $J^\cM_{A'B'}$, respectively.
    The minimum error of shadow simulation from $\cN$ to $\cM$ under no-signaling codes with a cost budget $\gamma$ is given by the following SDP:
    \begin{subequations}\label{appeq:general-sdp}
    \begin{align}
        \varepsilon^*_{\gamma, \rm NS}(\cN,\cM) = \inf &\; \varepsilon \\
        \text{\rm s.t.} &\; J^{\widetilde{\cM}}_{A'B'} = \tr_{AB}\Big[ \lrp{\lrp{J^\cN_{AB}}^T \ox \idop_{A'B'}} \lrp{J^{\widehat\cS^+}_{A'BAB'} - J^{\widehat\cS^-}_{A'BAB'}} \Big], \label{appeq:general-sdp_simulated-map}\\
        &\; Z_{A'B'} \geq 0,~ Z_{A'B'} \geq J^\cM_{A'B'} - J^{\widetilde{\cM}}_{A'B'},~ \tr_{B'}[Z_{A'B'}] \leq \frac{2\varepsilon + 1 - p_+ + p_-}{2}\idop_{A'}, \\
        &\; J^{\widehat\cS^\pm}_{A'BAB'} \geq 0,~ \tr_{AB'}\lrb{J^{\widehat\cS^\pm}_{A'BAB'}} = p_\pm \idop_{A'B},~ p_+ + p_- \leq \gamma, \label{appeq:general-sdp_cptp}\\
        &\; \tr_{B'}\lrb{J^{\widehat\cS^\pm}_{A'BAB'}} = \frac{1}{d_B}\tr_{BB'}\lrb{J^{\widehat\cS^\pm}_{A'BAB'}} \ox \idop_B, \label{appeq:general-sdp_no-sig_1}\\
        &\; \tr_{A}\lrb{J^{\widehat\cS^\pm}_{A'BAB'}} = \frac{1}{d_{A'}}\tr_{A'A}\lrb{J^{\widehat\cS^\pm}_{A'BAB'}} \ox \idop_{A'}. \label{appeq:general-sdp_no-sig_2}
    \end{align}
    \end{subequations}
    Similarly, the minimum sampling cost $\gamma^*_{\varepsilon, \rm NS}(\cN,\cM)$ under an error tolerance $\varepsilon$ is given by changing the optimization objective of the above SDP to $p_+ + p_-$ and removing the condition $p_+ + p_- \leq \gamma$.
\end{proposition}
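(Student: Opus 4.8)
The plan is to translate the definition of $\varepsilon^*_{\gamma,\rm NS}(\cN,\cM)$ into the stated semidefinite program constraint by constraint, leaning on three facts already in hand: (i) Theorem~\ref{thm:ns_code} together with the Proposition proved above, which together say that a no-signaling virtual supermap with sampling cost at most $\gamma$ is exactly one whose virtual bipartite channel can be written $\widehat\cS=p_+\widehat\cS^+-p_-\widehat\cS^-$ with $p_\pm\ge0$, $p_++p_-\le\gamma$, and $\widehat\cS^\pm\in{\rm CPTP}\cap{\rm NS}$; (ii) the standard Choi characterizations of complete positivity, trace preservation, and the two no-signaling conditions; and (iii) the diamond-distance formula for HPTS maps established earlier in this appendix.

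First I would unfold the nested optimizations in $\varepsilon^*_{\gamma,\rm NS}(\cN,\cM)=\inf_{\widehat\cS\in{\rm NS}}\{{\rm err}(\widetilde\cS;\cN,\cM)\mid c_{\rm smp}(\widetilde\cS)\le\gamma\}$: since $c_{\rm smp}(\widehat\cS)$ is itself an infimum over decompositions $\widehat\cS=p_+\widehat\cS^+-p_-\widehat\cS^-$, and since ${\rm err}$ depends only on $\widehat\cS$ while the cost bound only restricts the decomposition, the two infima merge into a single joint optimization over $(p_\pm,\widehat\cS^\pm)$ subject to $p_++p_-\le\gamma$ (the feasible set being closed and nonempty, so the infimum is attained). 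Each $\widehat\cS^\pm$, with its scalar $p_\pm$ absorbed into the Choi operator $J^{\widehat\cS^\pm}_{A'BAB'}$, is then pinned down by the constraints in \eqref{appeq:general-sdp_cptp}: $J^{\widehat\cS^\pm}\ge0$ (complete positivity) and $\tr_{AB'}[J^{\widehat\cS^\pm}]=p_\pm\idop_{A'B}$ (trace preservation up to $p_\pm$, since $AB'$ is the output of the bipartite channel $A'B\to AB'$); the $B$-to-$A$ and $A$-to-$B$ no-signaling properties become \eqref{appeq:general-sdp_no-sig_1} and \eqref{appeq:general-sdp_no-sig_2} as recorded in Appendix~A, with the normalizing prefactors forced by taking traces. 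The simulated virtual channel $\widetilde\cM\coloneqq\widetilde\cS(\cN)$ is then obtained by inserting $J^\cN_{AB}$ into the $A,B$ slots of $J^{\widehat\cS}=J^{\widehat\cS^+}-J^{\widehat\cS^-}$ via the link-product identity for supermaps~\cite{chiribella2008transforming}, which yields exactly \eqref{appeq:general-sdp_simulated-map}; here I would verify the placement of the transpose and the ordering of the tensor factors against the conventions used in Appendix~A.

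Next I would replace the error bound ${\rm err}(\widetilde\cS;\cN,\cM)=\tfrac12\lrV{\cM-\widetilde\cM}_\diamond\le\varepsilon$ by the HPTS diamond-distance formula derived above, applied to the pair $(\cM,\widetilde\cM)$. The map $\cM$ is CPTP, so its trace scalar is $1$, while $\widetilde\cM$ has trace scalar $p_+-p_-$, because $\tr_{AB'}[J^{\widehat\cS}]=(p_+-p_-)\idop_{A'B}$ and $\cN$ is trace preserving. Substituting these scalars, the formula reads $\tfrac12\lrV{\cM-\widetilde\cM}_\diamond=\inf_{Z\ge0}\{\mu-\tfrac{1-(p_+-p_-)}{2}\mid Z\ge J^\cM_{A'B'}-J^{\widetilde\cM}_{A'B'},\ \tr_{B'}[Z]\le\mu\idop_{A'}\}$, so $\tfrac12\lrV{\cM-\widetilde\cM}_\diamond\le\varepsilon$ is feasible iff there is $Z\ge0$ with $Z\ge J^\cM_{A'B'}-J^{\widetilde\cM}_{A'B'}$ and $\tr_{B'}[Z]\le\tfrac{2\varepsilon+1-p_++p_-}{2}\idop_{A'}$, which is precisely the $Z$-constraint in \eqref{appeq:general-sdp}. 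Collecting all of this and minimizing $\varepsilon$ gives \eqref{appeq:general-sdp}; interchanging the objective $\varepsilon$ with the cost expression $p_++p_-$ and dropping the bound $p_++p_-\le\gamma$ gives the SDP for $\gamma^*_{\varepsilon,\rm NS}(\cN,\cM)$.

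I expect the main obstacle to be bookkeeping rather than anything conceptual: getting the link-product identity \eqref{appeq:general-sdp_simulated-map} with the correct contraction and transpose pattern on $A,B$, and tracking the trace scalars carefully enough that the HPTS diamond term collapses to precisely the affine expression $\tfrac{2\varepsilon+1-p_++p_-}{2}$ rather than something off by a factor or a sign. One subsidiary point worth stating explicitly is that grouping the same-sign terms of a general decomposition $\widehat\cS=\sum_j\lambda_j\widehat\cS_j$ into the two-term form $p_+\widehat\cS^+-p_-\widehat\cS^-$ preserves both the no-signaling property and the cost $\sum_j|\lambda_j|$, which is what licenses restricting the SDP to a single $\widehat\cS^+$ and a single $\widehat\cS^-$.
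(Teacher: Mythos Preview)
Your proposal is correct and follows precisely the route the paper takes: the text preceding Proposition~\ref{prop:general-sdp} in Appendix~B derives the HPTS diamond-distance formula and states that substituting it, together with the Choi characterizations of CPTP and NS and the link-product expression for $J^{\widetilde\cM}$, yields the SDP; your write-up simply spells these steps out in more detail (including the trace-scalar bookkeeping leading to $\tfrac{2\varepsilon+1-p_++p_-}{2}$), which the paper leaves implicit.
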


\section{Appendix C: Shadow Communication}
\renewcommand{\theequation}{C\arabic{equation}}
\setcounter{equation}{0}
\renewcommand{\theHequation}{C\arabic{equation}}
In this section, we derive the SDP for $Q^{(1)}_{\gamma, \rm NS}$, the one-shot zero-error $\gamma$-cost shadow capacity assisted by no-signaling codes, given in Theorem~\ref{thm:zero_error_capacity}. To achieve this, we first need to derive SDPs for some other quantities, which are of interest on their own.

First, we tailor the general SDPs of minimum error and minimum sampling cost for the shadow communication task. The original SDPs are given in Proposition~\ref{prop:general-sdp}. The target channel $\cM_{A'\to B'}$ becomes $\cI_d$, where $d=d_{A'}=d_{B'}$ is the dimension of the target noiseless channel.

\begin{lemma}\label{lemma:shadow_comm_min_err}
    Given a fixed dimension $d = d_{A'} = d_{B'}$ and an error tolerance $\varepsilon$, the minimum error of shadow simulation from $\cN_{A\to B}$ to $\cI_d$ under no-signaling codes with a cost budget $\gamma$ is given by the following SDP:
    \begin{subequations}\label{appeq:comm-sdp-min-err}
    \begin{align}
        \varepsilon^*_{\gamma, \rm NS} (\cN,\cI_d) = \inf &\; \varepsilon \\
        \text{\rm s.t.} &\; J^{\widetilde{\cM}}_{A'B'} = \tr\lrb{\lrp{J^\cN_{AB}}^T \lrp{T^+_{AB} - T^-_{AB}}} \frac{d^2\Phi_{A'B'} - \idop_{A'B'}}{d\lrp{d^2 - 1}} + \tr\lrb{V^+_A - V^-_A} \frac{\idop_{A'B'} - \Phi_{A'B'}}{d\lrp{d^2 - 1}}, \label{appeq:comm-sdp-min-err_simulated-map}\\
        &\; Z_{A'B'} \geq 0,~ Z_{A'B'} \geq J^{\cI_d}_{A'B'} - J^{\widetilde{\cM}}_{A'B'},~ \tr_{B'}[Z_{A'B'}] \leq \frac{1}{2}\lrp{2\varepsilon + 1 - \frac{\tr\lrb{V^+_A - V^-_A}}{d^2}}\idop_{A'}, \\
        &\; 0 \leq T^\pm_{AB} \leq V^\pm_A \ox \idop_B,~ \tr_A\lrb{T^\pm_{AB}} = \frac{\tr\lrb{V^\pm_A}}{d^2} \idop_B, \frac{\tr\lrb{V^+_A + V^-_A}}{d^2} \leq \gamma.
    \end{align}
    \end{subequations}
    Similarly, the minimum sampling cost $\gamma^*_{\varepsilon, \rm NS} (\cN,\cI_d)$ with an error tolerance $\varepsilon$ is given by changing the optimization objective of the above SDP to $\tr\lrb{V^+_A + V^-_A}/d^2$ and removing the condition $\tr\lrb{V^+_A + V^-_A} / d^2 \leq \gamma$.
\end{lemma}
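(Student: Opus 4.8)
The plan is to deduce Lemma~\ref{lemma:shadow_comm_min_err} from the general SDP of Proposition~\ref{prop:general-sdp} by setting $\cM=\cI_d$ (so that $A'=B'=\CC^d$) and then removing redundant degrees of freedom through the unitary covariance of the identity channel. The crucial point is that $\cU^\dagger\circ\cI_d\circ\cU=\cI_d$ for every unitary $U$. Thus, starting from an arbitrary feasible point of the SDP in Proposition~\ref{prop:general-sdp} with $\cM=\cI_d$, I would act on $\widehat\cS^\pm$ by the local unitaries $\cU$ on the input $A'$ and $\cU^\dagger$ on the output $B'$; this replaces the simulated map by $\cU^\dagger\circ\widetilde\cS(\cN)\circ\cU$, conjugates $J^{\widehat\cS^\pm}_{A'BAB'}$ by $\idop_{AB}\otimes\bar U_{A'}\otimes U_{B'}$, and leaves $p_\pm$, the positivity $J^{\widehat\cS^\pm}\geq 0$, the trace condition in~\eqref{appeq:general-sdp_cptp}, and the no-signaling identities~\eqref{appeq:general-sdp_no-sig_1}--\eqref{appeq:general-sdp_no-sig_2} intact; conjugating $Z_{A'B'}$ by $\bar U_{A'}\otimes U_{B'}$ preserves the error constraint as well, because $J^{\cI_d}_{A'B'}$ is fixed by $\bar U_{A'}\otimes U_{B'}$. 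Averaging over the Haar measure on $U(d)$ and using that the feasible region is convex (an affine slice, cut out by linear (in)equalities and PSD cones, for each fixed $\varepsilon$), I conclude that the optimum is already attained at a point where each $J^{\widehat\cS^\pm}_{A'BAB'}$ commutes with $\idop_{AB}\otimes\bar U_{A'}\otimes U_{B'}$ for all $U$.

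Next I would parametrize this invariant sector. By Schur's lemma, the commutant of the family of operators $\bar U_{A'}\otimes U_{B'}$, $U\in U(d)$, on $\CC^d\otimes\CC^d$ is two-dimensional, spanned by the maximally entangled projector $\Phi_{A'B'}$ and $\idop_{A'B'}-\Phi_{A'B'}$, so every invariant Choi operator has the form $J^{\widehat\cS^\pm}_{A'BAB'}=R^\pm_{AB}\otimes\Phi_{A'B'}+S^\pm_{AB}\otimes(\idop_{A'B'}-\Phi_{A'B'})$ for some Hermitian $R^\pm_{AB},S^\pm_{AB}$. Since $\Phi_{A'B'}$ and $\idop_{A'B'}-\Phi_{A'B'}$ are orthogonal projectors, positivity of $J^{\widehat\cS^\pm}$ is equivalent to $R^\pm,S^\pm\geq 0$. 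Evaluating the remaining constraints on this ansatz, using $\tr_{B'}\Phi_{A'B'}=\idop_{A'}/d$: \eqref{appeq:general-sdp_no-sig_2} forces $\tr_A R^\pm=\tr_A S^\pm$; \eqref{appeq:general-sdp_no-sig_1} forces $R^\pm+(d^2-1)S^\pm$ to be of the product form $V^\pm_A\otimes\idop_B/d$ for some operator $V^\pm_A$ on $A$; and \eqref{appeq:general-sdp_cptp} then pins the trace-scaling constants to $p_\pm=\tr[V^\pm_A]/d^2$. Passing to the variables $V^\pm_A$ and $T^\pm_{AB}\coloneqq dR^\pm_{AB}$ — which, on the no-signaling sector, is a linear bijection onto the pairs satisfying the constraints below — the whole collection of constraints becomes exactly $0\leq T^\pm_{AB}\leq V^\pm_A\otimes\idop_B$, $\tr_A[T^\pm_{AB}]=\tr[V^\pm_A]\idop_B/d^2$, and $\tr[V^+_A+V^-_A]/d^2\leq\gamma$; the operator $Z_{A'B'}$ may be left unrestricted.

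Finally, substituting the ansatz into~\eqref{appeq:general-sdp_simulated-map}, the operator $J^{\widetilde\cM}_{A'B'}$ becomes a scalar combination of $\Phi_{A'B'}$ and $\idop_{A'B'}-\Phi_{A'B'}$, the relevant scalars being $\tr[(J^\cN_{AB})^T(T^+_{AB}-T^-_{AB})]$ and $\tr[(J^\cN_{AB})^T((V^+_A-V^-_A)\otimes\idop_B)]$; trace preservation of $\cN$, i.e.\ $\tr_B J^\cN_{AB}=\idop_A$, turns the latter into $\tr[V^+_A-V^-_A]$, and collecting terms reproduces Eq.~\eqref{appeq:comm-sdp-min-err_simulated-map}. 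The error constraint is the one from Proposition~\ref{prop:general-sdp} with $p_\pm=\tr[V^\pm_A]/d^2$ substituted, which gives $\tr_{B'}[Z_{A'B'}]\leq\tfrac12\lrp{2\varepsilon+1-\tr[V^+_A-V^-_A]/d^2}\idop_{A'}$. The minimum-sampling-cost version follows, exactly as in Proposition~\ref{prop:general-sdp}, by changing the objective to $\tr[V^+_A+V^-_A]/d^2$ and dropping the budget constraint.

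The representation theory (Schur's lemma for $\bar U\otimes U$) and the twirling/convexity argument are routine. The main obstacle is the bookkeeping in the second step: one has to fix the conjugation conventions carefully (so that the twirl genuinely acts as $\bar U_{A'}\otimes U_{B'}$ on $A'B'$ and fixes $J^{\cI_d}$), and then verify that, once restricted to the two-dimensional invariant sector, the no-signaling, trace, and positivity constraints collapse to precisely the constraints on $(V^\pm_A,T^\pm_{AB})$ stated in the lemma — neither weaker nor stronger — and that the stated change of variables is indeed a bijection between the two feasible sets.
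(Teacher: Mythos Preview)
Your proposal is correct and follows essentially the same approach as the paper: start from the general SDP of Proposition~\ref{prop:general-sdp} with $\cM=\cI_d$, twirl the feasible point under $U_{A'}\otimes\bar U_{B'}$ (your $\bar U_{A'}\otimes U_{B'}$ convention is equivalent) using convexity and the invariance of $J^{\cI_d}$, then parametrize the invariant Choi operators as $\Phi_{A'B'}\otimes R^\pm_{AB}+(\idop_{A'B'}-\Phi_{A'B'})\otimes S^\pm_{AB}$ and reduce the constraints. The only cosmetic difference is that the paper invokes Rains' explicit twirling formula where you invoke Schur's lemma, and the paper names the intermediate variables $T^\pm,W^\pm$ before eliminating $W^\pm$ via $V^\pm_A$ and rescaling $T^\pm\mapsto dT^\pm$, which matches your change of variables $T^\pm_{AB}\coloneqq dR^\pm_{AB}$.
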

\begin{proof}
When the target identity channel has dimension $d$, and we denote it by $\cI_d$, the minimum error and the minimum sampling cost of shadow communication over the channel $\cN$ are $\varepsilon^*_{\gamma, \rm NS} (\cN,\cI_d)$ and $\gamma^*_{\varepsilon, \rm NS} (\cN,\cI_d)$, respectively, where $\varepsilon$ and $\gamma$ are the error tolerance and cost budget.
Below, we exploit the symmetry of optimal solutions under twirling to obtain simplified SDPs for both quantities.

Consider the SDP for minimum error in Proposition~\ref{prop:general-sdp} with the target channel being $\cI_d$ first. Note that if $J^{\widehat\cS^\pm}_{A'BAB'}$ are optimal, then for any $d$-dimensional unitary $U$, the Choi operators
\begin{align}
    \bar{J}^{\widehat\cS^\pm}_{A'BAB'} \coloneqq (U_{A'}\ox \overline{U}_{B'}) J^{\widehat\cS^\pm}_{A'BAB'} (U_{A'}\ox \overline{U}_{B'})^\dagger
\end{align}
are also optimal, where $\overline{U}$ denotes the complex conjugate of $U$. The optimality of $\bar{J}^{\widehat\cS^\pm}_{A'BAB'}$ can be checked by verifying that they satisfy all the conditions in the original SDP while keeping the value of $\varepsilon$ unchanged.
Due to the linearity of the constraints, any convex combination of optimal Choi operators is still optimal. Hence, we now redefine
\begin{align}
    \bar{J}^{\widehat\cS^\pm}_{A'BAB'} \coloneqq \int dU (U_{A'}\ox \overline{U}_{B'}) J^{\widehat\cS^\pm}_{A'BAB'} (U_{A'}\ox \overline{U}_{B'})^\dagger,
\end{align}
where the integral is taken over the Haar measure on the unitary group. This new pair of Choi operators are also optimal. It was shown in Ref.~\cite{rains2001semidefinite} that the twirling operation $\cT_{A'B'}: X_{A'B'} \mapsto \int dU (U_{A'}\ox \overline{U}_{B'}) X_{A'B'} (U_{A'}\ox \overline{U}_{B'})^\dagger$ has the following action:
\begin{align}
    \cT_{A'B'}(X_{A'B'}) = \tr[X_{A'B'} \Phi_{A'B'}] \Phi_{A'B'} + \frac{\tr[X_{A'B'} (\idop_{A'B'} - \Phi_{A'B'})]}{d^2 - 1} (\idop_{A'B'} - \Phi_{A'B'}),
\end{align}
where $\Phi_{A'B'} = \proj{\Phi}_{A'B'}$ is the maximally entangled state with $\ket{\Phi}_{A'B'} \coloneqq \frac{1}{\sqrt{d}} \sum_{j=0}^{d-1} \ket{j}_{A'}\ket{j}_{B'}$.
Thus,
\begin{align}
    \bar{J}^{\widehat\cS^\pm}_{A'BAB'} = \Phi_{A'B'} \ox \tr_{A'B'}\lrb{J^{\widehat\cS^\pm}_{A'BAB'} \Phi_{A'B'}} + (\idop_{A'B'} - \Phi_{A'B'}) \ox \frac{\tr_{A'B'}\lrb{J^{\widehat\cS^\pm}_{A'BAB'} (\idop_{A'B'} - \Phi_{A'B'})}}{d^2 - 1}.
\end{align}
Without any constraints, $\tr_{A'B'}\lrb{J^{\widehat\cS^\pm}_{A'BAB'} \Phi_{A'B'}}$ and $\tr_{A'B'}\lrb{J^{\widehat\cS^\pm}_{A'BAB'} (\idop_{A'B'} - \Phi_{A'B'})} / \lrp{d^2 - 1}$ can be any linear operators. We denote them by $T^\pm_{AB}$ and $W^\pm_{AB}$, respectively, so that
\begin{align}
    \bar{J}^{\widehat\cS^\pm}_{A'BAB'} &= \Phi_{A'B'} \ox T^\pm_{AB} + (\idop_{A'B'} - \Phi_{A'B'}) \ox W^\pm_{AB}.
\end{align}

We now express SDP~\eqref{appeq:general-sdp} in terms of $T^\pm_{AB}$ and $W^\pm_{AB}$.
The Choi operator of the simulated map $\widetilde{\cM}$ in Eq.~\eqref{appeq:general-sdp_simulated-map} can be written as
\begin{align}
    J^{\widetilde{\cM}}_{A'B'} = \tr\lrb{(J^\cN_{AB})^T (T^+_{AB} - T^-_{AB})} \Phi_{A'B'} + \tr\lrb{(J^\cN_{AB})^T (W^+_{AB} - W^-_{AB})} (\idop_{A'B'} - \Phi_{A'B'}).
\end{align}
The first inequality in condition~\eqref{appeq:general-sdp_cptp} becomes $T^\pm_{AB} \geq 0$ and $W^\pm_{AB} \geq 0$, and the equality in condition~\eqref{appeq:general-sdp_cptp} can be written as
\begin{align}
    \frac{\idop_{A'}}{d} \ox \tr_{A}\lrb{T^\pm_{AB} + (d^2 - 1)W^\pm_{AB}} = p_\pm \idop_{A'B},
\end{align}
which is equivalent to the requirement that
\begin{align}\label{appeq:simplified_trace_scaling}
    \tr_{A}\lrb{T^\pm_{AB} + (d^2 - 1)W^\pm_{AB}} = d p_\pm \idop_{B}.
\end{align}
For the $B$-to-$A$ no-signaling condition~\eqref{appeq:general-sdp_no-sig_1}, its left-hand side can be written as
\begin{align}
    \tr_{B'}\lrb{J^{\widehat\cS^\pm}_{A'BAB'}} = \frac{\idop_{A'}}{d} \ox \lrp{T^\pm_{AB} + \lrp{d^2 - 1}W^\pm_{AB}},
\end{align}
and its right-hand side can be written as
\begin{align}
    \frac{1}{d_B} \tr_{BB'}\lrb{J^{\widehat\cS^\pm}_{A'BAB'}} \ox \idop_B = \frac{\idop_{A'}}{d} \ox \tr_B\lrb{T^\pm_{AB} + \lrp{d^2 - 1}W^\pm_{AB}} \ox \frac{\idop_B}{d_B}.
\end{align}
Hence, the condition~\eqref{appeq:general-sdp_no-sig_1} is equivalent to
\begin{align}
    T^\pm_{AB} + \lrp{d^2 - 1}W^\pm_{AB} = \tr_B\lrb{T^\pm_{AB} + \lrp{d^2 - 1}W^\pm_{AB}} \ox \frac{\idop_B}{d_B}.
\end{align}
Similarly, the right-hand side of the $A$-to-$B$ no-signaling condition~\eqref{appeq:general-sdp_no-sig_2} can be simplified as
\begin{align}
    \frac{\idop_{A'B'}}{d^2} \ox \tr_A\lrb{T^\pm_{AB} + \lrp{d^2 - 1}W^\pm_{AB}} = \frac{p_\pm\idop_{A'BB'}}{d}
\end{align}
due to Eq.~\eqref{appeq:simplified_trace_scaling}.
Hence, the condition~\eqref{appeq:general-sdp_no-sig_2} is equivalent to
\begin{align}
    \Phi_{A'B'} \ox \tr_A\lrb{T^\pm_{AB}} + (\idop_{A'B'} - \Phi_{A'B'}) \ox \tr_A\lrb{W^\pm_{AB}} = \frac{p_\pm\idop_{A'BB'}}{d}.
\end{align}
Note that the equation above holds if and only if $\tr_A\lrb{T^\pm_{AB}} = \tr_A\lrb{W^\pm_{AB}} = \frac{p_\pm \idop_B}{d}$, which implies Eq.~\eqref{appeq:simplified_trace_scaling}.

Now the original SDP has been simplified to
\begin{subequations}
\begin{align}
    \varepsilon^*_{\gamma, \rm NS} (\cN,\cI_d) = \inf &\; \varepsilon \\
    \text{s.t.} &\; J^{\widetilde{\cM}}_{A'B'} = \tr\lrb{\lrp{J^\cN_{AB}}^T \lrp{T^+_{AB} - T^-_{AB}}} \Phi_{A'B'} + \tr\lrb{\lrp{J^\cN_{AB}}^T \lrp{W^+_{AB} - W^-_{AB}}} (\idop_{A'B'} - \Phi_{A'B'}), \\
    &\; Z_{A'B'} \geq 0,~ Z_{A'B'} \geq J^{\cI_d}_{A'B'} - J^{\widetilde\cM}_{A'B'},~ \tr_{B'}[Z_{A'B'}] \leq \frac{2\varepsilon + 1 - p_+ + p_-}{2}\idop_{A'}, \\
    &\; T^\pm_{AB} \geq 0,~ W^\pm_{AB} \geq 0,~ \tr_A\lrb{T^\pm_{AB}} = \tr_A\lrb{W^\pm_{AB}} = \frac{p_\pm \idop_B}{d},~ p_+ + p_- \leq \gamma, \\
    &\; T^\pm_{AB} + \lrp{d^2 - 1}W^\pm_{AB} = \tr_B\lrb{T^\pm_{AB} + \lrp{d^2 - 1}W^\pm_{AB}} \ox \frac{\idop_B}{d_B}. \label{appeq:sdp_lower_sample_simp_t-and-w}
\end{align}
\end{subequations}
Denoting $V^\pm_{A} \coloneqq d \tr_B\lrb{T^\pm_{AB} + \lrp{d^2 - 1}W^\pm_{AB}}/d_B$, by condition~\eqref{appeq:sdp_lower_sample_simp_t-and-w}, we can write the variables $W^\pm_{AB}$ in terms of $V^\pm_A$ and $T^\pm_{AB}$ as
\begin{align}
    \lrp{d^2 - 1}dW^\pm_{AB} = V^\pm_A \ox \idop_B - dT^\pm_{AB}.
\end{align}
Then, other conditions involving $W^\pm_{AB}$ can also be written as conditions on $V^\pm_A$ and $T^\pm_{AB}$. The condition $W^\pm_{AB} \geq 0$ becomes $V^\pm_A \ox \idop_B \geq d T^\pm_{AB}$. The condition $\tr_A\lrb{W^\pm_{AB}} = \frac{p_\pm \idop_B}{d}$ becomes
\begin{align}
    d \tr_A\lrb{W^\pm_{AB}} = \frac{\tr\lrb{V^\pm_A}\idop_B - p_\pm\idop_B}{d^2 - 1} = p_\pm \idop_B,
\end{align}
which is equivalent to requiring $\tr\lrb{V^\pm_A} = d^2 p_\pm$.
Finally, the Choi operator of the simulated map $\widetilde{\cM}$ can be written as
\begin{align}
    J^{\widetilde{\cM}}_{A'B'} &= \tr\lrb{\lrp{J^\cN_{AB}}^T \Delta T_{AB}} \Phi_{A'B'} + \tr\lrb{\lrp{J^\cN_{AB}}^T \left(\frac{\Delta V_A \ox \idop_B - d\Delta T_{AB}}{d\lrp{d^2 - 1}}\right)} (\idop_{A'B'} - \Phi_{A'B'})\\
    &= \tr\lrb{\lrp{J^\cN_{AB}}^T \Delta T_{AB}} \frac{d^2\Phi_{A'B'} - \idop_{A'B'}}{d^2 - 1} + \tr\lrb{\lrp{J^\cN_A}^T \Delta V_A} \frac{\idop_{A'B'} - \Phi_{A'B'}}{d\lrp{d^2 - 1}},
\end{align}
where we denote $\Delta T_{AB} \coloneqq T^+_{AB} - T^-_{AB}$, $\Delta V_A \coloneqq V^+_A - V^-_A$, and $J^\cN_A \coloneqq \tr_B\lrb{J^\cN_{AB}}$. Because $\cN_{A\to B}$ is a quantum channel, the partial trace of its Choi operator over the output system $B$ equals $\idop_{A}$. Hence,
\begin{align}
    J^{\widetilde{\cM}}_{A'B'} &= \tr\lrb{\lrp{J^\cN_{AB}}^T \Delta T_{AB}} \frac{d^2\Phi_{A'B'} - \idop_{A'B'}}{d^2 - 1} + \tr\lrb{\Delta V_A} \frac{\idop_{A'B'} - \Phi_{A'B'}}{d\lrp{d^2 - 1}}.
\end{align}
By further relabeling $dT^\pm_{AB}$ as $T^\pm_{AB}$ and replacing $p_\pm$ with $\tr\lrb{V^\pm_A} / d^2$ lead to SDP~\eqref{appeq:comm-sdp-min-err}.

Because $p_+ + p_- = \tr\lrb{V^+_A + V^-_A} / d^2$, by Proposition~\ref{prop:general-sdp}, we know changing the optimization objective of SDP~\eqref{appeq:comm-sdp-min-err} to $\tr\lrb{V^+_A + V^-_A} / d^2$ and removing the condition $\tr\lrb{V^+_A + V^-_A} / d^2 \leq \gamma$ gives us an SDP for $\gamma^*_{\varepsilon, \rm NS} (\cN,\cI_d)$.
\end{proof}

Now, we turn to zero-error shadow communication. In this case, the preset error tolerance $\varepsilon$ is $0$. We can greatly simplify the SDP for $\gamma^*_{\varepsilon, \rm NS}(\cN,\cI_d)$ using the fact $\varepsilon = 0$.

\begin{lemma}\label{lemma:communication_zero-error_sample-complexity}
    The zero-error minimum sampling cost of shadow communication with ${\rm NS}$ codes is given by the following SDP:
    \begin{subequations}\label{appeq:communication_zero-error_sample-complexity}
    \begin{align}
        \gamma^*_{0, \rm NS} (\cN,\cI_d) = \inf &\; \frac{\tr\lrb{V^+_A + V^-_A}}{d^2} \\
        \text{\rm s.t.} &\; \tr\lrb{\lrp{J^\cN_{AB}}^T \lrp{T^+_{AB} - T^-_{AB}}} = \tr\lrb{V^+_A - V^-_A} = d^2, \\
        &\; 0 \leq T^\pm_{AB} \leq V^\pm_A \ox \idop_B,~ \tr_A\lrb{T^\pm_{AB}} = \frac{\tr\lrb{V^\pm_A}}{d^2} \idop_B.
    \end{align}
    \end{subequations}
\end{lemma}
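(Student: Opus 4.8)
The plan is to start from the minimum–sampling–cost version of the SDP~\eqref{appeq:comm-sdp-min-err} already established in Lemma~\ref{lemma:shadow_comm_min_err} — that is, the program with objective $\tr\lrb{V^+_A + V^-_A}/d^2$ and without the budget constraint $\tr\lrb{V^+_A + V^-_A}/d^2 \le \gamma$ — and simply specialize it to $\varepsilon = 0$. The first thing to observe is that, since the diamond norm is a norm and the diamond–distance SDP block for HPTS maps derived in Appendix~B is exact, the error constraint $\tfrac12\lrV{\cI_d - \widetilde{\cS}(\cN)}_\diamond \le 0$ is equivalent to the exact identity $\widetilde{\cS}(\cN) = \cI_d$, i.e.\ to the \emph{linear} equality $J^{\widetilde{\cM}}_{A'B'} = J^{\cI_d}_{A'B'} = d\,\Phi_{A'B'}$. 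So the program collapses to: minimize $\tr\lrb{V^+_A + V^-_A}/d^2$ over $T^\pm_{AB}, V^\pm_A$ subject to the cone/trace constraints $0 \le T^\pm_{AB} \le V^\pm_A \ox \idop_B$ and $\tr_A\lrb{T^\pm_{AB}} = \tr\lrb{V^\pm_A}/d^2\,\idop_B$, plus the single requirement $J^{\widetilde{\cM}}_{A'B'} = d\,\Phi_{A'B'}$, plus the now-degenerate $Z$-block.

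The key step is then to turn $J^{\widetilde{\cM}}_{A'B'} = d\,\Phi_{A'B'}$ into the scalar constraints appearing in~\eqref{appeq:communication_zero-error_sample-complexity} by matching coefficients in the expression~\eqref{appeq:comm-sdp-min-err_simulated-map}. Writing $a \coloneqq \tr\lrb{(J^\cN_{AB})^T (T^+_{AB} - T^-_{AB})}$ and $b \coloneqq \tr\lrb{V^+_A - V^-_A}$, and using $\frac{d^2\Phi_{A'B'} - \idop_{A'B'}}{d(d^2-1)} = \frac{1}{d}\Phi_{A'B'} - \frac{1}{d(d^2-1)}(\idop_{A'B'} - \Phi_{A'B'})$, the right-hand side of~\eqref{appeq:comm-sdp-min-err_simulated-map} equals $\frac{a}{d}\,\Phi_{A'B'} + \frac{b-a}{d(d^2-1)}\,(\idop_{A'B'} - \Phi_{A'B'})$. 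Because $\Phi_{A'B'}$ and $\idop_{A'B'}-\Phi_{A'B'}$ are orthogonal projectors, hence linearly independent, equating this with $d\,\Phi_{A'B'} + 0\cdot(\idop_{A'B'}-\Phi_{A'B'})$ forces $a/d = d$ and $b - a = 0$, i.e.\ $a = b = d^2$. This is exactly the constraint $\tr\lrb{(J^\cN_{AB})^T (T^+_{AB} - T^-_{AB})} = \tr\lrb{V^+_A - V^-_A} = d^2$ of~\eqref{appeq:communication_zero-error_sample-complexity}.

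Finally I would check that the $Z_{A'B'}$-block is redundant once $a=b=d^2$ holds, so that it can be dropped. From $b = d^2$ and $p_\pm = \tr\lrb{V^\pm_A}/d^2$ we get $p_+ - p_- = 1$, so with $\varepsilon = 0$ the bound $\tr_{B'}[Z_{A'B'}] \le \tfrac12\lrp{2\varepsilon + 1 - \tr\lrb{V^+_A - V^-_A}/d^2}\idop_{A'}$ becomes $\tr_{B'}[Z_{A'B'}] \le 0$, which together with $Z_{A'B'} \ge 0$ forces $Z_{A'B'} = 0$; this is consistent with $Z_{A'B'} \ge J^{\cI_d}_{A'B'} - J^{\widetilde{\cM}}_{A'B'} = 0$. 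Thus $Z_{A'B'}=0$ is always feasible and the block imposes nothing new, while conversely any $(T^\pm_{AB},V^\pm_A)$ feasible for~\eqref{appeq:communication_zero-error_sample-complexity} extends (take $Z_{A'B'}=0$) to a feasible point of the original $\varepsilon=0$ program with the same objective value; hence the two programs have the same optimum, which proves~\eqref{appeq:communication_zero-error_sample-complexity}. I do not expect a genuine obstacle here; the only point deserving a sentence of care is the very first one — confirming that forcing the error to zero really collapses the diamond-distance part of the SDP to the plain equality $J^{\widetilde{\cM}}_{A'B'} = J^{\cI_d}_{A'B'}$ and not to something strictly weaker — which follows from the exactness of the HPTS diamond-distance SDP together with nonnegativity of the norm.
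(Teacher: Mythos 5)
Your proof is correct, and it runs along essentially the same lines as the paper's: specialize the minimum-sampling-cost SDP from Lemma~\ref{lemma:shadow_comm_min_err} to $\varepsilon=0$, conclude that the simulated map must equal $\cI_d$ exactly, and extract the scalar constraints of~\eqref{appeq:communication_zero-error_sample-complexity} by matching coefficients in the $\{\Phi_{A'B'},\,\idop_{A'B'}-\Phi_{A'B'}\}$ decomposition of~\eqref{appeq:comm-sdp-min-err_simulated-map}. Where you diverge from the paper is the route to $J^{\widetilde{\cM}}_{A'B'}=J^{\cI_d}_{A'B'}$. You invoke nonnegativity of the diamond norm together with exactness of the HPTS diamond-distance SDP block (the one point you correctly flag as needing care), which implicitly relies on attainment of that SDP. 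The paper instead runs a sandwich argument directly on the $Z$-block: it squeezes $\tr_{B'}[Z_{A'B'}]$ between $\tfrac12\lrp{1-\tr[V^+_A-V^-_A]/d^2}\idop_{A'}$ from above and both $0$ and $\lrp{1-\tr[V^+_A-V^-_A]/d^2}\idop_{A'}$ from below, which forces $\tr[V^+_A-V^-_A]=d^2$ and $Z_{A'B'}=0$, after which $J^{\widetilde{\cM}}_{A'B'}\geq J^{\cI_d}_{A'B'}$ with equal traces (both equal $d$) gives equality. The paper's route sidesteps the attainment question and obtains $\tr[V^+_A-V^-_A]=d^2$ before the coefficient match; yours obtains both scalar constraints in one step from the matching. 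Both arguments are valid and yield the same final SDP.
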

\begin{proof}
Consider the SDP for $\gamma^*_{\varepsilon, \rm NS}(\cN,\cI_d)$ given in Lemma~\ref{lemma:shadow_comm_min_err}.
For $\varepsilon = 0$, we have
\begin{align}
    \tr_{B'}[Z_{A'B'}] \leq \frac{1}{2} \lrp{1 - \frac{\tr\lrb{\Delta V_A}}{d^2}} \idop_{A'},
\end{align}
where $\Delta V_A \coloneqq V^+_A - V^-_A$.
On the other hand, taking the partial trace of $J^{\widetilde{\cM}}_{A'B'}$ over system $B'$, we get
\begin{align}
    \tr_{B'}\lrb{J^{\widetilde{\cM}}_{A'B'}} = \tr\lrb{\lrp{J^\cN_{AB}}^T \Delta T_{AB}} \frac{d\idop_{A'} - d\idop_{A'}}{d\lrp{d^2 - 1}} + \tr\lrb{\Delta V_A} \frac{d\idop_{A'} - \idop_{A'}/d}{d\lrp{d^2 - 1}} = \tr\lrb{\Delta V_A} \frac{\idop_{A'}}{d^2}.
\end{align}
Then, it follows that
\begin{align}
    \frac{1}{2} \lrp{1 - \frac{\tr\lrb{\Delta V_A}}{d^2}} \idop_{A'} \geq \tr_{B'}[Z_{A'B'}] \geq \tr_{B'}\lrb{J^{\cI_d}_{A'B'} - J^{\widetilde{\cM}}_{A'B'}} = \lrp{1 - \frac{\tr\lrb{\Delta V_A}}{d^2}} \idop_{A'}.
\end{align}
For $Z_{A'B'} \geq 0$, we conclude that $Z_{A'B'} = 0$ and $\tr\lrb{\Delta V_A} = d^2$, implying $J^{\widetilde{\cM}}_{A'B'} = J^\cI_{A'B'}$.

Note that we can also write $J^{\widetilde{\cM}}_{A'B'}$ as
\begin{align}
    J^{\widetilde{\cM}}_{A'B'} = \tr\lrb{\lrp{J^\cN_{AB}}^T \frac{\Delta T_{AB}}{d}} \Phi_{A'B'} + \tr\lrb{\lrp{J^\cN_{AB}}^T \left(\frac{\Delta V_A \ox \idop_B - \Delta T_{AB}}{d\lrp{d^2 - 1}}\right)} (\idop_{A'B'} - \Phi_{A'B'})
\end{align}
by reorganizing Eq.~\eqref{appeq:comm-sdp-min-err_simulated-map} with $\Delta T_{AB} \coloneqq T^+_{AB} - T^-_{AB}$. Because $J^{\widetilde{\cM}}_{A'B'} = J^\cI_{A'B'} = d\Phi_{A'B'}$, it must be true that
\begin{align}
    \tr\lrb{\lrp{J^\cN_{AB}}^T \Delta T_{AB}} = \tr\lrb{\Delta V_A} = d^2.
\end{align}
Hence the proof.
\end{proof}

From this lemma, we can derive an SDP for the one-shot zero-error $\gamma$-cost shadow capacity as follows.

\begin{theorem}\label{thm:zero_error_capacity}
    The one-shot zero-error $\gamma$-cost shadow capacity assisted by no-signaling codes of a quantum channel $\cN_{A\to B}$ is given by the following SDP:
    \begin{subequations}\label{appeq:zero_error_capacity_sdp}
    \begin{align}
        Q^{(1)}_{\gamma, \rm NS} (\cN) = \sup &\; \log_2~ \left\lfloor \sqrt{\tr\lrb{V_A}} \right\rfloor \\
        \text{\rm s.t.} &\; \tr\lrb{\lrp{J^\cN_{AB}}^T T_{AB}} = \tr\lrb{V_A},~ \tr_A\lrb{T_{AB}} = \idop_B,~ 0 \leq T_{AB} + R_{AB} \leq (V_A + W_A) \ox \idop_B, \\
        &\; 0 \leq R_{AB} \leq W_A \ox \idop_B,~ \tr_A\lrb{R_{AB}} = \frac{\gamma - 1}{2} \idop_B,~ \tr\lrb{W_A} = \frac{\gamma - 1}{2} \tr\lrb{V_A}.
    \end{align}
    \end{subequations}
\end{theorem}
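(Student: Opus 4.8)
The strategy is to read $Q^{(1)}_{\gamma, \rm NS}(\cN)$ as the logarithm of the largest integer dimension $d$ for which $\cI_d$ can be exactly shadow simulated over $\cN$ at sampling cost at most $\gamma$, and to match that feasibility question — which Lemma~\ref{lemma:communication_zero-error_sample-complexity} already casts as an SDP for $\gamma^*_{0, \rm NS}(\cN, \cI_d)$ — with feasibility of the stated program at $\tr[V_A] = d^2$. Unwinding the definition of the capacity, a positive integer $d$ is achievable at cost budget $\gamma$ exactly when there is a no-signaling code $\widehat{\cS}$ with $\widetilde{\cS}(\cN) = \cI_d$ and $c_{\rm smp}(\widetilde{\cS}) \le \gamma$, i.e.\ exactly when $\gamma^*_{0, \rm NS}(\cN, \cI_d) \le \gamma$. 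So the first and main step is the equivalence: for every positive integer $d$, one has $\gamma^*_{0, \rm NS}(\cN, \cI_d) \le \gamma$ if and only if the program in the statement is feasible with $\tr[V_A] = d^2$.

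I would establish this equivalence by an explicit change of variables between the variables $(V_A, W_A, T_{AB}, R_{AB})$ of the stated program and the variables $(V^\pm_A, T^\pm_{AB})$ of the Lemma~\ref{lemma:communication_zero-error_sample-complexity} program: set $V^+_A = V_A + W_A$, $V^-_A = W_A$, $T^+_{AB} = T_{AB} + R_{AB}$, $T^-_{AB} = R_{AB}$, with $d^2 = \tr[V_A] = \tr[V^+_A - V^-_A]$. A line-by-line substitution then shows that $0 \le T^\pm_{AB} \le V^\pm_A \ox \idop_B$ turns into the two operator inequalities of the stated program; that $\tr[(J^\cN_{AB})^T(T^+_{AB} - T^-_{AB})] = \tr[V^+_A - V^-_A] = d^2$ turns into $\tr[(J^\cN_{AB})^T T_{AB}] = \tr[V_A]$; and that the trace-scaling conditions $\tr_A[T^\pm_{AB}] = (\tr[V^\pm_A]/d^2)\idop_B$ turn, once the sampling cost $\tr[V^+_A + V^-_A]/d^2$ equals $\gamma$, into $\tr_A[T_{AB}] = \idop_B$, $\tr_A[R_{AB}] = \tfrac{\gamma - 1}{2}\idop_B$, and $\tr[W_A] = \tfrac{\gamma - 1}{2}\tr[V_A]$. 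The only point that is not an immediate rewriting is the cost: Lemma~\ref{lemma:communication_zero-error_sample-complexity} only asks for cost $\le \gamma$, whereas the equalities involving $\tfrac{\gamma-1}{2}$ pin it to exactly $\gamma$. For the direction that needs it (from a $\gamma^*_{0,\rm NS}$-feasible code to a program-feasible point), I would pad the cost with the familiar ``$p\,\widehat{\cS}^0 - p\,\widehat{\cS}^0 = 0$'' device already used above (as in the proofs of Theorem~\ref{thm:ns_code} and Lemma~\ref{lemma:fc_code}): adding such a term for an arbitrary no-signaling channel $\widehat{\cS}^0$ and $p = (\gamma - c)/2$ leaves the bipartite map — hence $\widetilde{\cS}(\cN) = \cI_d$ and $d$ — unchanged while raising the cost from $c$ to $\gamma$; at the operator level this is a common additive shift of the $V^\pm_A$ by a multiple of $\idop_A$ and of the $T^\pm_{AB}$ by the matching multiple of $\idop_{AB}$, and one checks it preserves every constraint of Lemma~\ref{lemma:communication_zero-error_sample-complexity}. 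The reverse direction produces a Lemma~\ref{lemma:communication_zero-error_sample-complexity}-feasible point of cost exactly $\gamma$ with no padding needed.

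It remains to pass from the per-dimension equivalence to the optimized capacity, and this is the step where I expect the only real subtlety — the floor/integrality gap — to sit. Since the constraints of the stated program are linear equalities and semidefinite inequalities, its feasible set is convex, so the set $D := \{\tr[V_A] : (V_A, W_A, T_{AB}, R_{AB})\ \text{feasible}\}$ is an interval; moreover $1 \in D$, because $\cI_1$ can be shadow simulated over any channel at cost $1$ (prepare a fixed input state, apply $\cN$, discard), so $\gamma^*_{0, \rm NS}(\cN, \cI_1) = 1 \le \gamma$ and the equivalence gives feasibility at $\tr[V_A] = 1$. By the equivalence, $Q^{(1)}_{\gamma, \rm NS}(\cN) = \sup\{\log_2 d : d \in \mathbb{Z}^+,\ d^2 \in D\}$, while the optimal value of the stated program is $\sup\{\log_2\lfloor\sqrt{\delta}\rfloor : \delta \in D\}$. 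These coincide: every integer $d$ with $d^2 \in D$ contributes $\log_2 d = \log_2\lfloor\sqrt{d^2}\rfloor$ to the second supremum, giving ``$\ge$''; conversely, for every $\delta \in D$ with $\delta \ge 1$ the integer $d_0 := \lfloor\sqrt{\delta}\rfloor$ satisfies $1 \le d_0^2 \le \delta$, and since $D$ is an interval containing both $1$ and $\delta$ it contains $d_0^2$, so $d_0$ is achievable and $\log_2\lfloor\sqrt{\delta}\rfloor = \log_2 d_0 \le Q^{(1)}_{\gamma, \rm NS}(\cN)$, giving ``$\le$''. This would give the claimed identity. The hard part is thus not any estimate but this structural observation: convexity of the feasible set together with $1 \in D$ forces $D$ to be an interval that reaches below $1$, which is exactly what is needed for the ``continuous'' optimum of $\tr[V_A]$ to round down to a genuinely achievable integer dimension; the change of variables of Lemma~\ref{lemma:communication_zero-error_sample-complexity} and the cost-padding are then routine.
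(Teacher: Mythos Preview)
Your proposal is correct and follows essentially the same route as the paper: the same change of variables $V^+_A = V_A + W_A$, $V^-_A = W_A$, $T^+_{AB} = T_{AB} + R_{AB}$, $T^-_{AB} = R_{AB}$, and the same padding device (an additive shift of $T^\pm_{AB}$ and $V^\pm_A$ by suitable multiples of the identity) to force the sampling cost to equal~$\gamma$. Your explicit treatment of the floor via convexity of the feasible set together with $1 \in D$ spells out a step the paper passes over by simply ``replacing $d^2$ with $\tr[V^+_A - V^-_A]$'' in the Lemma's SDP.
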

\begin{proof}
The SDP given in Lemma~\ref{lemma:communication_zero-error_sample-complexity} allows us to formulate $Q^{(1)}_{\gamma, \rm NS}(\cN)$ as an optimization problem by replacing $d^2$ with $\tr\lrb{V^+_A - V^-_A}$, and the objective of the optimization is to maximize $\log_2~ \left\lfloor \sqrt{\tr\lrb{V^+_A - V^-_A}} \right\rfloor$ according to the definition of $Q^{(1)}_{\gamma, \rm NS}$:
\begin{subequations}
\begin{align}
    Q^{(1)}_{\gamma, \rm NS}(\cN) = \sup &\; \log_2~ \left\lfloor \sqrt{\tr\lrb{V^+_A - V^-_A}} \right\rfloor\\
    \text{\rm s.t.} &\; \tr\lrb{\lrp{J^\cN_{AB}}^T \lrp{T^+_{AB} - T^-_{AB}}} = \tr\lrb{V^+_A - V^-_A},~ \frac{\tr\lrb{V^+_A + V^-_A}}{\tr\lrb{V^+_A - V^-_A}} \leq \gamma, \label{appeq:ns-capacity_original-sdp_inequality}\\
    &\; 0 \leq T^\pm_{AB} \leq V^\pm_A \ox \idop_B,~ \tr_A\lrb{T^\pm_{AB}} = \frac{\tr\lrb{V^\pm_A} \idop_B}{\tr\lrb{V^+_A - V^-_A}}, \label{appeq:ns-capacity_original-sdp_trace-a-t}
\end{align}
\end{subequations}
where the inequality in condition~\eqref{appeq:ns-capacity_original-sdp_inequality} corresponds to the limited cost budget.
This is not an SDP, but observe that the equality in condition~\eqref{appeq:ns-capacity_original-sdp_trace-a-t} is equivalent to the following two equations:
\begin{align}
    \tr_A\lrb{T^+_{AB} - T^-_{AB}} = \idop_B \quad\text{and}\quad \tr_A\lrb{T^+_{AB} + T^-_{AB}} = \frac{\tr\lrb{V^+_A + V^-_A}}{\tr\lrb{V^+_A - V^-_A}}\idop_B.
\end{align}
In addition, the inequality in condition~\eqref{appeq:ns-capacity_original-sdp_inequality} can be restricted to equality without affecting the optimization result because if $\bar{T}^\pm_{AB}$ and $\bar{V}^\pm_{A}$ form a set of optimal solution such that $\tr\lrb{\bar{V}^+_A + \bar{V}^-_A} / \tr\lrb{\bar{V}^+_A - \bar{V}^-_A} = \bar{\gamma} < \gamma$, then $\bar{T}'^\pm_{AB} \coloneqq \bar{T}^\pm_{AB} + (\gamma - \bar{\gamma}) \idop_{AB} / 2d_A$ and $\bar{V}'^\pm_{A} \coloneqq \bar{V}^\pm_{A} + \lrp{\gamma - \bar{\gamma}} \tr\lrb{\bar{V}^+_A - \bar{V}^-_A} \idop_{A} / 2d_A$ also form a set of optimal solution with $\tr\lrb{\bar{V}'^+_A + \bar{V}'^-_A} / \tr\lrb{\bar{V}'^+_A - \bar{V}'^-_A} = \gamma$.
Note that for $\bar{T}'^\pm_{AB}$ and $\bar{V}'^\pm_A$ to be valid solution, it must be true that $\tr\lrb{\bar{V}^+_A - \bar{V}^-_A} \geq 1$ so that $\bar{V}'^\pm_A \ox \idop_B \geq \bar{T}'^\pm_{AB}$. This is indeed the case because from the constraint $T^\pm_{A'B'} \leq V^\pm_A \ox \idop_B$ we have $\tr_A\lrb{\bar{T}^+_{AB} + \bar{T}^-_{AB}} \leq \tr\lrb{\bar{V}^+_A + \bar{V}^-_A} \idop_B$. For the equality $\tr_A\lrb{\bar{T}^+_{AB} + \bar{T}^-_{AB}} = \tr\lrb{\bar{V}^+_A + \bar{V}^-_A} \idop_B / \tr\lrb{\bar{V}^+_A - \bar{V}^-_A}$ to hold, $\tr\lrb{\bar{V}^+_A - \bar{V}^-_A}$ has to be larger than or equal to $1$.

Now we can safely require $\tr\lrb{V^+_A + V^-_A} = \gamma \tr\lrb{V^+_A - V^-_A}$ and thus $\tr_A\lrb{T^+_{AB} + T^-_{AB}} = \gamma \idop_B$. Changing the variables to $T_{AB} \coloneqq T^+_{AB} - T^-_{AB}$, $R_{AB} \coloneqq T^-_{AB}$, $V_A \coloneqq V^+_A - V^-_A$, and $W_A \coloneqq V^-_A$ results in the claimed SDP.
\end{proof}

In the main text, we claimed that $Q^{(1)}_{\gamma, \rm NS}$ generalizes the no-signaling-assisted one-shot zero-error quantum capacity $Q^{(1)}_{\rm NS}$ in the sense that $Q^{(1)}_{1, \rm NS}(\cN) = Q^{(1)}_{\rm NS}(\cN)$ for any quantum channel $\cN$.
To see this, note that when $\gamma = 1$, variables $R_{AB}$ and $W_A$ from SDP~\eqref{appeq:zero_error_capacity_sdp} satisfy $\tr_A\lrb{R_{AB}} = 0$ and $\tr\lrb{W_A} = 0$. Because both $R_{AB}$ and $W_A$ are positive semidefinite operators, they can only be $0$. Therefore, the original SDP~\eqref{appeq:zero_error_capacity_sdp} reduces to
\begin{subequations}
\begin{align}
    \sup &\; \log_2~ \left\lfloor \sqrt{\tr\lrb{V_A}} \right\rfloor \\
    \text{\rm s.t.} &\; \tr\lrb{\lrp{J^\cN_{AB}}^T T_{AB}} = \tr\lrb{V_A},~ \tr_A\lrb{T_{AB}} = \idop_B,~ 0 \leq T_{AB} \leq V_A \ox \idop_B,
\end{align}
\end{subequations}
which is an SDP for $Q^{(1)}_{\rm NS}(\cN)$~\cite{duan2015no, wang2016semidefinite}.

Below, we provide an exact characterization of $Q^{(1)}_{\gamma, \rm NS}$ for single-qubit depolarizing channels.

\renewcommand\theproposition{\ref{thm:depo_capacity}}
\setcounter{proposition}{\arabic{proposition}-1}
\begin{theorem}
    Let  $\cN_{{\rm depo}, p}(\rho) = p\rho + (1-p)\idop_2/2$ be a single-qubit depolarizing channel, where $p\in [0,1]$ is a probability and $\idop_2$ is the identity operator on $\mathbb{C}^2$.
    For $\gamma \geq 1$, the one-shot zero-error shadow capacity assisted by no-signaling resources is
    \begin{align}
        Q^{(1)}_{\gamma, \rm NS}(\cN_{{\rm depo}, p}) = \log_2 \left\lfloor \sqrt{2p\gamma + p + 1} \right\rfloor.
    \end{align}
\end{theorem}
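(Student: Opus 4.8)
The plan is to evaluate the SDP of Theorem~\ref{thm:zero_error_capacity} for $\cN_{{\rm depo}, p}$ by using the unitary covariance of the depolarizing channel to collapse it to a two-variable linear program, and then to read off the capacity by applying the nondecreasing function $\log_2\lfloor\sqrt{\,\cdot\,}\rfloor$. For a qubit the Choi operator is $J^{\cN_{{\rm depo}, p}}_{AB}=2p\,\Phi_{AB}+\tfrac{1-p}{2}\idop_{AB}$, where $\Phi_{AB}=\proj{\Phi}$ is the normalized maximally entangled state, and $(J^{\cN_{{\rm depo}, p}}_{AB})^{T}=J^{\cN_{{\rm depo}, p}}_{AB}$. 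Since this operator is invariant under $U_A\ox\overline{U}_B$ for every qubit unitary $U$, averaging a feasible point of the SDP over $U_A\ox\overline{U}_B$ (acting on $T_{AB},R_{AB}$) and over $U_A$ (acting on $V_A,W_A$) yields another feasible point with the same value of $\tr[V_A]$, by the same mechanism as the twirl in the proof of Lemma~\ref{lemma:shadow_comm_min_err}. Hence I may assume $V_A=\tfrac12\tr[V_A]\,\idop_A$, $W_A=\tfrac{\gamma-1}{4}\tr[V_A]\,\idop_A$, and that $T^{+}_{AB}\coloneqq T_{AB}+R_{AB}$ and $T^{-}_{AB}\coloneqq R_{AB}$ take the form $T^{\pm}_{AB}=a_\pm\Phi_{AB}+b_\pm(\idop_{AB}-\Phi_{AB})$ with scalars $a_\pm,b_\pm\ge0$.

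Next I would translate every SDP constraint into the scalars. With $v\coloneqq\tr[V_A]$, the partial-trace conditions $\tr_A[T_{AB}]=\idop_B$ and $\tr_A[R_{AB}]=\tfrac{\gamma-1}{2}\idop_B$ become $a_++3b_+=\gamma+1$ and $a_-+3b_-=\gamma-1$ (using $\tr_A[\Phi_{AB}]=\tfrac12\idop_B$); the operator inequalities $0\le T^{+}_{AB}\le(V_A+W_A)\ox\idop_B$ and $0\le T^{-}_{AB}\le W_A\ox\idop_B$ become $0\le a_\pm,b_\pm\le\tfrac{(\gamma\pm1)v}{4}$; and $\tr[(J^{\cN}_{AB})^{T}T_{AB}]=v$ becomes $a_+-a_-=\tfrac{v-1+p}{2p}$ (using $\tr[\Phi^{2}_{AB}]=\tr[\Phi_{AB}]=1$, $\tr[\Phi_{AB}(\idop_{AB}-\Phi_{AB})]=0$, $\tr[\idop_{AB}(\idop_{AB}-\Phi_{AB})]=3$). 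So the SDP reduces to maximizing $v$ over nonnegative reals subject to these relations.

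I would then solve this linear program. From $a_++3b_+=\gamma+1$ and $b_+\ge0$ one gets $a_+\le\gamma+1$, so with $a_-\ge0$ every feasible point obeys $\tfrac{v-1+p}{2p}=a_+-a_-\le\gamma+1$, i.e.\ $v\le2p\gamma+p+1$. Conversely, when $2p\gamma+p+1\ge4$, the point $v=2p\gamma+p+1$, $a_+=\gamma+1$, $b_+=0$, $a_-=0$, $b_-=\tfrac{\gamma-1}{3}$ (with $V_A,W_A$ the corresponding scalar operators) satisfies all constraints --- in particular $a_+\le\tfrac{(\gamma+1)v}{4}$ and $b_-\le\tfrac{(\gamma-1)v}{4}$ because $v\ge4$ --- so $\sup v=2p\gamma+p+1$ and, by monotonicity of $\log_2\lfloor\sqrt{\,\cdot\,}\rfloor$, $Q^{(1)}_{\gamma, \rm NS}(\cN_{{\rm depo}, p})=\log_2\lfloor\sqrt{2p\gamma+p+1}\rfloor$. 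If instead $2p\gamma+p+1<4$, i.e.\ $\lfloor\sqrt{2p\gamma+p+1}\rfloor=1$, then $v\le2p\gamma+p+1<4$ forces $\lfloor\sqrt v\rfloor\le1$ at every feasible point, so $Q^{(1)}_{\gamma, \rm NS}(\cN_{{\rm depo}, p})\le0$; together with the trivial lower bound $Q^{(1)}_{\gamma, \rm NS}\ge\log_2 1=0$ (simulate the one-dimensional identity channel by a discard-and-prepare code), the formula holds here too.

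The step I expect to be the main obstacle is the symmetry reduction: one must check that the twirl keeps a point feasible for \emph{all} constraints of the SDP of Theorem~\ref{thm:zero_error_capacity} --- including the no-signaling conditions folded into its derivation --- and that $J^{\cN_{{\rm depo}, p}}_{AB}$ and its transpose lie in the fixed-point algebra $\mathrm{span}\{\Phi_{AB},\idop_{AB}-\Phi_{AB}\}$ of the $U_A\ox\overline{U}_B$ twirl. The remaining work, namely the scalar identities above and the bookkeeping of the floor function at the threshold $v=2p\gamma+p+1$, is routine.
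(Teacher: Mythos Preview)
Your argument is correct and takes a genuinely different route from the paper's. The paper works with the minimum-cost SDP of Lemma~\ref{lemma:communication_zero-error_sample-complexity}: it exhibits an explicit primal feasible point for $\gamma^*_{0,\rm NS}(\cN_{{\rm depo},p},\cI_d)$, derives the Lagrangian dual, exhibits a matching dual feasible point, obtains $\gamma^*_{0,\rm NS}=(d^2-1-p)/(2p)$, and then inverts this relation between $d$ and $\gamma$; the case $p=0$ is handled separately. You instead attack the capacity SDP of Theorem~\ref{thm:zero_error_capacity} directly, use the $U_A\otimes\overline U_B$ covariance of $J^{\cN_{{\rm depo},p}}_{AB}$ to reduce it to a finite linear program in $(a_\pm,b_\pm,v)$, and solve that LP by hand. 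Your approach is more elementary (no Lagrangian duality, no need to guess dual certificates) and explains transparently \emph{why} the optimum is $v=2p\gamma+p+1$; the paper's primal--dual approach is more mechanical but would generalize more readily to channels without such clean covariance.

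Two small points to tidy. First, your derivation of $a_+-a_-=(v-1+p)/(2p)$ silently uses the relation $(a_+-a_-)+3(b_+-b_-)=2$ obtained by subtracting the two partial-trace constraints; it is worth making this explicit, since otherwise the trace constraint $\tr[(J^\cN)^T T]=v$ involves both $a_+-a_-$ and $b_+-b_-$. Second, your upper bound $v\le 2p\gamma+p+1$ is obtained by dividing by $p$, so the case $p=0$ is not covered by that step; there the constraint $2p(a_+-a_-)+1-p=v$ degenerates to $v=1$, which lands you directly in your ``$2p\gamma+p+1<4$'' branch and gives capacity $0$ as required. Neither point affects the validity of your strategy.
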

\renewcommand{\theproposition}{\arabic{proposition}}
\begin{proof}
The Choi operator of the depolarizing channel $\cN_{{\rm depo}, p}$ from qubit system $A$ to qubit system $B$ is
\begin{align}
    J^{\cN_{{\rm depo}, p}}_{AB} = p\Gamma_{AB} + \frac{1-p}{2}\idop_{AB},
\end{align}
where $\Gamma_{AB} \coloneqq \proj{\Gamma}_{AB}$ is the unnormalized maximally entangled state with $\ket{\Gamma}_{AB} \coloneqq \sum_{j=0}^1 \ket{j}_{A}\ket{j}_{B}$.
We first consider the case where $p \in (0,1]$.
It is straightforward to verify that
\begin{align}
    \bar{T}^+_{AB} \coloneqq \frac{d^2-1+p}{4p} \Gamma_{AB},&\quad \bar{T}^-_{AB} \coloneqq \frac{d^2-1-3p}{6p} \idop_{AB} - \frac{d^2-1-3p}{12p} \Gamma_{AB},\\
    \bar{V}^+_{A} \coloneqq \frac{d^2(d^2 - 1 + p)}{8p} \idop_A,&\quad \bar{V}^-_{A} \coloneqq \frac{d^2(d^2-1-3p)}{8p} \idop_A
\end{align}
form a feasible solution to the SDP for $\gamma^*_{0, \rm NS}(\cN_{{\rm depo}, p}, \cI_d)$ as presented in Lemma~\ref{lemma:communication_zero-error_sample-complexity}, implying
\begin{align}\label{appeq:depo_sample_upper}
    \gamma^*_{0, \rm NS}(\cN_{{\rm depo}, p}, \cI_d) \leq \frac{\tr\lrb{\bar{V}^+_{A} + \bar{V}^-_{A}}}{d^2} = \frac{d^2-1-p}{2p}.
\end{align}
Using the Lagrange dual function, we can derive that the following problem is the dual problem associated with SDP~\eqref{appeq:communication_zero-error_sample-complexity}:
\begin{align}
    \gamma^*_{0, \rm NS}(\cN, \cI_d) \geq \sup &\; \lambda - \mu \\
    \text{\rm s.t.} &\; M^\pm_{AB} \geq 0,~ M^\pm_{AB} + d^2\idop_A \ox N^\pm_{B} \geq \pm\lambda\lrp{J^\cN_{AB}}^T, \\
    &\; \tr_B\lrb{M^\pm_{AB}} = \lrp{1 - \tr\lrb{N^\pm_B} \pm \mu}\idop_A.
\end{align}
Again, it is straightforward to verify that, given $J^\cN_{AB} = J^{\cN_{{\rm depo}, p}}_{AB}$,
\begin{align}
    \bar{\lambda} \coloneqq \frac{d^2}{2p},\quad \bar{\mu} \coloneqq \frac{1+p}{2p},\quad \bar{N}^+_{B} \coloneqq \frac{1+3p}{4p}\idop_B,\quad \bar{N}^-_{B} \coloneqq \frac{p-1}{4p}\idop_B,\quad M^\pm_{AB} = 0
\end{align}
form a feasible solution to the dual problem, implying
\begin{align}\label{appeq:depo_sample_lower}
    \gamma^*_{0, \rm NS}(\cN_{{\rm depo}, p}, \cI_d) \geq \bar{\lambda} - \bar{\mu} = \frac{d^2-1-p}{2p}.
\end{align}
Combining Eqs.~\eqref{appeq:depo_sample_upper} and~\eqref{appeq:depo_sample_lower}, we conclude that
\begin{align}
    \gamma^*_{0, \rm NS}(\cN_{{\rm depo}, p}, \cI_d) = \frac{d^2-1-p}{2p},
\end{align}
which is the minimum sampling cost required to simulate the $d$-dimensional identity channel $\cI_d$ from $\cN_{{\rm depo}, p}$. In other words, for any $\gamma$ such that
\begin{align}
    \frac{d^2-1-p}{2p} \leq \gamma < \frac{(d+1)^2-1-p}{2p},
\end{align}
we have $Q^{(1)}_{\gamma, \rm NS}(\cN_{{\rm depo}, p}) = \log_2 d$. Solving for the value of $d$ in terms of $\gamma$, we obtain $d = \left\lfloor\sqrt{2p\gamma + p + 1}\right\rfloor$, and hence
\begin{align}
    Q^{(1)}_{\gamma, \rm NS}(\cN_{{\rm depo}, p}) = \log_2 \left\lfloor\sqrt{2p\gamma + p + 1}\right\rfloor.
\end{align}

When $p=0$, the Choi operator of the depolarizing channel is simply $\idop_{AB} / 2$. Taking this into SDP~\eqref{appeq:zero_error_capacity_sdp}, we see that $\tr[V_A]$ can only takes a fixed value of $1$. Hence, $Q^{(1)}_{\gamma, \rm NS}(\cN_{{\rm depo}, p}) = 0$ for $p=0$ and arbitrary $\gamma$, which coincides with the value that $\log_2 \left\lfloor\sqrt{2p\gamma + p + 1}\right\rfloor$ evaluates to. Therefore, $Q^{(1)}_{\gamma, \rm NS}(\cN_{{\rm depo}, p}) = \log_2 \left\lfloor\sqrt{2p\gamma + p + 1}\right\rfloor$ for any $p\in[0,1]$.
\end{proof}

To further showcase the difference between shadow simulation and conventional quantum channel simulation, we, in addition to the single-qubit depolarizing channel $\cN_{\rm depo}(\cdot) = p(\rho) + (1-p)\idop_2/2$, consider two other common quantum channels:
the single-qubit amplitude damping channel $\cN_{\rm AD}$ with two Kraus operators $\proj{0} + \sqrt{p}\proj{1}$ and $\sqrt{1-p}\ketbra{0}{1}$ and the single-qubit dephasing channel $\cN_{\rm deph}(\cdot) = p(\cdot) + (1-p){\rm diag}(\cdot)$.
For each channel, the parameter $p \in [0,1]$ indicates the level of noise.

\begin{figure}[h]
    \centering
    \includegraphics[width=0.49\textwidth]{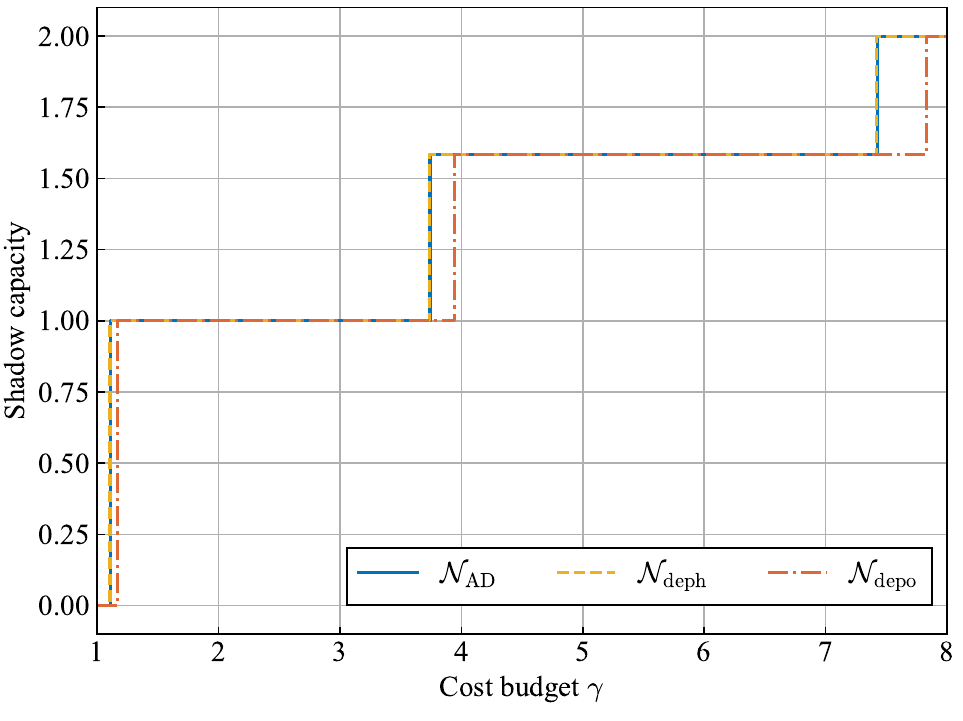}
    \caption{{\bf Comparison between the conventional no-signaling-assisted quantum communication and no-signaling-assisted shadow communication under different cost budgets.}
    Compared with the quantum case, where the one-shot zero-error quantum capacities are $0$ for all channels, higher one-shot zero-error shadow capacity is achieved for every channel with increased cost budget.}
    \label{fig:capacity}
\end{figure}

In Fig.~\ref{fig:capacity}, we present some numerical results on these channels at a low noise level ($p=0.9$).
For conventional quantum communication with no-signaling codes, all these channels' one-shot zero-error capacities are zero. For shadow communication, on the other hand, the one-shot zero-error capacities of these three channels steadily go up as the budget for sampling cost increases. The stepwise changes in the zero-error capacity show that we can trade in computational resources for better performance in shadow communication, attaining computational power beyond purely quantum protocols.

\section{Appendix D: Channel Formation}
\renewcommand{\theequation}{D\arabic{equation}}
\setcounter{equation}{0}
\renewcommand{\theHequation}{D\arabic{equation}}
In this section, we derive the SDP for $S^{(1)}_{\gamma, \rm NS}$, the one-shot zero-error $\gamma$-cost shadow simulation cost assisted by no-signaling codes. Along the way, we derive SDPs for some related quantities, which are of interest on their own.

To begin with, the minimum error and minimum sampling cost of shadow simulation with a noiseless channel can be solved by SDPs in Lemma~\ref{lemma:noise_sim_min_err}.
We omit the proof here as it is very similar to the proof of Lemma~\ref{lemma:shadow_comm_min_err}.

\begin{lemma}\label{lemma:noise_sim_min_err}
    For an identity channel $\cI_d$ with dimension $d$ and a target quantum channel $\cM_{A'\to B'}$, the minimum error of the simulation from $\cI_d$ to $\cM_{A'\to B'}$ assisted by ${\rm NS}$ codes with a cost budget $\gamma$ is given by the following SDP:
    \begin{subequations}
    \begin{align}
        \varepsilon^*_{\gamma, \rm NS} (\cI_d,\cN) = \inf &\; \varepsilon \\
        \text{\rm s.t.} &\; Z_{A'B'} \geq 0,~ Z_{A'B'} \geq J^\cM_{A'B'} - Y^+_{A'B'} + Y^-_{A'B'}, \\
        &\; \tr_{B'}[Z_{A'B'}] \leq \frac{1}{2}\lrp{2\varepsilon + 1 - \frac{\tr\lrb{V^+_{B'} - V^-_{B'}}}{d^2}}\idop_{A'}, \\
        &\; 0 \leq Y^\pm_{A'B'} \leq \idop_{A'} \ox V^\pm_{B'},~ \tr_{B'}\lrb{Y^\pm_{A'B'}} = \frac{\tr\lrb{V^\pm_{B'}}}{d^2} \idop_{A'},~ \frac{\tr\lrb{V^+_{B'} + V^-_{B'}}}{d^2} \leq \gamma. \label{appeq:noise-sim-sdp_y-and-v-euqality}
    \end{align}
    \end{subequations}
    Similarly, the minimum sampling cost $\gamma_{\varepsilon, \rm NS}^*(\cI_d, \cM_{A'\to B'})$ with an error tolerance $\varepsilon$ is given by changing the optimization objective of the above SDP to $\tr\lrb{V^+_{B'} + V^-_{B'}} / d^2$ and removing the condition $\tr\lrb{V^+_{B'} + V^-_{B'}} / d^2 \leq \gamma$.
\end{lemma}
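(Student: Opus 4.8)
The plan is to mirror the proof of Lemma~\ref{lemma:shadow_comm_min_err} essentially line by line, the only structural change being that the symmetrization is now performed on the \emph{resource} systems $AB$ rather than on the target systems $A'B'$, via the role swap $A\leftrightarrow B'$, $B\leftrightarrow A'$. I would start from the general SDP of Proposition~\ref{prop:general-sdp} specialized to $\cN=\cI_d$, so that $d_A=d_B=d$ and the Choi operator of the resource channel is $J^{\cN}_{AB}=\Gamma_{AB}=d\,\Phi_{AB}$, the unnormalized maximally entangled operator on $AB$. The crucial observation is that $(J^{\cN}_{AB})^{T}=\Gamma_{AB}$ is invariant under conjugation by $V_A\ox\overline{V}_B$ for every unitary $V$ on $\CC^{d}$ (this conjugation being available precisely because $d_A=d_B$). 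Using this invariance one checks that if $\lrp{J^{\widehat\cS^+}_{A'BAB'},J^{\widehat\cS^-}_{A'BAB'}}$ is a feasible point with objective value $\varepsilon$, then so is the pair obtained by conjugating both operators by $V_A\ox\overline{V}_B\ox\idop_{A'B'}$: in the simulated-map constraint~\eqref{appeq:general-sdp_simulated-map} the conjugation is absorbed into $(J^{\cN}_{AB})^{T}$ by cyclicity of the trace over $AB$, so $J^{\widetilde{\cM}}_{A'B'}$ and hence the diamond-distance-to-$\cM$ constraint are unchanged; the CPTP constraint~\eqref{appeq:general-sdp_cptp} and the two no-signaling constraints~\eqref{appeq:general-sdp_no-sig_1}--\eqref{appeq:general-sdp_no-sig_2} are preserved because they are built only from partial traces and identity operators, which commute with the conjugation.

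By convexity and linearity of all constraints, I would then replace $J^{\widehat\cS^\pm}$ by its Haar twirl $\int dV\,\lrp{V_A\ox\overline{V}_B\ox\idop_{A'B'}}J^{\widehat\cS^\pm}\lrp{V_A\ox\overline{V}_B\ox\idop_{A'B'}}^{\dagger}$. By the same Rains identity used in the proof of Lemma~\ref{lemma:shadow_comm_min_err}, the twirl over $V_A\ox\overline{V}_B$ maps $X_{AB}\mapsto\tr\lrb{X_{AB}\Phi_{AB}}\Phi_{AB}+\tr\lrb{X_{AB}\lrp{\idop_{AB}-\Phi_{AB}}}\lrp{\idop_{AB}-\Phi_{AB}}/\lrp{d^{2}-1}$, so the twirled Choi operator decomposes as $\Phi_{AB}\ox Y^\pm_{A'B'}+\lrp{\idop_{AB}-\Phi_{AB}}\ox W^\pm_{A'B'}$ for a priori unconstrained Hermitian operators $Y^\pm,W^\pm$ on $A'B'$. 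Substituting this ansatz into the SDP and using $\tr_A\lrb{\Phi_{AB}}=\idop_B/d$, $\tr\lrb{\Phi_{AB}\Gamma_{AB}}=d$, and $\tr\lrb{\lrp{\idop_{AB}-\Phi_{AB}}\Gamma_{AB}}=0$, one obtains, in exact parallel with Eqs.~\eqref{appeq:simplified_trace_scaling}--\eqref{appeq:sdp_lower_sample_simp_t-and-w}: that $J^{\widetilde{\cM}}_{A'B'}=d\lrp{Y^{+}-Y^{-}}$; that positivity becomes $Y^\pm\geq 0$, $W^\pm\geq 0$; and that the no-signaling and trace-scaling conditions collapse to $\tr_{B'}\lrb{Y^\pm}=\tr_{B'}\lrb{W^\pm}=\lrp{p_\pm/d}\idop_{A'}$ together with $Y^\pm+\lrp{d^{2}-1}W^\pm=\tr_{A'}\lrb{Y^\pm+\lrp{d^{2}-1}W^\pm}\ox\idop_{A'}/d_{A'}$.

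Finally I would eliminate $W^\pm$ exactly as in the communication proof: introducing $V^\pm_{B'}\coloneqq d\,\tr_{A'}\lrb{Y^\pm+\lrp{d^{2}-1}W^\pm}/d_{A'}$ and solving for $\lrp{d^{2}-1}W^\pm$ in terms of $Y^\pm$ and $V^\pm_{B'}$, the condition $W^\pm\geq 0$ becomes—after the rescaling $d\,Y^\pm\mapsto Y^\pm$—the operator inequality $Y^\pm\leq\idop_{A'}\ox V^\pm_{B'}$, the trace conditions become $\tr\lrb{V^\pm_{B'}}=d^{2}p_\pm$ and $\tr_{B'}\lrb{Y^\pm}=\tr\lrb{V^\pm_{B'}}\idop_{A'}/d^{2}$, and $J^{\widetilde{\cM}}_{A'B'}=Y^{+}-Y^{-}$; substituting $p_\pm=\tr\lrb{V^\pm_{B'}}/d^{2}$ throughout recovers exactly the claimed SDP, with the $Z_{A'B'}$ block inherited verbatim from the HPTS diamond-distance SDP of Appendix~B and Eq.~\eqref{appeq:noise-sim-sdp_y-and-v-euqality} emerging from the collapsed constraints. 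The sampling-cost SDP then follows by the same reasoning as in Proposition~\ref{prop:general-sdp}: change the objective to $p_{+}+p_{-}=\tr\lrb{V^{+}_{B'}+V^{-}_{B'}}/d^{2}$ and drop the cost-budget inequality. The main obstacle—and the only point requiring care beyond transcription—is verifying that the twirl on the resource pair $AB$ (which is what leaves $(J^{\cN}_{AB})^{T}$, and therefore $J^{\widetilde{\cM}}_{A'B'}$, invariant) is compatible with \emph{both} no-signaling constraints simultaneously; once the substitution $A\leftrightarrow B'$, $B\leftrightarrow A'$ is tracked consistently through every partial trace, the remaining computation is identical to the one carried out in the proof of Lemma~\ref{lemma:shadow_comm_min_err}.
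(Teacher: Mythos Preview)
Your proposal is correct and is precisely the approach the paper has in mind: the paper omits the proof entirely, stating that it is ``very similar to the proof of Lemma~\ref{lemma:shadow_comm_min_err},'' and your role swap $A\leftrightarrow B'$, $B\leftrightarrow A'$ together with the Haar twirl on the resource pair $AB$ (exploiting $(V_A\ox\overline V_B)\Gamma_{AB}(V_A\ox\overline V_B)^\dagger=\Gamma_{AB}$) is exactly the transcription that turns the communication proof into the formation proof. The verification you flag as the only delicate point---that the $AB$-twirl preserves both no-signaling constraints and leaves $J^{\widetilde\cM}_{A'B'}$ unchanged---is handled correctly.
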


Provided with Lemma~\ref{lemma:noise_sim_min_err}, we now give an SDP for the zero-error minimum sampling cost of the shadow simulation of a noisy channel via a noiseless one.

\begin{lemma}\label{lemma:zero-error_simulation_noisy_channel}
    The zero-error minimum sampling cost of the shadow simulation of a channel $\cM_{A'\to B'}$ via a $d$-dimensional identity channel $\cI_d$ assisted by ${\rm NS}$ codes is given by the following SDP:
    \begin{subequations}\label{appeq:zero-error_simulation_noisy_channel_sdp}
    \begin{align}
        \gamma_{0, \rm NS}^*(\cI_d, \cM_{A'\to B'}) = \inf &\; \frac{\tr\lrb{V^+_{B'} + V^-_{B'}}}{d^2} \\
        \text{\rm s.t.} &\; \tr\lrb{V^+_{B'} - V^-_{B'}} = d^2,~ \tr_{B'}\lrb{R_{A'B'}} = \frac{\tr\lrb{V^-_{B'}}}{d^2} \idop_{A'}, \\
        &\; J^\cM_{A'B'} + R_{A'B'} \leq \idop_{A'} \ox V^+_{B'},~ 0 \leq R_{A'B'} \leq \idop_{A'} \ox V^-_{B'}
    \end{align}
    \end{subequations}
\end{lemma}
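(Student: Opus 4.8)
The plan is to specialize the general minimum–sampling–cost SDP of Lemma~\ref{lemma:noise_sim_min_err} to the zero–error case $\varepsilon=0$, mirroring the passage from Lemma~\ref{lemma:shadow_comm_min_err} to Lemma~\ref{lemma:communication_zero-error_sample-complexity}. Recall that the simulated map has Choi operator $J^{\widetilde\cM}_{A'B'}=Y^+_{A'B'}-Y^-_{A'B'}$, and that $\cM_{A'\to B'}$ being a channel gives $\tr_{B'}\lrb{J^\cM_{A'B'}}=\idop_{A'}$. Writing $\Delta V\coloneqq V^+_{B'}-V^-_{B'}$ and using $\tr_{B'}\lrb{Y^\pm_{A'B'}}=\tr\lrb{V^\pm_{B'}}\idop_{A'}/d^2$, apply $\tr_{B'}$ to the constraint $Z_{A'B'}\geq J^\cM_{A'B'}-Y^+_{A'B'}+Y^-_{A'B'}$ to obtain
\begin{align*}
    \tr_{B'}\lrb{Z_{A'B'}} \geq \lrp{1-\frac{\tr\lrb{\Delta V}}{d^2}}\idop_{A'}.
\end{align*}
Comparing this with the $\varepsilon=0$ form of the budget constraint, $\tr_{B'}\lrb{Z_{A'B'}}\leq\tfrac12\lrp{1-\tr\lrb{\Delta V}/d^2}\idop_{A'}$, and using $Z_{A'B'}\geq0$ (hence $\tr_{B'}\lrb{Z_{A'B'}}\geq0$), one is squeezed into $1-\tr\lrb{\Delta V}/d^2=0$, i.e.\ $\tr\lrb{\Delta V}=d^2$, and therefore $\tr_{B'}\lrb{Z_{A'B'}}=0$, which forces $Z_{A'B'}=0$ (a positive operator with vanishing partial trace is zero). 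With $Z_{A'B'}=0$, the positivity constraint collapses to $J^\cM_{A'B'}+Y^-_{A'B'}\leq Y^+_{A'B'}$.

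Next I would eliminate the variable $Y^+_{A'B'}$. Put $R_{A'B'}\coloneqq Y^-_{A'B'}$; then the constraints on $Y^-$ become exactly $0\leq R_{A'B'}\leq\idop_{A'}\ox V^-_{B'}$ and $\tr_{B'}\lrb{R_{A'B'}}=\tr\lrb{V^-_{B'}}\idop_{A'}/d^2$, and $J^\cM_{A'B'}+R_{A'B'}\leq Y^+_{A'B'}\leq\idop_{A'}\ox V^+_{B'}$. The key observation is that, because $\tr\lrb{\Delta V}=d^2$ forces $\tr\lrb{V^+_{B'}}=\tr\lrb{V^-_{B'}}+d^2$, one has $\tr_{B'}\lrb{J^\cM_{A'B'}+R_{A'B'}}=\lrp{1+\tr\lrb{V^-_{B'}}/d^2}\idop_{A'}=\lrp{\tr\lrb{V^+_{B'}}/d^2}\idop_{A'}$, which is precisely the trace–scaling value demanded of $Y^+_{A'B'}$. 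Consequently: (i) from any feasible point of Lemma~\ref{lemma:noise_sim_min_err} at $\varepsilon=0$, the triple $\lrp{V^\pm_{B'},R_{A'B'}}$ satisfies every constraint of SDP~\eqref{appeq:zero-error_simulation_noisy_channel_sdp} with the same objective; (ii) conversely, from any feasible point of SDP~\eqref{appeq:zero-error_simulation_noisy_channel_sdp} the assignment $Z_{A'B'}=0$, $Y^-_{A'B'}=R_{A'B'}$, $Y^+_{A'B'}=J^\cM_{A'B'}+R_{A'B'}$ is feasible for Lemma~\ref{lemma:noise_sim_min_err} at $\varepsilon=0$ with the same objective — here one checks $Y^+_{A'B'}\geq0$ (a sum of positive operators), $Y^+_{A'B'}\leq\idop_{A'}\ox V^+_{B'}$ (the inequality constraint of~\eqref{appeq:zero-error_simulation_noisy_channel_sdp}), $\tr_{B'}\lrb{Y^+_{A'B'}}=\tr\lrb{V^+_{B'}}\idop_{A'}/d^2$ (by the identity above, which uses $\tr\lrb{\Delta V}=d^2$), and $J^\cM_{A'B'}-Y^+_{A'B'}+Y^-_{A'B'}=0=Z_{A'B'}$. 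This mutual inclusion of feasible regions shows that the two optimal values coincide, and by definition the left-hand side is $\gamma^*_{0,\rm NS}(\cI_d,\cM_{A'\to B'})$.

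The routine part is verifying the handful of operator inequalities and partial-trace identities above; the content is entirely in two observations: first, that the $\varepsilon=0$ budget constraint together with $Z_{A'B'}\geq0$ pins $\tr\lrb{\Delta V}$ to exactly $d^2$ and annihilates $Z_{A'B'}$; second, that once this holds the trace-scaling condition on $Y^+_{A'B'}$ is automatically met by the canonical choice $Y^+_{A'B'}=J^\cM_{A'B'}+R_{A'B'}$, so that $Y^+_{A'B'}$ can be removed from the program. I expect the second point — matching $\tr_{B'}\lrb{J^\cM_{A'B'}+R_{A'B'}}$ with $\tr\lrb{V^+_{B'}}\idop_{A'}/d^2$ — to be the subtle bookkeeping step, since it is exactly where the normalization $\tr\lrb{\Delta V}=d^2$ re-enters and makes the elimination consistent. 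As remarked in the text, the whole argument runs parallel to the proof of Lemma~\ref{lemma:communication_zero-error_sample-complexity}, with the roles of encoding- and decoding-side variables interchanged.
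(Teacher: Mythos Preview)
Your proof is correct and follows essentially the same route as the paper: specialize Lemma~\ref{lemma:noise_sim_min_err} to $\varepsilon=0$, use the budget inequality together with $Z_{A'B'}\geq0$ to force $\tr\lrb{V^+_{B'}-V^-_{B'}}=d^2$ and $Z_{A'B'}=0$, then set $R_{A'B'}=Y^-_{A'B'}$ and eliminate $Y^+_{A'B'}$. The only cosmetic difference is that the paper asserts the \emph{equality} $J^\cM_{A'B'}=Y^+_{A'B'}-Y^-_{A'B'}$ directly (by analogy with Lemma~\ref{lemma:communication_zero-error_sample-complexity}) and then substitutes, whereas you keep the \emph{inequality} $J^\cM_{A'B'}+Y^-_{A'B'}\leq Y^+_{A'B'}$ and close the argument via a two-way inclusion of feasible sets; your version is in fact the more careful of the two, since the equality is not immediate from $Z_{A'B'}=0$ alone but follows (as your bookkeeping implicitly shows) only after combining with the partial-trace constraints on $Y^\pm_{A'B'}$.
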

\begin{proof}
As in the shadow communication setting, the zero-error simulation of the channel $\cM$ requires $J^\cN_{A'B'} = Y^+_{A'B'} - Y^-_{A'B'}$ and $\tr\lrb{V^+_{B'} - V^-_{B'}} = d^2$ (see the proof of Lemma~\ref{lemma:communication_zero-error_sample-complexity}). The equalities $\tr_{B'}\lrb{Y^\pm_{A'B'}} = \frac{\tr\lrb{V^\pm_{B'}}}{d^2} \idop_{A'}$ in condition~\eqref{appeq:noise-sim-sdp_y-and-v-euqality} can be equivalently written as
\begin{align}
    \tr_{B'}\lrb{Y^+_{A'B'} + Y^-_{A'B'}} = \frac{\tr\lrb{V^+_{B'} + V^-_{B'}}}{d^2} \idop_{A'} \quad\text{and}\quad \tr_{B'}\lrb{Y^+_{A'B'} - Y^-_{A'B'}} = \frac{\tr\lrb{V^+_{B'} - V^-_{B'}}}{d^2} \idop_{A'}.
\end{align}
Note that the latter equality $\tr_{B'}\lrb{Y^+_{A'B'} - Y^-_{A'B'}} = \tr\lrb{V^+_{B'} - V^-_{B'}} \idop_{A'}/d^2$ can be removed because it is already implied by $J^\cN_{A'B'} = Y^+_{A'B'} - Y^-_{A'B'}$ and $\tr\lrb{V^+_{B'} - V^-_{B'}} = d^2$ with the observation that $\tr_{B'}\lrb{J^\cN_{A'B'}} = \idop_{A'}$ for $\cN$ being a quantum channel. Hence, we arrive at the following SDP:
\begin{subequations}
\begin{align}
    \gamma_{0, \rm NS}^*(\cI_d, \cM_{A'\to B'}) = \inf &\; \frac{\tr\lrb{V^+_{B'} + V^-_{B'}}}{d^2} \\
    \text{\rm s.t.} &\; J^\cM_{A'B'} = Y^+_{A'B'} - Y^-_{A'B'},~ \tr\lrb{V^+_{B'} - V^-_{B'}} = d^2, \\
    &\; 0 \leq Y^\pm_{A'B'} \leq \idop_{A'} \ox V^\pm_{B'},~ \tr_{B'}\lrb{Y^+_{A'B'} + Y^-_{A'B'}} = \frac{\tr\lrb{V^+_{B'} + V^-_{B'}}}{d^2} \idop_{A'}.
\end{align}
\end{subequations}
Denoting $R_{A'B'} \coloneqq Y^-_{A'B'}$, writing $Y^+_{A'B'}$ as $J^\cM_{A'B'} + R_{A'B'}$, and exploiting $\tr_{B'}\lrb{J^\cM_{A'B'}} = \idop_{A'}$, one can obtain the claimed SDP.
\end{proof}

From this lemma, we can arrive at the following SDP for the one-shot zero-error $\gamma$-cost shadow simulation cost.

\begin{theorem}
    The one-shot zero-error $\gamma$-cost simulation cost of a quantum channel $\cM_{A'\to B'}$ assisted by no-signaling codes is given by the following SDP:
    \begin{subequations}\label{sdp:zero-error_simulation_cost}
    \begin{align}
        S^{(1)}_{\gamma, \rm NS}(\cM) = \inf &\; \log_2~  \left\lceil \sqrt{\tr\lrb{V_{B'}}} \right\rceil \\
        \text{\rm s.t.} &\; J^\cM_{A'B'} + R_{A'B'} \leq \idop_{A'} \ox \frac{\gamma + 1}{2} V_{B'},~ 0 \leq R_{A'B'} \leq \idop_{A'} \ox W_{B'}, \\
        &\; \tr_{B'}\lrb{R_{A'B'}} = \frac{\gamma - 1}{2} \idop_{A'},~ \tr\lrb{W_{B'}} = \frac{\gamma - 1}{2} \tr\lrb{V_{B'}}.
    \end{align}
    \end{subequations}
\end{theorem}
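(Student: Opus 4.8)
The plan is to follow the same route as the proof of Theorem~\ref{thm:zero_error_capacity}, but with the roles of source and target exchanged, starting from Lemma~\ref{lemma:zero-error_simulation_noisy_channel} in place of Lemma~\ref{lemma:communication_zero-error_sample-complexity}. By definition, $S^{(1)}_{\gamma,\rm NS}(\cM)$ is the infimum of $\log_2 d$ over positive integers $d$ such that $\gamma^*_{0,\rm NS}(\cI_d,\cM)\le\gamma$; by Lemma~\ref{lemma:zero-error_simulation_noisy_channel} this holds iff there are $V^\pm_{B'}\ge 0$ and $R_{A'B'}$ obeying the constraints there together with $\tr[V^+_{B'}+V^-_{B'}]\le\gamma\,\tr[V^+_{B'}-V^-_{B'}]$ and $\tr[V^+_{B'}-V^-_{B'}]=d^2$. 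First I would eliminate $d$ by replacing every $d^2$ with $\tr[V^+_{B'}-V^-_{B'}]$ and taking the objective to be $\log_2\lceil\sqrt{\tr[V^+_{B'}-V^-_{B'}]}\,\rceil$.

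Next I would justify that this relaxation is tight, i.e.\ that $\tr[V^+_{B'}-V^-_{B'}]$ need not be a perfect square. Given any feasible tuple with $s\coloneqq\tr[V^+_{B'}-V^-_{B'}]$, put $d\coloneqq\lceil\sqrt s\,\rceil$ and rescale $V^\pm_{B'}\mapsto(d^2/s)V^\pm_{B'}$, leaving $R_{A'B'}$ fixed; since $d^2/s\ge 1$, the operator inequalities only loosen, while $\tr_{B'}[R_{A'B'}]=\tr[V^-_{B'}]\idop_{A'}/\tr[V^+_{B'}-V^-_{B'}]$ remains valid because $R_{A'B'}$ is unchanged and the right-hand side and the cost ratio are invariant under a common rescaling of $V^\pm_{B'}$. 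Thus the rescaled tuple is feasible for the genuine dimension-$d$ SDP, and conversely a dimension-$d$ solution already has $s=d^2$; hence the objective $\log_2\lceil\sqrt{\tr[V^+_{B'}-V^-_{B'}]}\,\rceil$ computes $S^{(1)}_{\gamma,\rm NS}(\cM)$ exactly.

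I would then show the cost inequality can be taken to be an equality. If a feasible tuple has $\tr[V^+_{B'}+V^-_{B'}]=\bar\gamma\,\tr[V^+_{B'}-V^-_{B'}]$ with $\bar\gamma<\gamma$, pad it via $V^\pm_{B'}\mapsto V^\pm_{B'}+\alpha\idop_{B'}$ and $R_{A'B'}\mapsto R_{A'B'}+\frac{\alpha}{\tr[V^+_{B'}-V^-_{B'}]}\idop_{A'B'}$ with $\alpha=\frac{(\gamma-\bar\gamma)\,\tr[V^+_{B'}-V^-_{B'}]}{2\,d_{B'}}>0$, which leaves $\tr[V^+_{B'}-V^-_{B'}]$ (hence the objective) unchanged, makes the ratio equal to $\gamma$, and preserves $\tr_{B'}[R_{A'B'}]=\tr[V^-_{B'}]\idop_{A'}/\tr[V^+_{B'}-V^-_{B'}]$ by construction. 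The two remaining operator inequalities survive the padding exactly when $\tr[V^+_{B'}-V^-_{B'}]\ge 1$; this a priori bound follows by applying $\tr_{B'}$ to $J^\cM_{A'B'}+R_{A'B'}\le\idop_{A'}\otimes V^+_{B'}$ and using $\tr_{B'}[J^\cM_{A'B'}]=\idop_{A'}$ and $\tr_{B'}[R_{A'B'}]\ge 0$ to get $1+\tr[V^-_{B'}]/\tr[V^+_{B'}-V^-_{B'}]\le\tr[V^+_{B'}]$, which rearranges to $\tr[V^+_{B'}-V^-_{B'}]\ge 1$. I expect this padding step — the analogue of the one in the proof of Theorem~\ref{thm:zero_error_capacity} — to be the main point requiring care.

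Finally, with $\tr[V^+_{B'}+V^-_{B'}]=\gamma\,\tr[V^+_{B'}-V^-_{B'}]$ one gets $\tr[V^+_{B'}]=\frac{\gamma+1}{2}\tr[V^+_{B'}-V^-_{B'}]$ and $\tr[V^-_{B'}]=\frac{\gamma-1}{2}\tr[V^+_{B'}-V^-_{B'}]$. Setting $V_{B'}\coloneqq\frac{2}{\gamma+1}V^+_{B'}$ and $W_{B'}\coloneqq V^-_{B'}$, I would verify that the surviving constraints turn into $J^\cM_{A'B'}+R_{A'B'}\le\idop_{A'}\otimes\frac{\gamma+1}{2}V_{B'}$, $0\le R_{A'B'}\le\idop_{A'}\otimes W_{B'}$, $\tr_{B'}[R_{A'B'}]=\frac{\gamma-1}{2}\idop_{A'}$ and $\tr[W_{B'}]=\frac{\gamma-1}{2}\tr[V_{B'}]$, with objective $\log_2\lceil\sqrt{\tr[V_{B'}]}\,\rceil$ and with $V_{B'}\ge 0$ automatic from $J^\cM_{A'B'}+R_{A'B'}\ge 0$. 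This is exactly the SDP~\eqref{sdp:zero-error_simulation_cost}, completing the proof.
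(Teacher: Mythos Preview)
Your proposal is correct and follows essentially the same route as the paper: start from Lemma~\ref{lemma:zero-error_simulation_noisy_channel}, replace $d^2$ by $\tr[V^+_{B'}-V^-_{B'}]$, saturate the cost inequality via the identical padding $V^\pm_{B'}\mapsto V^\pm_{B'}+\alpha\idop_{B'}$ and $R_{A'B'}\mapsto R_{A'B'}+\tfrac{\gamma-\bar\gamma}{2d_{B'}}\idop_{A'B'}$, and then substitute $V_{B'}=\tfrac{2}{\gamma+1}V^+_{B'}$, $W_{B'}=V^-_{B'}$. You are in fact slightly more thorough than the paper, since you explicitly justify the integer-to-real relaxation of $d$ by the rescaling $V^\pm_{B'}\mapsto(d^2/s)V^\pm_{B'}$ (the paper simply asserts this step), and you derive the needed a priori bound $\tr[V^+_{B'}-V^-_{B'}]\ge 1$ from $J^\cM+R\le\idop\otimes V^+$ rather than from $R\le\idop\otimes V^-$ as the paper does.
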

\begin{proof}
    According to the SDP given in Lemma~\ref{lemma:zero-error_simulation_noisy_channel}, we can write the one-shot simulation cost $S^{(1)}_{\gamma, \rm NS}(\cM)$ as an optimization problem by substituting $d^2$ with $\tr\lrb{V^+_{B'} - V^-_{B'}}$:
    \begin{subequations}
    \begin{align}
        S^{(1)}_{\gamma, \rm NS}(\cM) = \inf &\; \log_2~  \left\lceil \sqrt{\tr\lrb{V^+_{B'} - V^-_{B'}}} \right\rceil \\
        \text{\rm s.t.} &\; \tr_{B'}\lrb{R_{A'B'}} = \frac{\tr\lrb{V^-_{B'}}}{\tr\lrb{V^+_{B'} - V^-_{B'}}} \idop_{A'},~ \frac{\tr\lrb{V^+_{B'} + V^-_{B'}}}{\tr\lrb{V^+_{B'} - V^-_{B'}}} \leq \gamma, \label{appeq:sim-cost-proof-sdp_one}\\
        &\; J^\cM_{A'B'} + R_{A'B'} \leq \idop_{A'} \ox V^+_{B'},~ 0 \leq R_{A'B'} \leq \idop_{A'} \ox V^-_{B'}.
    \end{align}
    \end{subequations}
    Note that the inequality in condition~\eqref{appeq:sim-cost-proof-sdp_one} can be restricted to equality while keeping the optimized value unchanged. This is true because if $R_{A'B'}$ and $\bar{V}^\pm_{B'}$ is a set of optimal solution with $\tr\lrb{V^+_{B'} + V^-_{B'}} / \tr\lrb{\bar{V}^+_{B'} - \bar{V}^-_{B'}} = \bar{\gamma} < \gamma$, then $R'_{A'B'} \coloneqq R_{A'B'} + (\gamma - \bar{\gamma}) \idop_{A'B'} / 2d_{B'}$ and $\bar{V}'^\pm_{B'} \coloneqq \bar{V}^\pm_{B'} + (\gamma - \bar{\gamma}) \tr\lrb{\bar{V}^+_{B'} - \bar{V}^-_{B'}} \idop_{B'} / 2d_{B'}$ also form a set of optimal solution such that $\tr\lrb{\bar{V}'^+_{B'} + \bar{V}'^-_{B'}} / \tr\lrb{\bar{V}'^+_{B'} - \bar{V}'^-_{B'}} = \gamma$.
    Note that for $\bar{R}'^\pm_{A'B'}$ and $\bar{V}'^\pm_{B'}$ to be valid solution, it must be true that $\tr\lrb{\bar{V}^+_{B'} - \bar{V}^-_{B'}} \geq 1$ so that $\idop_{A'} \ox \bar{V}'^-_{B'} \geq \bar{R}'_{A'B'}$. This is indeed the case because from the constraint $R_{A'B'} \leq \idop_{A'} \ox V^-_{B'}$ we have $\tr_{B'}\lrb{\bar{R}_{A'B'}} \leq \tr\lrb{\bar{V}^-_{B'}} \idop_{A'}$. For the equality $\tr_{B'}\lrb{\bar{R}_{A'B'}} = \tr\lrb{\bar{V}^-_{B'}} \idop_{A'} / \tr\lrb{\bar{V}^+_{B'} - \bar{V}^-_{B'}}$ to hold, $\tr\lrb{\bar{V}^+_{B'} - \bar{V}^-_{B'}}$ has to be larger than or equal to $1$.
    
    By changing the inequality in condition~\eqref{appeq:sim-cost-proof-sdp_one} to $\tr\lrb{V^+_{B'} + V^-_{B'}} / \tr\lrb{V^+_{B'} - V^-_{B'}} = \gamma$, it follows that $\tr[V^-_{B'}] = (\gamma-1)\tr[V^+_{B'}] / (\gamma + 1)$ and thus $\tr_{B'}\lrb{R_{A'B'}} = (\gamma - 1) \idop_{A'} / 2$. Changing the variables to $V_{B'} \coloneqq 2V^+_{B'} / (\gamma + 1)$ and $W_{B'} \coloneqq V^-_{B'}$ gives the claimed SDP.
\end{proof}

Similar to the one-shot zero-error shadow capacity, the SDP above implies that $S^{(1)}_{\gamma,\rm NS}$ is a generalization of the no-signaling-assisted one-shot zero-error quantum simulation cost $S^{(1)}_{\rm NS}$ as it reduces to the SDP for the latter~\cite{duan2015no, fang2019quantum} when $\gamma = 1$.
To showcase the difference between $S^{(1)}_{\gamma,\rm NS}$ and $S^{(1)}_{\rm NS}$, we consider two independent uses of amplitude damping channel $\cN_{\rm AD}^{\ox 2}$, dephasing channel $\cN_{\rm deph}^{\ox 2}$, and depolarizing channel $\cN_{\rm depo}^{\ox 2}$.

\begin{figure}[h]
    \centering
    \includegraphics[width=0.49\textwidth]{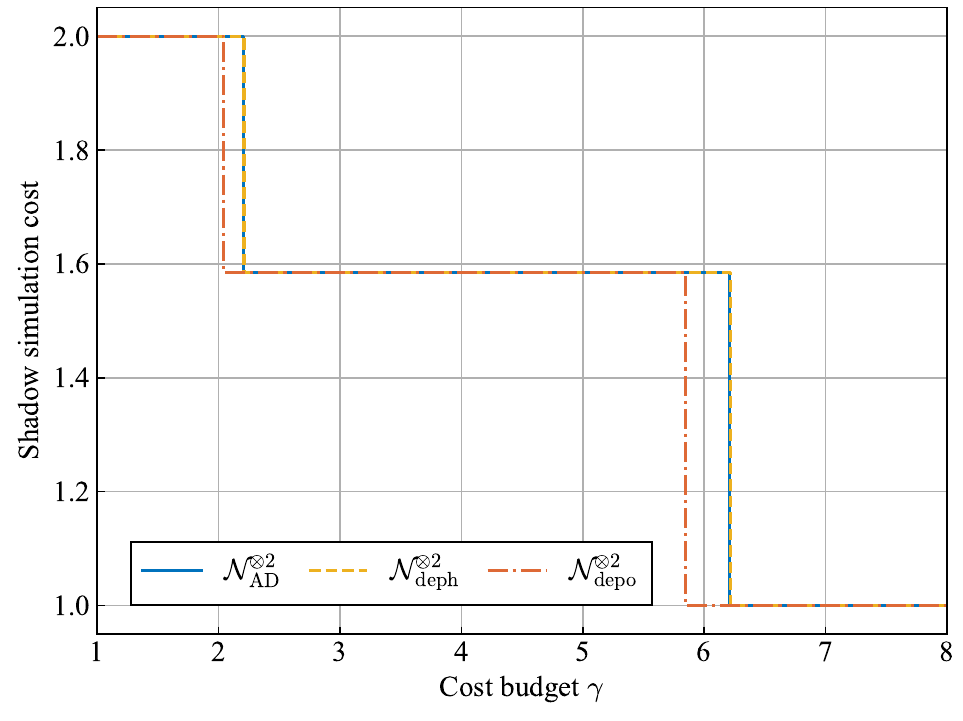}
    \caption{{\bf Comparison between the conventional no-signaling-assisted one-shot zero-error quantum simulation cost $S^{(1)}_{\rm NS}$ and no-signaling-assisted one-shot zero-error shadow simulation cost $S^{(1)}_{\gamma,\rm NS}$ under different cost budgets $\gamma$.} Compared with the quantum case, where the simulation cost is $2$ for all channels, lower simulation cost is achieved for each channel with increased cost budget.}
    \label{fig:sim_cost}
\end{figure}

As in the task of shadow communication, we present numerical results on these channels at a low noise level ($p=0.9$) in Fig.~\ref{fig:sim_cost}.
The zero-error simulation costs of these channels decrease from $2$ (quantum simulation cost) to $1$ with increased cost budget. Again, the stepwise changes in the zero-error simulation cost show that we can attain computational power beyond purely quantum protocols by trading in more computational resources.

In the main text, we presented the minimum sampling cost of simulating a high-dimensional identity channel with a low-dimensional one. Now, we prove this result.
\renewcommand\theproposition{\ref{thm:high_id_sim_cost}}
\setcounter{proposition}{\arabic{proposition}-1}
\begin{theorem}
    Given identity channels $\cI_d$ and $\cI_{d'}$ with $d' \geq d \geq 2$, the minimum sampling cost of an exact shadow simulation of $\cI_{d'}$ using $\cI_d$ and no-signaling resources is
    \begin{align}
        \gamma^*_{0, \rm NS} \lrp{\cI_d, \cI_{d'}} = 2 \lrp{\frac{d^{\prime}}{d}}^2 - 1.
    \end{align}
\end{theorem}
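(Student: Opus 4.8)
The plan is to specialize the semidefinite program for $\gamma^*_{0,\rm NS}(\cI_d,\cM)$ from Lemma~\ref{lemma:zero-error_simulation_noisy_channel} to the target channel $\cM=\cI_{d'}$, whose Choi operator is $J^{\cI_{d'}}_{A'B'}=\Gamma_{A'B'}=d'\,\Phi_{A'B'}$ on $A'B'\cong\CC^{d'}\ox\CC^{d'}$, and then to solve the resulting SDP exactly by a symmetry reduction. The first step is to observe that $\Gamma_{A'B'}$ is invariant under $U_{A'}\ox\overline U_{B'}$ for every unitary $U$, so that if $(R_{A'B'},V^\pm_{B'})$ is feasible, then so is $\big((U_{A'}\ox\overline U_{B'})R_{A'B'}(U_{A'}\ox\overline U_{B'})^\dagger,\ \overline U_{B'}V^\pm_{B'}\overline U_{B'}^\dagger\big)$, with the same objective value; here one uses that the partial trace over $B'$ is invariant under conjugation by $\idop_{A'}\ox\overline U_{B'}$, so that the asymmetric constraint $\tr_{B'}[R_{A'B'}]=\tr[V^-_{B'}]\,\idop_{A'}/d^2$ is preserved as well. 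Averaging an optimal feasible point over the Haar measure and invoking the convexity of the SDP, I may then assume the optimal variables are $U\ox\overline U$-invariant. This forces $V^\pm_{B'}=v_\pm\,\idop_{B'}$ and, by the twirl identity used in the proof of Lemma~\ref{lemma:shadow_comm_min_err} (see Ref.~\cite{rains2001semidefinite}), $R_{A'B'}=\alpha\,\Phi_{A'B'}+\beta\,(\idop_{A'B'}-\Phi_{A'B'})$ for scalars $v_\pm,\alpha,\beta\ge0$.

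Next I would rewrite the constraints as scalar relations: $\tr[V^+_{B'}-V^-_{B'}]=d^2$ gives $v_+-v_-=d^2/d'$; using $\tr_{B'}[\Phi_{A'B'}]=\idop_{A'}/d'$, the trace condition becomes $\alpha+\beta(d'^2-1)=d'^2v_-/d^2$; the operator inequality $\Gamma_{A'B'}+R_{A'B'}\le\idop_{A'}\ox V^+_{B'}$, after diagonalizing on the eigenspaces of the rank-one projector $\Phi_{A'B'}$, becomes $d'+\alpha\le v_+$ together with $\beta\le v_+$; and $0\le R_{A'B'}\le\idop_{A'}\ox V^-_{B'}$ becomes $\alpha,\beta\ge0$ and $\alpha,\beta\le v_-$. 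Since $V^\pm_{B'}=v_\pm\idop_{B'}$ with $d_{B'}=d'$, the objective is $\tr[V^+_{B'}+V^-_{B'}]/d^2=d'(v_++v_-)/d^2$, which by $v_+=v_-+d^2/d'$ equals $2d'v_-/d^2+1$; so the whole problem reduces to minimizing $v_-$.

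For the lower bound, $R_{A'B'}\ge0$ together with $\Gamma_{A'B'}+R_{A'B'}\le v_+\idop_{A'B'}$ forces $v_+\ge\lrV{\Gamma_{A'B'}}_\infty=d'$, hence $v_-=v_+-d^2/d'\ge d'-d^2/d'=(d'^2-d^2)/d'$ and therefore $\gamma^*_{0,\rm NS}(\cI_d,\cI_{d'})\ge2(d'/d)^2-1$. For the matching upper bound I would exhibit the explicit feasible point $v_+=d'$, $v_-=(d'^2-d^2)/d'$, $\alpha=0$, $\beta=d'(d'^2-d^2)/(d^2(d'^2-1))$ (and, when $d'=d$, simply $R_{A'B'}=0$, recovering the trivial unit-cost identity code); all constraints are routine to check, the only ones needing a line being $\beta\le v_+$ and $\beta\le v_-$, which both reduce to $d'^2\le d^2(d'^2-1)$, valid since $d\ge2$ and $d'\ge2$.

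The step I expect to be the main obstacle is the symmetry reduction: one must verify carefully that twirling preserves every constraint — in particular the one that mixes $R_{A'B'}$ on $A'B'$ with $V^-_{B'}$ on $B'$ — and that, as a consequence, the reduced four-parameter optimization is genuinely equivalent to, and not merely a relaxation of, the original SDP. Once that is established, the remaining optimization is an elementary two-sided estimate as sketched above.
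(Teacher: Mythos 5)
Your proof is correct and is a genuinely different route from the paper's. The paper proves the result by working directly with the SDP of Lemma~\ref{lemma:zero-error_simulation_noisy_channel} specialized to $\cM = \cI_{d'}$: it exhibits the primal feasible point $\bar V^\pm_{B'}$, $\bar R_{A'B'}$ for the upper bound, then derives the Lagrangian dual of that SDP and exhibits a feasible dual point $(\bar\lambda,\bar M,\bar N,\bar K)$ to get a matching lower bound via weak duality. You instead exploit the residual $U_{A'}\ox\overline U_{B'}$ symmetry of the target Choi operator $\Gamma_{A'B'}$ to apply a second round of twirling (the paper already used one in passing from Proposition~\ref{prop:general-sdp} to Lemma~\ref{lemma:noise_sim_min_err}), reducing the SDP to a four-parameter scalar program with $V^\pm_{B'}=v_\pm\idop_{B'}$ and $R_{A'B'}=\alpha\Phi_{A'B'}+\beta(\idop_{A'B'}-\Phi_{A'B'})$; the lower bound then drops out of the elementary chain $\Gamma_{A'B'}\le \Gamma_{A'B'}+R_{A'B'}\le v_+\idop_{A'B'}$, which forces $v_+\ge\lrV{\Gamma_{A'B'}}_\infty=d'$, and the matching feasible point is the same one the paper writes down (your $\alpha=0$, $\beta=d'(d'^2-d^2)/(d^2(d'^2-1))$ is exactly the paper's $\bar R_{A'B'}$ after rewriting $\Gamma_{A'B'}=d'\Phi_{A'B'}$). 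What your approach buys is that you never need to compute the dual SDP — the optimality certificate is replaced by an elementary operator-norm estimate on the rank-one operator $\Gamma_{A'B'}$ — at the price of carefully verifying that the twirl preserves each constraint, which you do correctly (the trace constraint on $R$ is already proportional to $\idop_{A'}$, so conjugation by $U_{A'}$ on the $A'$ marginal is harmless). One small inaccuracy in the sketch: the constraint $\beta\le v_+$ reduces to $1\le d^2$, not to $d'^2\le d^2(d'^2-1)$ as you claim for both; only $\beta\le v_-$ gives the latter, but since $d,d'\ge 2$ both inequalities hold, so the conclusion is unaffected.
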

\renewcommand{\theproposition}{\arabic{proposition}}
\begin{proof}
The Choi operators of the noiseless channel $\cI_d$ from system $A$ to system $B$ and the noiseless channel $\cI_{d'}$ from system $A'$ to system $B'$ are $J^{\cI_d}_{AB} = \Gamma_{AB}$ and $J^{\cI_{d'}}_{A'B'} = \Gamma_{A'B'}$, respectively.
It is straightforward to verify that
\begin{align}
    \bar{V}^+_{B'} = d' \idop_{B'},\quad \bar{V}^-_{B'} = \frac{d'^2 - d^2}{d'} \idop_{B'},\quad \bar{R}_{A'B'} = \frac{d'(d'^2-d^2)}{(d'^2-1)d^2} \idop_{A'B'} - \frac{d'^2-d^2}{(d'^2-1)d^2} \Gamma_{A'B'}
\end{align}
form a feasible solution to the SDP for $\gamma_{0, \rm NS}^*(\cI_d, \cI_d')$ as given in Lemma~\ref{lemma:zero-error_simulation_noisy_channel}, implying
\begin{align}\label{appeq:high_dim_id_samp_complex_upper}
    \gamma_{0, \rm NS}^*(\cI_d, \cI_d') \leq \frac{1}{d^2} \tr\lrb{\bar{V}^+_{B'} + \bar{V}^-_{B'}} = \frac{2d'^2}{d^2} - 1.
\end{align}

Using the Lagrange dual function, we can derive that the following problem is the dual problem associated with SDP~\eqref{appeq:zero-error_simulation_noisy_channel_sdp}:
\begin{subequations}
\begin{align}
    \gamma_{0, \rm NS}^*(\cI_d, \cM_{A'\to B'}) \geq \sup &\; \tr\lrb{M_{A'B'} J^\cM_{A'B'}} - \lambda \\
    \text{\rm s.t.} &\; M_{A'B'} \geq 0,~ N_{A'B'} \geq 0,~ M_{A'B'} + N_{A'B'} + K_{A'} \ox \idop_{B'} \geq 0, \\
    &\; d^2 \tr_{A'}\lrb{M_{A'B'}} = (1+\lambda) \idop_{B'},~ d^2 \tr_{A'}\lrb{N_{A'B'}} = (1-\lambda - \tr\lrb{K_{A'}}) \idop_{B'}.
\end{align}
\end{subequations}
It is straightforward to verify that
\begin{align}
    \bar{\lambda} = 1,\quad \bar{M}_{A'B'} = \frac{2}{d^2} \Gamma_{A'B'},\quad \bar{N}_{A'B'} = 0,\quad \bar{K}_{A'} = 0.
\end{align}
form a feasible solution to the dual problem for $\gamma_{0, \rm NS}^*(\cI_d, \cI_{d'})$, implying
\begin{align}\label{appeq:high_dim_id_samp_complex_lower}
    \gamma_{0, \rm NS}^*(\cI_d, \cI_{d'}) \geq \tr\lrb{\bar{M}_{A'B'} J^{\cI_{d'}}_{A'B'}} - \bar{\lambda} = \frac{2d'^2}{d^2} - 1.
\end{align}
Combining Eqs.~\eqref{appeq:high_dim_id_samp_complex_upper} and~\eqref{appeq:high_dim_id_samp_complex_lower}, we conclude that
\begin{align}
    \gamma_{0, \rm NS}^*(\cI_d, \cI_d') = \frac{2d'^2}{d^2} - 1,
\end{align}
which completes the proof.
\end{proof}

\end{document}